\documentclass[12pt,a4paper]{article}
\usepackage{fullpage}
\usepackage[utf8]{inputenc}

\RequirePackage[round]{natbib}
\RequirePackage[colorlinks,citecolor=blue,urlcolor=blue, linkcolor=blue]{hyperref}
\RequirePackage{graphicx}
\usepackage{caption}
\usepackage{subcaption}
\usepackage{rotating} 
\usepackage{tikz} 
\usepackage{tikz-3dplot}
\usetikzlibrary{arrows, decorations.pathreplacing, positioning, shapes, arrows.meta}
\usepackage{authblk}

\usepackage{amsmath, amsthm, amssymb, mathtools}
\usepackage{bm, bbm, mathrsfs, bigints}
\usepackage{blkarray}

\usepackage[capitalise]{cleveref}

\usepackage{setspace}

\usepackage{booktabs, multirow}
\usepackage{multicol}
\usepackage{enumerate}
\usepackage{algorithm,algpseudocode}
\usepackage{comment}
\usepackage{lineno}
\usepackage[enable]{easy-todo}

\usepackage{titlesec}
\titleformat{\section}{\large\bfseries}{\thesection}{1em}{}
\titleformat{\subsection}{\normalsize\bfseries}{\thesubsection}{1em}{}
\usepackage{chngcntr}

\usepackage{mathtools}

\definecolor{blue-violet}{rgb}{0.54, 0.17, 0.89}
\definecolor{antiquefuchsia}{rgb}{0.57, 0.36, 0.51}
\definecolor{amethyst}{rgb}{0.6, 0.4, 0.8}
\definecolor{ao}{rgb}{0.0, 0.5, 0.0}
\definecolor{blue(ncs)}{rgb}{0.0, 0.53, 0.74}
\definecolor{dgreen}{rgb}{0.12, 0.3, 0.17}
\definecolor{cadmiumgreen}{rgb}{0.0, 0.42, 0.24}
\definecolor{darkolivegreen}{rgb}{0.33, 0.42, 0.18}
\definecolor{dartmouthgreen}{rgb}{0.05, 0.5, 0.06}

\newcommand{\tild}{\raise.17ex\hbox{ $\scriptstyle\sim$ }}

\DeclareMathOperator{\var}{\mathrm{var}}

\DeclareMathOperator{\E}{\mathbb{E}}
\DeclareMathOperator{\pr}{\mathbb{P}}
\DeclareMathOperator{\G}{\mathsf{G}}

\DeclareMathOperator*{\argmin}{arg\,min}
\DeclareMathOperator*{\argmax}{arg\,max}

\newcommand{\R}{\mathbb{R}}

\newcommand{\expo}{\mathrm{Exp}}

\newcommand{\N}{\mathcal{N}}

\newcommand{\pois}{\mathrm{Poisson}}

\theoremstyle{plain}
\newtheorem{theorem}{Theorem}
\newtheorem{corollary}{Corollary}
\newtheorem{proposition}{Proposition}
\newtheorem{lemma}{Lemma}

\newtheorem{assumption}{Assumption}

\theoremstyle{definition}

\theoremstyle{remark}

\Crefname{theorem}{Theorem}{Theorems}
\Crefname{corollary}{Corollary}{Corollaries}
\Crefname{proposition}{Proposition}{Propositions}
\Crefname{lemma}{Lemma}{Lemmas}
\Crefname{assumption}{Assumption}{Assumptions}
\Crefname{condition}{Condition}{Conditions}
\Crefname{definition}{Definition}{Definitions}
\Crefname{example}{Example}{Examples}
\Crefname{remark}{Remark}{Remarks}

\makeatletter
\newenvironment{assumpenum}
  {\renewcommand{\p@enumi}{\theassumption}\begin{enumerate}}
  {\end{enumerate}}
\makeatother

\providecommand{\keywords}[1]
{{\small \textbf{Keywords---}#1}}

\newcommand{\appendixnumbering}{\renewcommand{\thetheorem}{S\arabic{theorem}}\renewcommand{\thealgorithm}{S\arabic{algorithm}}\renewcommand{\thelemma}{S\arabic{lemma}}\renewcommand{\theexample}{S\arabic{example}}\renewcommand{\theassumption}{S\arabic{assumption}}\renewcommand{\theproposition}{S\arabic{proposition}}\renewcommand{\thecorollary}{S\arabic{corollary}}\renewcommand{\thedefinition}{S\arabic{definition}}\renewcommand{\thefigure}{S\arabic{figure}}\renewcommand{\thetable}{S\arabic{table}}}

\makeatletter
\renewenvironment{proof}[1][\relax]{\par
  \pushQED{\qed}\normalfont \topsep6\p@\@plus6\p@\relax
  \trivlist
  \item[\hskip\labelsep\itshape
    \ifx#1\relax \proofname\else\proofname{} #1\fi\@addpunct{.}]\ignorespaces
}{\popQED\endtrivlist\@endpefalse
}
\makeatother

\let \hat \widehat
\let \tilde \widetilde

\newcommand\given{\,|\,}
\newcommand{\Rom}[1]{\text{\MakeUppercase{\romannumeral #1}}}

\renewcommand{\G}{\mathcal{G}}
\newcommand{\F}{\mathcal{F}}

\renewcommand{\var}{\mathrm{Var}}

\newcommand{\ind}{\mathbbm{1}}

\newcommand{\sgn}{\mathrm{sgn}}

\newcommand\restr[2]{\left.\kern-\nulldelimiterspace #1 \vphantom{\big|} \right|_{#2}}
\newcommand{\DB}{\mathrm{DB}}

\begin{document}
\title{The Debiased Score Test:\\Hunt-and-test for Semiparametric Hypotheses}
\date{July 2026}

\author[1]{Aditya Dhawan \thanks{\texttt{ad950@cam.ac.uk}}}
\author[2]{F.~Richard Guo \thanks{\texttt{ricguo@umich.edu}}}
\author[1]{Rajen D.~Shah \thanks{\texttt{r.shah@statslab.cam.ac.uk}}}

\affil[1]{Statistical Laboratory, University of Cambridge, Cambridge, UK}
\affil[2]{Department of Statistics, University of Michigan, Ann Arbor, USA}

\maketitle

\begin{abstract}
The parametric score test assesses a hypothesis through derivatives of the log-likelihood, whose expectation vanishes under the null. When the parameter of interest is a regression function identified as a risk minimiser, we extend this idea to test whether it belongs to a given linear function class. This yields goodness-of-fit tests for common semiparametric regression models, including generalised additive and partially linear models. Suitably formulated, the framework also detects effect modifiers in observational studies. We propose a hunt-and-test strategy that splits the data into two: on one part, after fitting the null model, machine learning is used to identify a promising direction in the empirical scores; on the other, we test whether the score vanishes in that direction. To account for error in estimating the null model, we apply a debiasing correction based on a weighted least squares projection. We establish Type I error control under relatively mild conditions and show the test has power whenever the hunted direction is correlated with the true score. Simulations and real-data examples demonstrate favourable performance, including identifying effect modifiers in an HIV clinical trial and assessing an additive model for insurance claims. The methodology is implemented in the R package \texttt{dScoreTest}.
\end{abstract}

\keywords{Goodness of fit; Sample splitting; Semiparametric regression; Debiased machine learning; Treatment effect heterogeneity.}

\section{Introduction} \label{sec:intro}
Models are the cornerstone of statistical inference.
They provide tractable representations of complex data-generating mechanisms, enabling interpretation, prediction, and uncertainty quantification.
Nevertheless, models are inevitably approximations.
Their usefulness depends on the discrepancy between the model and the true data-generating process being small relative to the precision of the conclusions drawn from them.
Otherwise, one risks obtaining highly precise inferences about quantities that lack meaningful interpretation under the true data-generating mechanism.
For example, fitting an overly simplistic model to a large dataset may produce very narrow confidence intervals around parameters that are scientifically uninformative because the model is seriously misspecified.

Assessing whether the data provide substantial evidence against a model is therefore a fundamental part of statistical practice.
The absence of such evidence is arguably a minimal requirement for a model to be useful.
Goodness-of-fit tests provide a formal and convenient framework for this assessment, helping to determine when a simple model must be replaced by a more flexible alternative in order to maintain an adequate representation of the data.

As discussed, larger datasets tend to demand more flexible models, but this brings some challenges for goodness-of-fit testing.
Considering Type I error control, it may not be straightforward to distinguish genuine incompatibility with the data-generating process from an apparent lack of fit due to the difficulties in estimating a flexible model.
On the other hand, one would like to have power against diverse and potentially highly complex alternatives.

In this paper, we develop a goodness-of-fit testing framework for broad classes of semiparametric regression models, including generalised additive models, partially linear models and additive quantile regression models, that can also accommodate applications such as testing treatment effect heterogeneity in observational data.
We work in a setup where we observe independent copies of a random element $W \in \mathcal{D}$, with $X = X(D) \in \mathcal{X}$ some function of the data.
In regression settings---our main focus in this paper---one would have $W = (X,Y)$, with $X$ denoting the covariates and $Y \in \R$ denoting the response.
Given a loss function $\ell:\mathbb{R}\times \mathcal{D}\to \mathbb{R}$ and function $g : \mathcal{X} \to \R$, we write
\[ R(g) := \mathbb{E}[\ell(g(X),D)] \]
for its associated risk.

Many functionals of interest can be identified as minimisers of a certain risk function.
For example, the conditional mean $\mathbb{E}(Y \given X)$ is the minimiser of the quadratic risk corresponding to the loss function $\ell(\eta,d) = (y-\eta)^2$.
We are interested in testing statistical models which constrain such a functional to lie in a structured class.
More precisely, we consider \textit{risk minimiser constraining models}---a class of models which posit that the risk minimiser over a function class $\mathcal{G}$ in fact lies in a smaller class $\mathcal{F}$, where $\mathcal{F} \subset \mathcal{G} \subset \mathbb{R}^{\mathcal{X}}$ are vector spaces of functions.
In other words, the risk minimiser constraining model posits that writing $f^* := \argmin_{f \in \mathcal{F}} R(f)$, we also have that
\begin{equation} \label{eq:null}
f^* \in \argmin_{f \in \mathcal{G}} R(f) .
\end{equation}

For instance, the \textit{additive model} specifies that the conditional mean of the response $Y$ given covariates $X \in \mathbb{R}^d$ is an additive function of the covariates, i.e.\ it lies in $\mathcal{F} := \{f:\mathbb{R}^d \to \mathbb{R} : f(x) = \sum_{j=1}^{d} f_j(x_j)\}$.
This can be therefore viewed as a risk minimiser constraining model by taking $\ell$ to be the squared loss.
For the quantile regression case, we may take $\ell$ to be the appropriate check loss, so if for instance we wish to specify that the conditional median function lies in the class $\mathcal{F}$, we may take $\ell(\eta, w) = |\eta - y|$.
In both of these cases, we may take $\mathcal{G}$ to be the set of all functions $g:\mathcal{X} \to \R$ (with appropriate integrability restrictions) when we are assessing goodness of fit.
However, our framework also allows for $\mathcal{G}$ to be restricted: for example it may contain functions $\{f:\mathbb{R}^d \to \mathbb{R} : f(x) = f_{1,2}(x_1,x_2) + \sum_{j=3}^{d} f_j(x_j)\}$, so interactions in the first two variables are permitted.
The resulting hypothesis test may be interpreted as testing for whether $\mathcal{G}$ offers a significantly better approximation to the ground truth than $\mathcal{F}$.
In our R package \texttt{dScoreTest}, we use this idea to implement a test for comparing nested semiparametric models. 
For the rest of this paper, we will focus on the case of unrestricted $\mathcal{G}$. 

The starting point of our hypothesis testing approach is the alternative characterisation of the null via score equations, which holds under sufficient regularity:
\begin{equation} \label{eq:score_eqns}
\mathbb{E}[\ell'(f^*(X),D) h(X)] = 0, \qquad \forall h \in \G.
\end{equation}
Here $\ell'$ denotes a derivative in the first argument of $\ell$.
In the regression cases where $D=(X, Y)$ and when $\ell$ is squared error loss, this amounts to saying that the population-level residuals $Y - f^*(X)$ are uncorrelated with any random variable of the form $h(X)$, where the function $h \in \mathcal{G}$.
Conversely, if the risk minimiser constraining model is misspecified, then under mild conditions there exists some test function $h \in \mathcal{G}$ for which the expectation above is positive:
thereby exposing the misspecification (see \cref{lem:risklocmin} in \cref{sec:lemmas}).

This characterisation of the null lends itself to what can be described as a `hunt-and-test' approach \citep{rt}: given i.i.d.\ data $(D_i)_{i=1}^{2n}$ we can first use observations $(D_i)_{i=1}^{n}$ to determine an appropriate function $h$ that we hope might expose the signal present in the score under an alternative.
With this function $h$, we can use the remainder $(D_i)_{i=n+1}^{2n}$ of the data to test whether the empirical counterpart of the left-hand side of \eqref{eq:score_eqns} vanishes.
The primary benefit of this sample splitting is that the chosen function $h$ can be treated as fixed during the second, testing stage.
This eliminates the need to account for its sampling variability, allowing the ``hunting'' process to be as elaborate as desired.
However, two key challenges remain: first, it is not obvious how to optimally construct $h$; second, testing the null condition for a given $h \in \mathcal{G}$ is complicated by the fact that $f^*$ is unknown and must be replaced by an estimate $\hat{f}$.

In this work, we introduce the \emph{debiased score test} (DST) to address both of these challenges.
First, we show that for any selected function $h$, we can test this null hypothesis by employing a debiasing step that reduces to a simple weighted least squares regression.
As we shall see, this step renders the test statistic asymptotically insensitive to the estimation error of $\hat{f}$.
Second, we demonstrate that the problem of finding an optimal test function $h$ can be reframed as a specific weighted regression problem.
In this way, the predictive power of modern machine learning methods can be directly leveraged to construct $h$, thereby maximising the statistical power of the resulting test.

The rest of the paper is organised as follows.
After reviewing some related literature, in \cref{sec:dst} we describe our methodology, setting out how hunting and testing may be carried out in general, as well as examining applications to assessing conditional mean and conditional quantile specifications, including testing for heterogeneous treatment effects.
\cref{sec:theory} gives results on uniform asymptotic size control and power at local alternatives.
In \cref{sec:numerical} we present the results of several numerical experiments demonstrating the favourable empirical properties of the procedure.
We conclude with a discussion in \cref{sec:discuss}.
An accompanying R-package \texttt{dScoreTest} available from \url{https://unbiased.co.in/dScoreTest/} provides an implementation of the methodology.
Supplementary material contains the proofs of all the theoretical results and additional details relating to the numerical studies.

\subsection{Related work}  \label{sec:related-work}
There is an extensive body of literature on goodness-of-fit testing for semiparametric regression models such as additive models, and an overview of more classical methods is provided by \citet{gonzalez2013updated}. For instance, 
\citet{dette2001testing} and \citet{gozalo2001testing} propose kernel based tests for additivity; \citet{derbort2002test} propose a test of additivity for lattice data based on the $L_2$ distance between a non-parametric estimate of the regression function and an estimate of the best additive approximation. \citet{sperlich2002nonparametric} and \citet{roca2005testing} propose procedures for testing for interactions in additive models. While foundational, these classical methods face several practical limitations. First, they
typically require using specific classical smoothers and rely on these attaining sufficiently fast rates for estimating $\mathbb{E}(Y \given X = x)$; this may be unrealistic for modern datasets of interest.
Second, they typically rely on using bandwidths that result in undersmoothing, which can be difficult to choose in practice.
Third, to obtain reliable critical values, many of these approaches rely on computationally intensive resampling techniques, such as the bootstrap.

Motivated by the desire for procedures that are relatively easy to use, scale to moderate-dimensional settings, and avoid delicate tuning or bootstrapping, some recent literature has shifted toward methods that can leverage the predictive power of modern machine learning.
For example, \citet{shah2018goodness} and \citet{jankova} develop so-called residual prediction tests  for assessing the goodness of fit of high-dimensional linear and generalised linear models. The methodology
is based on using a machine learning method to extract `left-over signal' from the residuals after a fit of the null model.
\citet{lundborg2024projected} build on this to propose the projected covariance measure to test for the hypothesis of conditional mean independence. \citet{quantile_significance} develop an analogous test for testing the significance of a subset of covariates based on conditional quantile of a response given covariates. Both of these tests can be shown to be special cases of the more general framework we develop here. In particular, they, along with the test of \citet{jankova} employ sample-splitting and a hunt-and-test scheme. This general idea has a long history  \citep[see e.g.][]{moran1973dividing,cox1975note}; recent work \citet{rt} studies this approach for testing the goodness of fit of parametric regression models and also introduces rank-transformed subsampling for aggregating over multiple splits, which we also use in some of our numerical experiments.

Starting from the score equations \eqref{eq:score_eqns}, an alternative hunting for a suitable function $h$ is to use the average of the squares of empirical scores based on a set of pre-specified functions $h_1,\ldots,h_K$. \citet{Sancetta} consider this approach in the context of models defined through reproducing kernel Hilbert spaces. \citet{escanciano} proposes a related strategy for models defined by parametric conditional moment restrictions but instead averages over an infinite set of functions given by a Gaussian process.
The earlier work of \citet{Bickel2006} introduces a very general way of constructing tests for semiparametric models by considering scores orthogonal to the tangent space of scores for the null sub-model, though focuses primarily on parametric models and independence testing.

\citet{williamson2021general} propose a procedure for assessing variable importance which compares the improvement in a specified metric of fit when the variable of interest is included.
While they focus on assessing significance, the approach can be extended naturally to testing the goodness of fit of a regression model as considered in the earlier work of \citet{Yatchew_1992}. \citet{hines2025variable} build on the work of \citet{williamson2021general} to propose a variable importance measure for the conditional treatment effect. We compare our proposal to these approaches in our numerical experiments in \cref{sec:numerical}.

\section{The Debiased Score Test} \label{sec:dst}

Recall that our data are i.i.d.\ copies $(D_i)_{i=1}^{2n}$ of a random element $D$ and $X = X(D) \in \mathcal{X}$ represents some function of the data.
We will use the notation that $L^2(X)$ is the collection of functions $f: \mathcal{X} \to \R$ where $\E f(X)^2 < \infty$.
In regression settings, $D=(X, Y)$ where $X$ are predictors and $Y \in \R$ is a response; unless specified otherwise, we will assume we are in this setting.
We are given a loss function $\ell:\R \times \mathcal{D} \to \R$ that is convex in its first argument.
In the below, we will write $\ell':\R \times \mathcal{D} \to \R$ for a sub-gradient of $\ell( \cdot, d):\R \to \R$ in the sense that
\[ \ell(\eta_2, d) \geq \ell(\eta_1, d) + \ell'(\eta_1,d)(\eta_2 - \eta_1) \]
for all $\eta_1, \eta_2 \in \R$ and $d \in \mathcal{D}$.
(The role that the convexity requirement plays in our proposal will become clear later.) Note that when $\ell$ is differentiable in its first argument, $\ell'$ is simply the derivative; when $\ell(\eta, w) = |\eta - y|$ for example, we can take $\ell'(\eta, w) = \sgn(\eta - y)$ (with $\sgn(0):=0$, say).

Our general strategy for testing the null hypothesis \eqref{eq:null} involves first randomly partitioning our observation indices: $I_1 \cup I_2 := \{1,\ldots,2n\}$ where $|I_1| = |I_2| = n$.
Using data indexed by $I_1$, we construct a test function $\hat{h} : \mathcal{X} \to \R$ designed to expose signal retained in the score under an alternative.
Using data indexed by $I_2$ and the test function $\hat{h}$, we then perform a test.
We describe the test in general terms in the next section, after which we introduce an approach for the hunting stage where $\hat{h}$ is formed.
In \cref{subsec:applications} we discuss
some key applications to testing conditional mean specification, conditional quantile specification and testing for heterogeneity of treatment effects given observational data.

\subsection{Testing} \label{sec:testing}
To introduce our testing approach, suppose, for the moment, that we have access to $f^* = \operatorname{argmin}_{f \in \mathcal{F}} R(f)$.
Setting $L_i = \ell'(f^*(X_i),D_i) \hat{h}(X_i)$, a natural test statistic to consider is the studentised sample average
\begin{equation} \label{eq:oracle_test}
T_n := \frac{\tfrac{1}{\sqrt{n}} \sum_{i=1}^n L_i}{\sqrt{\tfrac{1}{n}\sum_{i=1}^n L_i^2 - \left(\tfrac{1}{n}\sum_{i=1}^n L_i\right)^2}}.
\end{equation}
Under mild conditions, this converges to a standard normal distribution under the null by the central limit theorem.
Thus rejecting for values exceeding the upper $\alpha$ point $z_{1-\alpha}$ of a standard Gaussian yields a test of asymptotic size $\alpha$.

Of course, in practice we do not know $f^*$ and must instead replace this with an estimate $\hat f$.
Using $\hat L_i := \ell'(\hat f(X_i),D_i) h(X_i)$ in \eqref{eq:oracle_test} above introduces an extra term, since
\[ \hat L_i = L_i + R_i, \qquad R_i := \{\ell'(\hat f(X_i),D_i)-\ell'(f^*(X_i),D_i)\} h(X_i). \]
The central limit theorem continues to control the $L_i$ part, but the remainder induces a bias of order $\sqrt{n}\,\mathbb{E}[R_1]$, which typically does not vanish asymptotically when non-parametric methods are used to fit $\hat{f}$ \citep[see e.g.][]{chernozhukov2018dml}, contaminating the limiting distribution of the test.
To characterise the bias more explicitly, let
\begin{equation} \label{eq:phi_w_def}
\phi(\eta,x) := \mathbb{E}[\ell'(\eta,D)\given X=x], \qquad w(x) := \phi'(f^*(x), x) := \partial_\eta \phi(\eta,x)\rvert_{\eta=f^*(x)},
\end{equation}
 and note that a Taylor expansion around $f^*(x)$ gives
\begin{equation} \label{eq:bias1}
\sqrt{n}\,\mathbb{E}[R_1] \;\approx\; \sqrt{n}\,\mathbb{E}\!\left[w(X)\,(\hat{f}(X) - f^*(X))\, \hat{h}(X)\right].
\end{equation}

Note that as $\mathcal{F}$ is a vector space, $\hat{f} - f^* \in \mathcal{F}$.
A key observation then is that if $\hat{h}$ lies in $\mathcal{G}_{\mathrm{DB}}$ given by
\begin{equation} \label{eq:G_debiased}
\mathcal{G}_{\mathrm{DB}} := \left\{ g \in \mathcal{G} : \mathbb{E}\!\left[w(X)g(X)f(X)\right] = 0 \quad \text{for all } f \in \mathcal{F} \right\},
\end{equation}
then the first-order bias given by the right-hand side of \eqref{eq:bias1} would vanish.
As $\ell$ is convex in its first argument, $w(X) \geq 0$ almost surely, so $\mathbb{E}\!\left[w(X)g(X)f(X)\right]$ can be viewed as a weighted inner product between $f(X)$ and $g(X)$.
This observation leads to a convenient way of enforcing the orthogonality condition in \eqref{eq:G_debiased}
via a weighted least squares regression, as the following result indicates.
\begin{proposition} \label{prop:orthog}
    Let $k \leq w(X) \leq K$ for some $0 < k \leq K$.
    Suppose $\mathcal{F} \subset \mathcal{G}$ is a closed linear subspace of $L^2(X)$ and $h \in \mathcal{G}$.
    Then the minimiser $m_{h} \in \mathcal{F}$ of
\[ \E[w(X)\{h(X) - m(X)\}^2] \]
    over $m \in \mathcal{F}$ exists and is unique, and $h - m_{h} \in \mathcal{G}_{\mathrm{DB}}
$. \end{proposition}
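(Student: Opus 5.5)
The plan is to recast the problem as an orthogonal projection in a weighted Hilbert space and then invoke the Hilbert projection theorem. On $L^2(X)$ I will introduce the weighted inner product
\[ \langle f, g\rangle_w := \E[w(X) f(X) g(X)]. \]
Since $k \le w(X) \le K$ almost surely, it satisfies
\[ k\,\E f(X)^2 \le \langle f,f\rangle_w \le K\,\E f(X)^2. \]
This has three consequences. First, $\langle\cdot,\cdot\rangle_w$ is well defined, by Cauchy--Schwarz and the upper bound. Second, it is positive definite on $L^2(X)$, with functions identified when equal almost surely, by the lower bound. Third, the norm $\|\cdot\|_w$ is equivalent to the usual $L^2(X)$ norm. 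Equivalence of norms means the two norms share the same Cauchy sequences and the same convergent sequences. Hence $(L^2(X), \langle\cdot,\cdot\rangle_w)$ is complete, i.e.\ a Hilbert space, and $\mathcal{F}$, which is closed in $L^2(X)$, is also closed in the $w$-norm.

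The objective $\E[w(X)\{h(X)-m(X)\}^2]$ is exactly $\|h-m\|_w^2$. For the projection theorem I need $h \in L^2(X)$; I will take this as implicit in the statement, since the objective and $\mathcal{G}_{\mathrm{DB}}$ are only meaningful for square-integrable $h$. The theorem then gives a unique minimiser $m_h\in\mathcal{F}$ of $\|h-m\|_w$ over the closed subspace $\mathcal{F}$, unique up to almost sure equality. It also characterises $m_h$ by the orthogonality condition $\langle h-m_h, f\rangle_w = 0$ for all $f\in\mathcal{F}$.

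If I wanted to avoid quoting the theorem, the first-order condition can be derived directly. For any $f\in\mathcal{F}$ and $t\in\R$, minimality gives $\|h-m_h-tf\|_w^2 \ge \|h-m_h\|_w^2$. Expanding, this reads $-2t\langle h-m_h,f\rangle_w + t^2\|f\|_w^2 \ge 0$ for every $t$. This can only hold if the cross term vanishes.

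Finally I check membership in $\mathcal{G}_{\mathrm{DB}}$. We have $m_h\in\mathcal{F}\subset\mathcal{G}$ and $h\in\mathcal{G}$, and $\mathcal{G}$ is a vector space, so $h-m_h\in\mathcal{G}$. The orthogonality condition is precisely $\E[w(X)\{h(X)-m_h(X)\}f(X)]=0$ for all $f\in\mathcal{F}$, which is the defining property of $\mathcal{G}_{\mathrm{DB}}$ in \eqref{eq:G_debiased}.

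The only step needing care is transferring completeness of the space and closedness of $\mathcal{F}$ from the $L^2(X)$ norm to the $w$-norm, and that is immediate from the two-sided norm bounds. The argument is otherwise routine.
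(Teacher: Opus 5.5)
Your proof is correct and follows essentially the same route as the paper. The two-sided bound on $w$ gives norm equivalence (\cref{lem:eqv_norm}), so $\mathcal{F}$ is a closed subspace of the Hilbert space $L^2(X,w(X))$ and the projection theorem yields existence and uniqueness (\cref{lem:proj-existence}); the first-order variational argument then gives the orthogonality condition defining $\mathcal{G}_{\mathrm{DB}}$ (\cref{lem:wls-orth}), and your remarks that $h\in L^2(X)$ is implicitly required and that $h-m_h\in\mathcal{G}$ by linearity are small details the paper leaves unstated.
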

This motivates a debiasing strategy that first estimates $m_{\hat{h}}$ through a weighted least squares regression.
Specifically, using some potentially estimated weights $(\hat{w}(X_i))_{i \in I_2}$, we regress the response vector $(\hat{h}(X_i))_{i \in I_2}$ onto the predictors $(X_i)_{i \in I_2}$ using function class $\mathcal{F}$ to produce an estimate $\hat{m}_{\hat{h}}$ of $m_{\hat{h}}$.
For example, we may take
\begin{equation} \label{eq:m_hat}
\hat{m}_{\hat{h}} \in \argmin_{m \in \mathcal{F}} \left( \sum_{i \in I_2} \hat{w}(X_i)\{\hat{h}(X_i) - m(X_i)\}^2 + J(m) \right),
\end{equation}
where $J : \mathcal{F} \to [0, \infty)$ is an appropriate regulariser.
Note however that the particular estimation strategy here is not critical: our theory presented in \cref{sec:theory} only requires $\hat{m}_{\hat{h}}$ to be a good estimate of $m_{\hat{h}}$.
In many settings of interest, for example testing the fit of a generalised additive model, $\phi'$ is known. In this case, we can take $\hat{w}(x)$ as a plug-in estimate given an estimate of $f^*$.

\subsection{Hunting} \label{sec:hunting}
We now turn to the problem of constructing $\hat{h}$, the choice of which determines the power of the procedure.
Consider a local alternative, under which the oracle test statistic $T_n$ \eqref{eq:oracle_test} based on a hunting function $\hat{h} \in \mathcal{G}$ can be expected to be approximately Gaussian with unit variance and mean $\sqrt{n}\,\Lambda(\hat{h})$, where
\begin{equation}
\label{eq:snr} \Lambda(h) := \frac{ \E\!\left[\ell'(f^*(X),D)h(X)\right] }{ \sqrt{ \var\!\left(\ell'(f^*(X),D)h(X)\right) }}
\end{equation}
is the signal-to-noise ratio associated with the direction $h$.
Were $T_n$ to be exactly Gaussian, the power would be maximised for an $\hat{h}$ that maximises $\Lambda$.
The following result suggests a way of finding such a maximising $\hat{h}$.
\begin{proposition} \label{prop:hunt}
Any minimiser of
\begin{equation} \label{eq:recip_wls}
\Gamma(h) := \E \left[ \ell'(f^*(X), D)^2 \left(\frac{1}{\ell'(f^*(X), D)} - h(X) \right)^2\right]
\end{equation}
over $h \in \mathcal{G}$ also maximises $\Lambda(g)$ over $g \in \mathcal{G}$.
Moreover, if a maximiser in the latter problem exists, a minimiser in the former problem exists.
\end{proposition}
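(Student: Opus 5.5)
The key identity is that, writing $\varepsilon := \ell'(f^*(X),D)$ and interpreting $\varepsilon^2(1/\varepsilon - h(X))^2$ as $(1-\varepsilon h(X))^2$ (which is how it must be read on $\{\varepsilon = 0\}$), we have
\[
\Gamma(h) = 1 - 2a(h) + b(h), \qquad a(h) := \E[\varepsilon h(X)], \quad b(h) := \E[\varepsilon^2 h(X)^2].
\]
Also $\var(\varepsilon h(X)) = b(h) - a(h)^2$, so $\Lambda(h) = a(h)/\sqrt{b(h)-a(h)^2}$. I would restrict throughout to $h$ with $b(h)<\infty$, which is needed for either objective to be finite. The plan is to exploit the fact that $\mathcal{G}$ is a vector space, hence closed under scaling. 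First optimise $\Gamma$ along rays, then compare the resulting profile with $\Lambda$.

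\textbf{Step 1 (profile over scalings).} For fixed $g$ with $b(g)>0$ and $t\in\R$, $\Gamma(tg) = 1 - 2t a(g) + t^2 b(g)$ is minimised at $t = a(g)/b(g)$, with value $1 - a(g)^2/b(g)$. Hence
\[
\inf_{h\in\mathcal{G}}\Gamma(h) = 1 - S, \qquad S := \sup_{g\in\mathcal{G},\, b(g)>0} a(g)^2/b(g).
\]
Note that $a(g)^2 \le b(g)$ by Cauchy--Schwarz. The elementary identity I would use is that, whenever $a(g)>0$ and $\var(\varepsilon g(X))>0$,
\[
\frac{a(g)^2}{b(g)} = \frac{\Lambda(g)^2}{1+\Lambda(g)^2},
\]
which is strictly increasing in $\Lambda(g)\in(0,\infty)$. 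When $a(g)\le 0$ we have $\Lambda(g)\le 0$, and replacing $g$ by $-g$ gives $\Lambda(-g) = -\Lambda(g) \ge \Lambda(g)$. Maximising $\Lambda$ therefore amounts to maximising $a^2/b$ over directions with positive $a$.

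\textbf{Step 2 (minimiser $\Rightarrow$ maximiser).} Let $h^*$ minimise $\Gamma$. Minimality along its own ray forces the first-order condition at $t=1$, giving $a(h^*) = b(h^*)$ and $\Gamma(h^*) = 1 - a(h^*) = 1 - S$. In particular $a(h^*)\ge 0$ and $a(h^*)^2/b(h^*) = S$ when $b(h^*)>0$.

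If $a(h^*)>0$, then for any $g$ with $a(g)>0$ we get $a(g)^2/b(g)\le S = a(h^*)^2/b(h^*)$. Monotonicity of $x\mapsto x^2/(1+x^2)$ then gives $\Lambda(g)\le\Lambda(h^*)$. Directions with $a(g)\le 0$ satisfy $\Lambda(g)\le 0<\Lambda(h^*)$, so $h^*$ maximises $\Lambda$.

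If instead $a(h^*)=0$, then $S=0$, so $a(g)=0$ for every $g$ and $\Lambda\equiv 0$ wherever it is defined. In that case every $h$ (in particular $h^*$) is a maximiser.

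\textbf{Step 3 (maximiser $\Rightarrow$ minimiser exists).} Suppose $g^*$ maximises $\Lambda$. If $\Lambda(g^*)>0$, set $h := (a(g^*)/b(g^*))\,g^*$, so that $\Gamma(h) = 1 - a(g^*)^2/b(g^*)$. By the monotone relation and the sign argument in Step 1, $a(g^*)^2/b(g^*) = S$, and so $h$ attains $\inf\Gamma$. If $\Lambda(g^*)\le 0$, then the sign-flip argument shows $a\equiv 0$ on $\mathcal{G}$. Then $\Gamma(h) = 1 + b(h)\ge 1 = \Gamma(0)$, and $h=0$ is a minimiser.

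\textbf{Main obstacle.} The algebra is routine; the main obstacle is the degenerate cases where $\Lambda$ is undefined or infinite. These are:
\begin{itemize}
\item $\varepsilon=0$ with positive probability, handled by the $(1-\varepsilon h)^2$ reading above;
\item $b(g)=0$;
\item $\var(\varepsilon g(X))=0$ with $a(g)\neq 0$, i.e.\ $\varepsilon g(X)$ almost surely constant, where $\Lambda=+\infty$.
\end{itemize}
I would fix conventions so that $\Lambda(g)=+\infty$ in the last case and $\Lambda(g)=0$ when $\varepsilon g(X)=0$ a.s. The monotone correspondence between $a^2/b$ and $\Lambda$ then extends to $[0,\infty]$ via $a^2/b = 1 \iff \Lambda = \infty$, and I would re-check Steps 2--3 under these conventions.
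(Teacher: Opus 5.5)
Your proof is correct and is essentially the paper's argument. Both rest on the expansion $\Gamma(h)=1-2\E(\varepsilon h(X))+\E(\varepsilon^2 h(X)^2)$ and the scale invariance of $\Lambda$. The only difference is how the scale is handled: the paper fixes it by the constraint $\E(\varepsilon^2 h(X)^2)=m$ and notes that both objectives are monotone in $\E(\varepsilon h(X))$ on that set, whereas you profile out the scale along rays and use $a(g)^2/b(g)=\Lambda(g)^2/\{1+\Lambda(g)^2\}$.

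Your explicit treatment of the degenerate cases is more careful than the paper's, which passes over them: $a\equiv 0$ on $\mathcal{G}$, $\Lambda=+\infty$, and the reading on $\{\varepsilon=0\}$. That last reading differs from the paper's stated convention only by the constant $\pr(\varepsilon=0)$, so it does not affect minimisers.
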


Given an estimate $\tilde{f}$ of $f^*$, we can attempt to target an optimal hunting function via the following weighted least squares regression:
\begin{equation} \label{eq:h_hat_star}
\argmin_{h \in \mathcal{G}} \left\{ \sum_{i \in I_1} \ell'(\tilde{f}(X_i), D_i)^2 \left( \frac{1}{\ell'(\tilde{f}(X_i), D_i)} - h(X_i) \right)^2 + J(h) \right\};
\end{equation}
as before, $J$ here is some appropriate regulariser and if $\ell'(\tilde{f}(X_i), D_i)=0$ we simply take its reciprocal to be $1$ (or any finite value---it will not contribute in any case).
Thus for example, when $\ell$ is squared error loss, we regress the reciprocal of the residuals $Y - \tilde{f}(X)$ onto $X$ but with weights given by the square of the residuals.
While using the reciprocal might appear to introduce undesirable instability, this is compensated for by the weights, as can be seen by expanding the square.
Note that any flexible regression method capable of returning functions lying in $\mathcal{G}$ can be used for this task, and this method need not necessarily solve an optimisation of the form in \eqref{eq:h_hat_star}.
In the common case where $\mathcal{G} = L^2(X)$, this can for example be a neural network or the output of a gradient boosting procedure \citep{friedman2001greedy}; we use generalised random forests \citep{grf} in all of our numerical experiments presented in \cref{sec:numerical}.

While the above gives a simple strategy for finding an $\hat{h}$, it does not take account of the fact that prior to testing, the hunted function $\hat{h}$ is to be orthogonalised by replacing it with $\hat{h} - \hat{m}_{\hat{h}}$ (see \cref{sec:testing}); it is not clear that the orthogonalised version continues to estimate an optimal direction.
We describe a refinement of the strategy above that can account for this discrepancy in the next section.

\subsubsection{Accounting for debiasing} \label{sec:accounting_DB}
Ideally, we would like to maximise $\Lambda$, or equivalently minimise $\Gamma$, not over $h \in \mathcal{G}$ but rather over $h \in \mathcal{G}_{\mathrm{DB}}$.
An issue with operationalising this however is that we do not have direct access to this latter function class.
It turns out though that in the common case where $\mathcal{G} = L^2(X)$, there is a way around this.

Let us write $h^*_{\mathrm{DB}}$ and $h^*$ for minimisers (which we assume to exist in the following discussion) of $\Gamma$ over $\mathcal{G}_{\mathrm{DB}}$ and $\mathcal{G}=L^2(X)$ respectively.
Writing $v(X) := \mathbb{E}(\ell'(f^*(X),Y)^2 \given X)$, one can show that these minimisers are related via
\begin{equation}
\label{eq:weighted_distance_to_oracle} h^*_{\mathrm{DB}} \in \argmin_{g \in \mathcal{G}_{\mathrm{DB}}} \mathbb{E}_P\!\left[ v(X) \left\{ h^*(X) - g(X) \right\}^2 \right].
\end{equation}
Geometrically, we can think of $h^*_{\mathrm{DB}}$ as the projection of $h^*$ onto $\mathcal{G}_{\mathrm{DB}}$ under the weighted inner product $\langle f, g \rangle_v := \E[v(X) f(X) g(X)]$.
Writing $\mathrm{proj}_{\mathcal{H}}(h)$ for the projection of $h$ onto a subspace $\mathcal{H}$ under this inner product, we have
\begin{equation}
\label{eq:projection_correction} h^*_{\mathrm{DB}} = \mathrm{proj}_{\mathcal{G}_{\mathrm{DB}}}(h^*) = h^* - \operatorname{proj}_{\mathcal{G}_{\mathrm{DB}}^{\perp}} \left(h^*\right).
\end{equation}
This gives a useful interpretation: the projection of $h^*$ on the orthogonal complement $\mathcal{G}_{\mathrm{DB}}^{\perp}$ of $\mathcal{G}_{\mathrm{DB}}$ (with respect to the inner product $\langle \cdot, \cdot \rangle_v$) is the appropriate correction term to subtract from $h^*$ to yield the optimal direction among the debiased set of functions.
A key observation is that $\mathcal{G}_{\mathrm{DB}}^{\perp}$ can be written in terms of $\mathcal{F}$ (recall that $\mathcal{G}_{\mathrm{DB}}$ is itself defined as an orthogonal complement of $\mathcal{F}$, though with respect to a different weighted inner product \eqref{eq:G_debiased}):
\[ \mathcal{G}_{\mathrm{DB}}^\perp = \frac{w\mathcal{F}}{v} := \left\{ \frac{w(\cdot)f(\cdot)}{v(\cdot)} : f \in \mathcal{F} \right\}. \]
These ideas are illustrated in \cref{fig:optimal_hunting_projection} and form the basis of the following result.
\begin{figure}[t]
\centering
\tdplotsetmaincoords{68}{38}
\begin{tikzpicture}[tdplot_main_coords, scale=1.6, >=stealth]

\coordinate (O) at (0,0,0);
    \coordinate (Rinv) at (1.55,1.25,1.25);
    \coordinate (Horacle) at (1.55,1.25,0);
    \coordinate (Hopt) at (1.55,0,0);

\filldraw[
        fill=blue!35,
        fill opacity=0.16,
        draw=blue!70!black,
        line width=0.8pt
    ]
    (-2.4,-1.65,0) -- (2.8,-1.65,0) -- (2.8,1.85,0) -- (-2.4,1.85,0) -- cycle;

    \node[blue!70!black] at (-1.65,1.45,0) {$\mathcal{G}$};

\draw[
        red!70!black,
        line width=1.2pt
    ]
    (-2.4,0,0) -- (2.8,0,0);

    \node[anchor=west, red!70!black] at (2.92,0,0)
    {$\mathcal{G}_{\mathrm{DB}}$};

\draw[
        black,
        line width=1.05pt
    ]
    (0,-1.65,0) -- (0,1.85,0);

    \node[anchor=east] at (0,-1.85,0)
    {$\mathcal{G}_{\mathrm{DB}}^{\perp}=w\mathcal{F}/v$};

\draw[
        dashed,
        line width=0.9pt
    ]
    (Rinv) -- (Horacle);

\draw[
        dashed,
        line width=0.9pt
    ]
    (Horacle) -- (Hopt);

\draw[
        ->,
        green!55!black,
        line width=1.15pt
    ]
    (O) -- (Horacle);

    \draw[
        ->,
        green!55!black,
        line width=1.15pt
    ]
    (O) -- (Hopt);

\filldraw[black] (Rinv) circle (1.2pt);
    \filldraw[black] (Horacle) circle (1.2pt);
    \filldraw[black] (Hopt) circle (1.2pt);

\node[anchor=south west] at (Rinv)
    {$\ell'^{-1}$};

    \node[anchor=west] at (Horacle)
    {$h^*$};

    \node[anchor=north] at (Hopt)
    {$h^*_{\mathrm{DB}}$};

\draw[line width=0.7pt]
    (1.36,0,0) -- (1.36,0.18,0) -- (1.55,0.18,0);

\end{tikzpicture}
\caption{Geometric interpretation of our optimal hunting strategy that accounts for debiasing. The point
$\ell'^{-1} := \ell'^{-1}(f^*(X), D)$ is first projected onto $\mathcal{G}$ to obtain
$h^*$. The component in
$\mathcal{G}_{\mathrm{DB}}^{\perp}=w\mathcal{F}/v$ is then removed,
giving $h^*_{\mathrm{DB}}\in\mathcal{G}_{\mathrm{DB}}$.}
\label{fig:optimal_hunting_projection}
\end{figure}

\begin{proposition} \label{prop:opt_hunt}
    Under regularity conditions stated in \cref{sec:regularity_hunt}, we have that both $h^*$ and $h^*_{\mathrm{DB}}$ exist and are unique.
    Moreover
\begin{equation} \label{eq:h_cor}
h^* - h^*_{\mathrm{DB}} = \operatorname{proj}_{\mathcal{G}_{\mathrm{DB}}^{\perp}} \left(h^*\right) = \frac{w}{v} \argmin_{f \in \mathcal{F}} \mathbb{E}\!\left[ \frac{w(X)^2}{v(X)} \left( \frac{v(X)h^*(X)}{w(X)} - f(X) \right)^2 \right].
\end{equation}
\end{proposition}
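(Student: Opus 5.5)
The plan is to work throughout in the Hilbert space $L^2_v$ of functions of $X$ with inner product $\langle f,g\rangle_v = \E[v(X)f(X)g(X)]$. I take the regularity conditions of \cref{sec:regularity_hunt} to provide $0<c\le v\le C$ and $0<k\le w\le K$, and to ensure that $\mathcal{F}$ is closed in $L^2(X)$. Under these bounds $L^2_v$ and $L^2(X)$ have equivalent norms.

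\textbf{Step 1: rewrite $\Gamma$.} Condition on $X$ and expand the square. Whenever $\ell'\neq 0$ we have $\ell'^2\cdot(1/\ell'-h)^2 = 1 - 2\ell' h + \ell'^2h^2$, and when $\ell'=0$ the contribution is $0$. Hence
\[ \Gamma(h) = \pr(\ell'\neq 0) - 2\E[\phi(f^*(X),X)h(X)] + \E[v(X)h(X)^2]. \]
Completing the square gives
\[ \Gamma(h) = \text{const} + \|h - \phi^*/v\|_v^2, \qquad \phi^*:=\phi(f^*(\cdot),\cdot). \]
The regularity conditions should also give $\phi^*/v\in L^2_v$. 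It follows that minimising $\Gamma$ over a subspace is the same as taking the $\langle\cdot,\cdot\rangle_v$-projection of $\phi^*/v$ onto that subspace, provided the subspace is closed.

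\textbf{Step 2: existence and uniqueness of $h^*$ and $h^*_{\DB}$.} Since $\mathcal{G}=L^2(X)=L^2_v$, we get $h^*=\phi^*/v$, which exists and is unique.

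For $h^*_{\DB}$, note that $\mathcal{G}_{\DB}=\{g:\E[wgf]=0\ \forall f\in\mathcal{F}\}$. Writing $\E[wgf]=\langle g, wf/v\rangle_v$ gives $\mathcal{G}_{\DB}=(w\mathcal{F}/v)^{\perp_v}$. This is closed, being an orthogonal complement. So $h^*_{\DB}$ is the unique projection of $\phi^*/v$ onto it.

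Because $\mathcal{G}_{\DB}\subset\mathcal{G}$, the Pythagorean identity shows that this projection equals the projection of $h^*$, which is \eqref{eq:weighted_distance_to_oracle}.

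\textbf{Step 3: identify $\mathcal{G}_{\DB}^\perp$.} We have $\mathcal{G}_{\DB}^{\perp}=\overline{w\mathcal{F}/v}$. The main obstacle is to show that $w\mathcal{F}/v$ is already closed, since only then does the equality $\mathcal{G}_{\DB}^\perp = w\mathcal{F}/v$ claimed before the proposition hold.

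To do this, observe that the map $f\mapsto wf/v$ is bounded with bounded inverse on $L^2(X)$ by the bounds on $w$ and $v$. Its image of the closed subspace $\mathcal{F}$ is therefore closed. Given this, \eqref{eq:projection_correction} follows from the orthogonal decomposition $L^2_v=\mathcal{G}_{\DB}\oplus\mathcal{G}_{\DB}^\perp$.

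\textbf{Step 4: compute the projection onto $w\mathcal{F}/v$.} We need $\operatorname{proj}_{w\mathcal{F}/v}(h^*) = \frac{w}{v}f_0$, where $f_0$ minimises $\|h^* - wf/v\|_v^2$ over $f\in\mathcal{F}$. Expanding this norm,
\[ \E\Bigl[v\bigl(h^*-\tfrac{w}{v}f\bigr)^2\Bigr]=\E\Bigl[\tfrac{w^2}{v}\bigl(\tfrac{vh^*}{w}-f\bigr)^2\Bigr], \]
which is exactly the objective in \eqref{eq:h_cor}. Its minimiser over $\mathcal{F}$ exists and is unique, because it is a projection onto the closed subspace $\mathcal{F}$ under the equivalent inner product weighted by $w^2/v$. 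This completes \eqref{eq:h_cor}.
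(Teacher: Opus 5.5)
Your proposal is correct and follows essentially the paper's argument. You identify $\mathcal{G}_{\DB}$ as the $\langle\cdot,\cdot\rangle_v$-orthogonal complement of $w\mathcal{F}/v$, show this set is closed because $f\mapsto wf/v$ is bounded with bounded inverse, and rewrite the projection onto it as the stated weighted least squares problem over $\mathcal{F}$. The only difference is that you complete the square to get the explicit $h^*=\phi^*/v$ and work in $L^2_v$ from the start, whereas the paper projects $R^{-1}$, with $R:=\ell'(f^*(X),D)$, in the $R^2$-weighted norm and then passes to the $v$-norm via $\E[R^2\{h^*(X)-g(X)\}^2]=\E[v(X)\{h^*(X)-g(X)\}^2]$. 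Your hedged claim $\phi^*/v\in L^2_v$ follows from Jensen: $(\phi^*)^2\le v$, so $\E[(\phi^*)^2/v]\le 1$.
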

This motivates the following hunting procedure.

\begin{enumerate}
\item Form $\hat{h}^*$ via the weighted regression in \eqref{eq:h_hat_star} using any user-chosen regression method.
\item Let $\hat{v}$ be an estimate of the conditional variance $v$ formed, for example, through regressing $\ell'(\tilde{f}(X), D)^2$ onto $X$ (again, using any suitable method; in our experiments, we use random forests).
\item Given weight function estimate $\tilde{w}$ (recall \eqref{eq:m_hat} and the following discussion), estimate the correction to $h^*$ (i.e.\ the left-hand side of \eqref{eq:h_cor}) through a weighted least squares regression of response $\hat{v}(X) \hat{h}^*(X) / \tilde{w}(X)$ onto $X$ with weights $\tilde{w}(X)^2/\hat{v}(X)$ over the original null function class $\mathcal{F}$. For example, we may take
\[ \hat{f}_{\mathrm{proj}} := \argmin_{f \in \mathcal{F}} \left[ \sum_{i \in I_1} \left\{ \frac{\tilde{w}(X_i)^2}{\hat{v}(X_i)} \left(\frac{\hat{v}(X_i) \hat{h}^*(X_i)}{\tilde{w}(X_i)} - f(X_i) \right)^2\right\} + J(f) \right]. \]

\item Finally set $\hat{h} = \hat{h}^* - \tilde{w} \hat{f}_{\mathrm{proj}} / \hat{v}$.
\end{enumerate}

\begin{algorithm}[!htb] \raggedright  \caption{Debiased Score Test (In Sample)} \label{alg:dst_in}
	\textbf{Input}: Data $(D_i)_{1 \leq i \leq 2n}$. Loss function $\ell : \mathbb{R} \times \mathcal{D} \to \mathbb{R}$. Linear function classes $\F \subset \G$. 
	\vskip .3em
	\begin{algorithmic}
		\State Split sample using the index sets $\{1,\dots, 2n\} = I_1 \cup I_2$ with $|I_1| = n$, $|I_2| = n$.
        
		\State \textit{Hunt}: On sample $I_1$, determine $\hat{h}$ as follows:
        \State \hskip0.5em (a) Fit the null model to obtain estimate $\tilde{f}$ of $f^*$.
        \State \hskip0.5em (b) Regress the $\ell'(\tilde{f}(X_i), D_i)^{-1}$ onto the $X_i$ with observation weights $\ell'(\tilde{f}(X_i), D_i)^2$ using a flexible regression method of choice. 
        \State \hskip0.5em (c) (Optional) In the case where $\mathcal{G} = L^2(X)$, compute the refinement in \cref{sec:accounting_DB}.

\State On sample $I_2$, fit $\hat{f}\in \F$ to approximate $f^{*}$.
            
		\State \textit{Debias} : On sample $I_2$, fit $\hat{m}_{\hat{h}}$ to approximate $m_{\hat{h}}$ by regressing $\hat{h}$ on $\mathcal{F}$ with weights $w_i = \hat{w}(X_i)$ for $i \in I_2$.
        
		\State \textit{Test}: For $i \in I_2$ compute $L_i = \ell'(\hat{f}(X_i),D_i)\{\hat{h}(X_i)-\hat{m}_{\hat{h}}(X_i)\}$ and compute the test statistic
		\begin{equation*}
	T_n := \frac{\frac{1}{\sqrt{n}}\sum_{i=1}^n  L_i}{\sqrt{\frac{1}{n}\sum_{i=1}^n L_i^2 - \left(\frac{1}{n} \sum_{i=1}^n L_i\right)^2}} 
		\end{equation*}
	
	\end{algorithmic}
\end{algorithm}

The full DST algorithm is summarised in \cref{alg:dst_in} (we describe this as `in sample' to contrast with \cref{alg:dst_out} that introduces an additional sample split for theoretical tractability).
Note that although the construction above is designed to produce a hunting function that already lies in $\mathcal{G}_{\DB}$, we have found empirically that the debiasing step remains helpful for maintaining Type I error control.
In practice, we suggest performing both the refinement above and the debiasing step; see also \cref{fig:qgam-power} for numerical evidence.

\subsection{Multiple sample splitting} \label{sec:multi_split}
The data splitting in the DST, while necessary to produce an approximately standard Gaussian final test statistic under the null, as with other similar randomised tests can hamper reproducibility and potentially reduce power compared to an approach that attempts to average out this randomness. For these reasons, in practice it is advisable to compute the DST multiple times with different random splits and average the resulting test statistics. Though composed of marginally Gaussian test statistics, this aggregate test statistic is typically not Gaussian under the null, as the dependence across the different splits can be highly complex. To construct a calibrated $p$-value from the final test statistic, we recommend using rank-transformed subsampling \citep{rt}, which uses subsampling to effectively estimate the dependence structure. We illustrate the use of this with the DST in our numerical experiments in \cref{sec:numerical}.

\section{Applications} \label{subsec:applications}
We now present some examples where the DST may be used to test model specification, so in particular we take $\mathcal{G} = L^2(X)$.
To do this, we will cast each model as a risk-constraining model by specifying the loss function $\ell$ and the function class $\mathcal{F}$. We focus on regression settings, so $D = (X, Y) \in \mathcal{X} \times \R$.
We will identify the score function $\ell'(\eta,D)$, the smoothed score function $\phi(\eta,X)$ and the weighting function $w(X)$, which may then be used in \cref{alg:dst_in} to yield the testing procedure.

\subsection{Testing conditional mean specification} \label{sec:cond-mean}
Many parametric and semiparametric statistical models impose that the conditional mean of the response given covariates takes the form
\[
\E(Y \given X) = \mu(f^*(X)),
\]
where $\mu$ is a known differentiable strictly increasing inverse link function and $f^* \in \mathcal{F}$, a vector space of functions.
By choosing $\mathcal{F}$ appropriately, this framework accommodates testing a wide variety of commonly used models, including classical \mbox{(quasi-)Generalised Linear Models}\footnote{We explain in \cref{sec:GLM} how the DST for this simple case relates to the classical score test for variable significance in generalised linear models.} \citep{glm,wedderburn1974quasi},
Generalised Additive Models \citep{gam}, Partially Linear Models \citep{plm}, and Varying Coefficient Models \citep{vcm}. For example, the first of these are recovered by
taking $\mathcal{F}$ to be functions of the form $x \mapsto x^\top\beta + c$ for some $\beta \in \mathbb{R}^d, c \in \mathbb{R}$.

To see how these may be viewed as risk-constraining models, define the loss function
\[ \ell(\eta, d) := -\eta y + K(\eta), \]
where $K$ is an antiderivative of $\mu$.
We show in \cref{lem:loss-gam} (see \cref{sec:lemmas}) that $\eta \mapsto \ell(\eta,d)$ is convex and that the function $f^*$ minimises the corresponding risk over $L^2(X)$.
We have $\ell'(\eta, D) = \mu(\eta) - Y$, $\phi(\eta, X) = \mu(\eta) - \mathbb{E}(Y \given X)$, and $w(X) = \mu'(f^*(X))$.
When $\mu$ is the identity, $X=(T, Z)$ and $\mathcal{F}$ is the collection of functions of $z$, the corresponding null reduces to $T$ and $Y$ being mean independent conditional on $Z$ and the DST in this case recovers the projected covariance measure of \citet{lundborg2024projected}.

\subsubsection{Detecting treatment effect heterogeneity} \label{sec:hte}
One case of testing conditional mean specification that is of interest concerns detecting treatment effect heterogeneity.
Consider a setting with real-valued treatment $T$, real-valued outcome $Y$ and baseline covariates $Z \in \mathbb{R}^{d}$.
Let $Y(t)$ be the potential outcome under treatment $t$.
Let $\tau(z,t) := \E[Y(t) - Y(0) \given Z=z]$ be the conditional average treatment effect (CATE) function, which is identified as $\tau(z,t) = \E[Y \given T=t, Z=z] - \E[Y \given T=0, Z=z]$ under standard assumptions including positivity and no unmeasured confounding.
Given a subset of covariates $S \subset \{1,\dots,d\}$, we are interested in assessing the treatment effect heterogeneity in the strata defined by $Z_S$.
Nonparametrically, we consider the null hypothesis of no heterogeneity in $Z_{S}$ while holding all the other covariates $Z_{S^{c}}$ fixed, which posits that the CATE $\tau(z,t)$ only depends on $z$ through $z_{S^{c}}$.
Equivalently, the null hypothesis posits the conditional mean specification
\[ \mathbb{E}(Y \given T = t, Z = z) = \mu_0(z) + \mu_1(t,z_{S^c}) \]
for functions $\mu_0$ and $\mu_1$.

In what follows, we focus on the case when $T$ is binary, where we can always write $\mathbb{E}(Y \given T=t, Z = z) =: \mu(t,z) := \mu_0(z) + t \cdot \tau(z)$ with $\tau(z) := \tau(z,1)$.
In this case, the null hypothesis amounts to positing the following function class for $\mu(t,z)$:
\[ \mathcal{F} := \{\mu(t,z) = \mu_0(z) + t \cdot \tau(z): \mu_0 \in \mathcal{F}_0, \tau \in \mathcal{F}_S\}, \]
where $\mathcal{F}_0$ is unrestricted while $\mathcal{F}_{S}$ is given by
\[ \mathcal{F}_{S} := \{\tau: \text{$\tau(z)$ only depends on $z$ through $z_{S^{c}}$} \}. \]

We first devise an algorithm for fitting the null model using a machine learning algorithm of choice.
Defining $e(Z) := \E[T \given Z]$ and $\mu(Z) := \E[Y \given Z]$,
we have the following decomposition \citep{robinson1988root}:
\begin{equation} \label{eq:R-decomp}
Y - \mu_0(Z) - T \cdot \tau(Z) = \underbrace{\left\{\mu(Z) - \mu_0(Z) - e(Z) \tau(Z) \right\}}_{\mathrm{(I)}} + \underbrace{\left\{ Y - \mu(Z) - (T - e(Z))\tau(Z) \right\}}_{\mathrm{(II)}}.
\end{equation}
Here (I) is a function of $Z$, while (II) has zero mean conditioned on $Z$.
It follows that
\[ \E \{Y - \mu_0(Z) - T \cdot \tau(Z)\}^{2} = \E \mathrm{(I)}^{2} + \E \mathrm{(II)}^{2}. \]
To minimise the risk on the left-hand side over $\mathcal{F}$, note that $\mu_0$ can be chosen to make (I) vanish pointwise, upon which $\tau$ becomes the minimiser of $\E \mathrm{(II)}^{2}$ within $\mathcal{F}_S$.
Having found this minimiser $\tau^{\star}$, we then know $\mu_0^{\star}$ is the minimiser of $\E \mathrm{(I)}^{2}$, or equivalently, $\E \{Y - T \cdot \tau^{\star}(Z) - \mu_0(Z) \}^{2}$.
This motivates \cref{alg:fit_null_dml} for fitting the null model, which estimates $\tau$ using the loss $\E \mathrm{(II)}^{2}$. 
As commonly implemented in the literature \citep{nie2021quasi,foster2023orthogonal}, we use cross-fitting to form a Neyman-orthogonal R-loss.

We now turn to the debiasing step. Suppose the hunted function is $\hat{h}(t,z) = \hat{h}_0(z) + t \cdot \hat{h}_{\Delta}(z)$ with $\hat{h}_{\Delta} \notin \mathcal{F}_{S}$ in general; our goal is to estimate its projection $m_{\hat{h}}(t,z) =: m_0(z) + t \cdot m_{\Delta}(z_{S^{c}}) $ onto the null model.
Recall that by \cref{prop:orthog} and \eqref{eq:weighted_distance_to_oracle} we have
\[ (m_0, m_{\Delta}) = \argmin_{f_0 \in \mathcal{F}_0, f_{\Delta} \in \mathcal{F}_{S}} \E \left\{\hat{h}_0(Z) - f_0(Z) + T \cdot (\hat{h}_{\Delta}(Z) - f_{\Delta}( Z_{S^{c}}) ) \right\}^{2}, \]
where $\hat{h}$ is treated as a fixed function.
By a decomposition similar to \eqref{eq:R-decomp}, we see that
\[ m_{\Delta} = \argmin_{f_{\Delta} = f_{\Delta}(Z_{S^{c}})} \E\{(T-e(Z))^{2}\,(\hat{h}_{\Delta}(Z) - f_{\Delta}(Z_{S^{c}}))^{2} \}, \, m_0(z) = \hat{h}_0(z) + e(z) (\hat{h}_{\Delta}(z) - m_{\Delta}(z_{S^{c}})). \]
Therefore, we may estimate $m_{\Delta}$ through a weighted least squares regression onto $Z_{S^c}$ using a machine learning method of choice, to give $\hat{m}_{\Delta}$, and then obtain the debiased hunted function 
\[ (\hat{h} - \hat{m}_{\hat{h}})(t,z) := (t - \hat{e}(z))\,(\hat{h}_{\Delta}(z) - \hat{m}_{\Delta}(z_{S^{c}})), \]
where $\hat{e}(z)$ is an estimate of the propensity $e(z)$.

\begin{algorithm}[htb]
\caption{Fitting the null CATE model}
\label{alg:fit_null_dml}
\begin{algorithmic}[1]
\State Partition $\{1,\dots,n\}$ into $K$ (5 by default) folds $J_1,\dots,J_K$. Let $J_{-k} := \cup_{j \neq k} J_j$.
\For{$k = 1,\ldots,K$}
    \State Use data in $J_{-k}$ to fit $\hat{e}_{-k}(z)$ and $\hat{\mu}_{-k}(z)$.
    \State Compute $\tilde{Y}_i := Y_i - \hat{\mu}_{-k}(Z_i)$ and $\tilde{T}_i:=T_i - \hat{e}_{-k}(Z_i)$ for each $i \in J_{k}$. 
\EndFor
\State Fit $\hat{\tau} = \argmin_{\tau} \sum_{i=1}^{n} (\tilde{Y}_i - \tilde{T}_i \cdot \tau (Z_{i,S^{c}}))^{2} =  \argmin_{\tau} \sum_{i=1}^{n} \tilde{T}_i^{2}\, (\tilde{Y}_i / \tilde{T}_i - \tau (Z_{i,S^{c}}))^{2}$ using a flexible regression method.
\State Fit $\hat{\mu}_0 = \argmin_{\mu_0} \sum_{i=1}^{n} (Y_i - T_i \cdot \hat{\tau}(Z_{i,S^{c}}) - \mu_0(Z_i))^{2}$ using flexible regression.
\State \Return $(\hat{\mu}_0, \hat{\tau})$. 
\end{algorithmic}
\end{algorithm}

\subsection{Testing conditional quantile specification} \label{sec:quant}
Fix $\tau \in (0,1)$ and consider
a model which posits that the $\tau$-conditional quantile $q_\tau(x) := \inf \left\{ y \in \mathbb{R} : \mathbb{P}(Y \leq y \given X = x) \geq \tau \right\}$ of $Y$ given $X$ lies in the vector space of functions $\mathcal{F}$, such as the class of additive functions.

It is well-known that $q_{\tau}(\cdot)$ is the minimiser
of the risk $R(g)$ associated with the check loss
\[ \ell(\eta, w) := \rho_{\tau}(y - \eta), \quad \text{where} \quad \rho_{\tau}(r) := \tau r - r \ind_{\{r \leq 0\}}. \]
(See \cref{lem:conditional_quantile} in \cref{sec:lemmas} for a formal statement.)
Such a model is therefore a risk-constraining model with $\ell'(\eta, D) = \ind_{\{Y \leq \eta\}} - \tau$, $\phi(\eta, X) = \mathbb{P}(Y \leq \eta \given X) - \tau$, and $w(X) = p_{Y|X}(f^*(X) \given X)$, where $p_{Y|X}$ is the conditional density of $Y$ given $X$, assumed to exist. We describe a convenient way of estimating $w$ in \cref{sec:qgam-simu}.

\section{Theory} \label{sec:theory}
In this section, we provide a general result on the uniform asymptotic properties of the DST (\cref{thm:asymptotic}) from which may be derived results on Type I error control (\cref{cor:type_1_error}) and power (\cref{cor:power}). In \cref{sec:theory_examples} we provide more concrete and interpretable sufficient conditions for these general results to hold in the cases of testing conditional mean specification and testing conditional quantile specification.

Uniform, as opposed to pointwise asymptotic results, such as we provide here, are particularly important in semiparametric settings. Indeed, considering the problem of testing conditional mean independence of a response and a chosen predictor given remaining continuous predictors, \citet{Shah_Peters_2020} shows in particular that this problem is fundamentally impossible without imposing additional conditions. In contrast to pointwise analyses, uniform results more clearly reveal these conditions and also provide genuine asymptotic guarantees on Type I error.

In order to present these uniform results, given a distribution $P$ for $D$, we will often subscript associated quantities such as expectations (including risks) and probabilities by $P$.
	As in \citet{Shah_Peters_2020}, given a family of sequences of real-valued random variables $(U_{P,n})_{P\in\mathcal{P}, n\in\mathbb{N}}$ whose distributions are determined by $P\in\mathcal{P}$ and a deterministic sequence $(r_n)_{n \in \mathbb{N}}$, we write
$U_{P,n} = o_{\mathcal{P}}(r_n)$ if $\sup_{P\in\mathcal{P}}\pr_P(|U_{P,n}/r_n| > \epsilon) \to 0$ for every $\epsilon > 0$.

\subsection{Regularity conditions} \label{subsec:setup}
Our results require the following mild regularity conditions relating to a class of distributions $\mathcal{P}$ over which our results to be stated will be uniform. First let us define $\mathcal{F}_P$ to be the closure of $\mathcal{F} \cap L_P^2(X)$ in $L_P^2(X)$ (identifying functions which are almost surely equal under $P$), and define $\mathcal{G}_P$ similarly.

The following assumption in particular ensures that the risk $R_P$ is well-defined.
\begin{assumption}[Loss and score] \label{assump:risk_exist} 
For each $P \in \mathcal{P}$, suppose $\mathbb{E}_P|\ell(g(X),D)| $ and $\mathbb{E}_P\, \ell'(g(X),D)^2 $ are finite $\text{for each } g \in \mathcal{G}_P$.
\end{assumption}

\begin{assumption}[Existence of risk minimiser] \label{assump:minimiser-existence}
There exists $f^*_P \in \argmin_{f \in \mathcal{F}_P}R_P(f)$ for each $P \in \mathcal{P}$.
\end{assumption}
With this, the class of null distributions may be written as
\begin{equation} \label{eq:def-null}
\mathcal{P}_0 := \left\{P \in \mathcal{P}: f^*_P \in \argmin_{g \in \G_P} R_{P}(g) \right\}.
\end{equation}

\begin{assumption}[Regularity conditions of risk]\label{assump:risk-regularity}
  There exist universal constants $L,L',k$ and $ K > 0$ such that the following hold.
    \begin{assumpenum}
        \item \label{assump4_risk-lipschitz} There exists $\gamma \in [0,1]$ such that
\[ \sup_{P \in \mathcal{P}}\mathbb{E}_P[\{\ell'(\eta,D)-\ell'(\zeta,D)\}^2 \given X = x] \leq L\,|\eta -\zeta|^{1+\gamma}, \quad \forall \eta, \zeta \in \mathbb{R} \text{ and } x \in \mathcal{X} . \]
\item \label{assump4_bdd-weights} The smoothed score $\eta \mapsto \phi_P(\eta,x)$ \eqref{eq:phi_w_def} is differentiable everywhere for each $P \in \mathcal{P}, x \in \mathcal{X}$ and $\sup_{x \in \mathcal{X},\eta \in \mathbb{R}}|\phi_P'(\eta,x)| \leq K$ for each $P \in \mathcal{P}$.
        \item \label{assump4_lower_bdd-weights} The weight function $w_P(x) \geq k$ for all $x \in \mathcal{X}$ and $P \in \mathcal{P}$.

        \item \label{assump4_lipschitz-weights} The derivative of the smoothed score $\phi'_P(\cdot, x): \mathbb{R} \to \mathbb{R}$ is $L'$-Lipschitz for each $x \in \mathcal{X}$ and $P \in \mathcal{P}$.
        
    \end{assumpenum}
\end{assumption}

Assumption~\ref{assump4_bdd-weights} can be shown to hold for the loss functions we consider in \cref{subsec:applications} under mild conditions. For these settings, one can show that Assumption~\ref{assump4_bdd-weights} implies Assumption~\ref{assump4_risk-lipschitz}. 
Note that Assumption~\ref{assump4_bdd-weights} implies that $w_P(x) \leq K$ for each $x \in \mathcal{X}$ and $P \in \mathcal{P}$, and so together with Assumption~\ref{assump4_lower_bdd-weights}, \cref{lem:proj-existence} ensures that the projection $m_{\hat{h}}$ exists and is unique.
Assumption~\ref{assump4_lipschitz-weights} allows us to control the remainder term that arises from using a first-order approximation to the bias term in \eqref{eq:bias1}.

\begin{assumption}[Score moments] \label{assump:score_variance}
There exist $0 < c \leq C$ such that for each $P \in \mathcal{P}$ we have that $c \leq \var_P(\ell'(f^*_P(X),D) \given X = x)$ and $\E_P(\ell'(f^*_P(X),D)^2 \given X = x) \leq C$ for all $x \in \mathcal{X}$.
\end{assumption}
This assumption in particular is used to ensure that the denominator of the test statistic is sufficiently well-behaved.

\subsection{Main results} \label{sec:main_results}
In the rest of this section, we present general conditions ensuring uniform asymptotic validity and power of our test, primarily by imposing assumptions on the performance of the estimation procedures involved.
To study the theoretical properties of the DST, it is convenient to work with a version that uses three, rather than two, splits of the data, as per \cref{alg:dst_out} below. The extra split creates additional independence that simplifies the analysis. However, its role is very different from the more fundamental split between the hunting and testing samples in \cref{alg:dst_in}. In particular, one can eliminate the efficiency loss due to the additional sample split in \cref{alg:dst_out} through a simple cross-fitting strategy that involves exchanging the roles of $I_2$ and $I_3$ while this is not true for the hunt and test split (see also \cref{sec:multi_split}). 
However, in practice, we recommend using the simpler two-split version stated in \cref{alg:dst_in}.

\begin{algorithm}[!htb] \raggedright  \caption{Debiased score test (three-split version)} \label{alg:dst_out}
\textbf{Input}: Data $\{D_1,\dots,D_{3n}\}$, with other inputs as in \cref{alg:dst_in}.
        \vskip .3em
	\begin{algorithmic}[1]
\State Split sample into $\{1,\dots, 3n\} = I_1 \cup I_2 \cup I_3 $ with $|I_1| = n$, $|I_2| = n$ and $|I_3| = n$.
\State  Perform \cref{alg:dst_in} but use $I_1$ to hunt, $I_2$ to learn $\hat{f}$ and $\hat{m}_{\hat{h}}$ and $I_3$ to compute the test statistic.
    \end{algorithmic}
\end{algorithm} 

To reduce clutter, we use shorthand for the following random variables:
\begin{equation} \label{eq:eps}
\varepsilon_P:= \ell'(f^*_P(X),D), \qquad \xi_P:= \hat{h}(X)-{m}_{\hat{h}}(X), \qquad \sigma^2_P := \mathbb{E}_P(\xi_P^2 \given \hat{h}).
\end{equation}
In the case of testing conditional mean specification, $\varepsilon_P$ is the raw population-level residual $Y - \mu(f^*_P(X))$. The quantity $\sigma_P^2$ can be thought of as the noise variance in the (weighted) regression of $\hat{h}(X)$ onto $X$ over the function class $\mathcal{F}$, conditional on $\hat{h}$.
Note that by conditioning on a function such as $\hat{h}$, formally we mean conditioning on all the data used to form it: in the case of $\hat{h}$, this would be the sample $I_1$ as well as any external randomness involved in its construction.
The following measures of the predictive performance of the regression procedures used to estimate $f^*_P$ and $m_{\hat{h}}$ play a critical role in the asymptotic behaviour of the test statistic:
\begin{align*}
\mathcal{E}_1 &:= \mathbb{E}_P(\{\hat{f}(X) - f^*_P(X)\}^2 \given\hat{f}), \\
\mathcal{E}_2 &:= \frac{1}{\sigma^2_P}\mathbb{E}_P(\{\hat{m}_{\hat{h}}(X) - m_{\hat{h}}(X)\}^2 \given \hat{m}_{\hat{h}}, \hat{h}), \\
\mathcal{E}_{3} &:= \mathbb{E}_P\left( \left.\frac{1}{\sigma^2_P}\{\hat{f}(X) - f^*_P(X)\}^2 \xi_P^2 \,\right|\, \hat{f}, \hat{h}\right ).
\end{align*}
The error term $\mathcal{E}_2$ is a scaled mean squared error. The rationale for this scaling factor is that under the null, $\hat{h}$ and hence also $\xi_P$ is effectively an estimate of the zero function, as there is nothing to hunt for. Hence one would expect the unscaled error $\mathbb{E}_P(\{\hat{m}_{\hat{h}}(X) - m_{\hat{h}}(X)\}^2 \given \hat{m}_{\hat{h}}, \hat{h})$ to be very small. Dividing by the noise variance here can be thought of as bringing the error up to the appropriate scale. The error term $\mathcal{E}_3$ is a weighted version of $\mathcal{E}_1$. To gain some intuition for this quantity, observe that were $|\xi_P| \leq \beta\, \sigma_P$ for some $\beta > 0$ almost surely, then $\mathcal{E}_3 \leq \beta^2 \mathcal{E}_1$. 
\begin{assumption}[Estimation rates]\label{assump:mse}
We have that $\mathcal{E}_2 = o_{\mathcal{P}}(1), \mathcal{E}_3 = o_{\mathcal{P}}(1) $ and the product MSPE satisfies
\[ \sqrt{\mathcal{E}_1 \mathcal{E}_2} = o_{\mathcal{P}}(n^{-1/2}). \]
If the map $\phi_P'(\cdot,x)$ is non-constant for some $x \in \mathcal{X}$, then we also require
\[ \sqrt{\mathcal{E}_1 \mathcal{E}_3} = o_{\mathcal{P}}(n^{-1/2}). \]
\end{assumption}
The rate requirements here are non-trivial but also standard in the double/debiased machine learning literature; see, e.g., \citet{chernozhukov2018dml,rotnitzky2021characterization,foster2023orthogonal}. In particular, they are slow enough to accommodate many semiparametric procedures. It is worth noting that in the case of conditional mean specification for example, \cref{assump:mse} places no rate requirement on nonparametric estimation of the regression function or the quality of the hunting procedure; the latter however impacts the power as shown in \cref{cor:power} to follow.

In order to state our main asymptotic result concerning our test, we first define for each $P \in \mathcal{P}$,
\[ s_P := \argmin_{g \in \mathcal{G}_P }\E_P \{g(X)-\ell'(f^*_P(X),D)\}^2, \qquad \text{and} \qquad  \tau_P := \mathbb{E}_P\,s^{2}_P(X).\]
These are well-defined by \cref{lem:alt_proj} in \cref{sec:lemmas}. The quantity $\tau_P$ is zero under the null, since then $\ell'(f^*_P(X),D)$ is uncorrelated with any $g(X)$ with $g \in \mathcal{G}_P$ \eqref{eq:score_eqns}, but will be positive under an alternative. It may thus be thought of as a measure of the effect size.
We first state a general result that characterises the asymptotic behaviour of the test before discussing the consequences for Type I error and power.

\begin{theorem}[Asymptotic behaviour of the test] \label{thm:asymptotic}
Suppose we have independent observations $(D_i)_{1\leq i \leq 3n}$ from a distribution $P \in \mathcal{P}$ and \crefrange{assump:risk_exist}{assump:mse} hold. Suppose we use \cref{alg:dst_out} to form the test statistic $T_n$.
Additionally assume that
    \begin{enumerate}
         \item $\sup_{P \in \mathcal{P}} \mathbb{P}_P(\sigma^2_P = 0) = o(1)$, and \label{thm1:assump_a}
         \item there exists $\delta \in (0,2]$ such that $\mathbb{E}_P(|\varepsilon_P \xi_P|^{2 + \delta} \given \hat{h})/{\sigma^{2+\delta}_P} = o_{\mathcal{P}} (n^{\delta/2}).$ \label{thm1:assump_b}
     \end{enumerate}
    Then, it holds that
\[ T_n = Z_n + \sqrt{n}\, (1+ r_n) \,\Lambda_{n,P}, \]
where $\Lambda_{n,P} := \Lambda_P(\hat{h}-{m}_{\hat{h}})$ for $\Lambda_P$ defined as in \eqref{eq:snr}, $r_n = o_{\mathcal{P}}(1)$ and
\[ \sup_{P \in \mathcal{P}} \sup_{t \in \mathbb{R}} \left|\mathbb{P}_P(Z_n \leq t) - \Phi(t)\right| \to 0, \]
where $\Phi$ is the cumulative distribution function of a standard Gaussian.
\end{theorem}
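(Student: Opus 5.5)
We condition throughout on the data in $I_1$ and $I_2$, so that $\hat h$, $\hat f$ and $\hat m_{\hat h}$ are fixed functions and the summands on $I_3$ are i.i.d. Write $\hat\xi := \hat h - \hat m_{\hat h}$ and $\hat L_i := \ell'(\hat f(X_i),D_i)\hat\xi(X_i)$ for $i\in I_3$. The plan is to decompose
\[ \hat L_i = \varepsilon_{P,i}\xi_{P,i} + \varepsilon_{P,i}(m_{\hat h}-\hat m_{\hat h})(X_i) + \{\ell'(\hat f(X_i),D_i)-\ell'(f^*_P(X_i),D_i)\}\hat\xi(X_i) =: A_i + B_i + C_i . \]
After dividing everything by $\sigma_P$, the main term $A_i$ yields both $Z_n$ and the signal $\sqrt n\,\Lambda_{n,P}$. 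Note that $\E_P(A_i\given \hat h)/\sqrt{\var_P(A_i\given\hat h)} = \Lambda_{n,P}$, and that $\var_P(A_i\given \hat h) \asymp \sigma_P^2$ by \cref{assump:score_variance}, since $c\sigma_P^2 \le \var_P(A_i\given\hat h)\le C\sigma_P^2$ using the conditional variance decomposition. I would then show that the normalised sums $n^{-1/2}\sum B_i/\sigma_P$ and $n^{-1/2}\sum C_i/\sigma_P$ are $o_{\mathcal P}(1)$ after removing the part of $C$'s mean that is absorbed into the signal. The event $\{\sigma_P=0\}$ is discarded using condition \ref{thm1:assump_a}.

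\textbf{Main term.} Apply a conditional Berry--Esseen (Lyapunov-type) bound to the centred $A_i$. The moment condition \ref{thm1:assump_b} gives $\E|A-\E A|^{2+\delta}/\var(A)^{1+\delta/2} = o_{\mathcal P}(n^{\delta/2})$, so the Kolmogorov distance to $\Phi$ is $o_{\mathcal P}(1)$ conditionally. Integrating over $I_1$ via bounded convergence gives uniform convergence. The studentisation is handled by a uniform weak law for triangular arrays. The same $2+\delta$ moment yields $n^{-1}\sum (A_i-\bar A)^2/\var(A\given\hat h)\to 1$ in $o_{\mathcal P}$ sense, via truncation or a von Bahr--Esseen inequality.

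\textbf{Remainders.}
\begin{itemize}
\item[(a)] For $B$, the conditional mean is $\E[\phi_P(f^*_P(X),X)(m_{\hat h}-\hat m_{\hat h})(X)]=0$ under the null, but not in general. I would therefore split $B$ into its mean and a fluctuation. The fluctuation has variance at most $C\sigma_P^2\mathcal E_2$, so it is $o_{\mathcal P}(1)$ by Chebyshev. Since $\mathcal E_2\to0$, the mean is at most $\sqrt{\tau_P\,\sigma_P^2\mathcal E_2}$, and it can be absorbed as a relative perturbation of the signal. This is one source of the factor $(1+r_n)$; I need the comparison $\tau_P$ versus $\var(A)$, and I would instead compare with $|\E A|$ via Cauchy--Schwarz in the appropriate projection.
\item[(b)] The fluctuation of $C$ has variance at most $\E[L|\hat f-f^*|^{1+\gamma}\hat\xi^2]$ by Assumption~\ref{assump4_risk-lipschitz}. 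Using $\hat\xi = \xi_P + (m_{\hat h}-\hat m_{\hat h})$, this is controlled by $\mathcal E_3$ and $\mathcal E_1\mathcal E_2$-type terms (via Hölder and the boundedness of $\hat f - f^*$ moments), giving $o_{\mathcal P}(\sigma_P^2)$.
\item[(c)] For the mean of $C$, Taylor-expand $\phi_P$ using Assumptions~\ref{assump4_bdd-weights} and \ref{assump4_lipschitz-weights}:
\[ \E[C\given\cdot] = \E[w_P(\hat f-f^*)\hat\xi] + O\bigl(L'\E[(\hat f-f^*)^2|\hat\xi|]\bigr). \]
Because $\hat f - f^*\in\mathcal F_P$, \cref{prop:orthog} gives $\E[w_P(\hat f-f^*)\xi_P]=0$. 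The first-order term is therefore $\E[w_P(\hat f-f^*)(m_{\hat h}-\hat m_{\hat h})]\le K\sigma_P\sqrt{\mathcal E_1\mathcal E_2}=o_{\mathcal P}(\sigma_P n^{-1/2})$. The quadratic term vanishes when $\phi'_P$ is constant in $\eta$. Otherwise it is bounded by Cauchy--Schwarz through $\sqrt{\mathcal E_1\mathcal E_3}\,\sigma_P$ (plus a cross term with $\mathcal E_2$), which is $o_{\mathcal P}(\sigma_P n^{-1/2})$.
\end{itemize}
The same remainder bounds show that the denominator is perturbed by a relative $o_{\mathcal P}(1)$. A further issue under alternatives is that the sample variance of $A$ differs from $\var(A)$ by an amount that is nonnegligible only when $\sqrt n\Lambda_{n,P}$ is large; this again goes into $r_n$ multiplicatively.

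\textbf{Main obstacle.} I expect the hardest step to be the bookkeeping that turns non-centred remainders, together with the studentisation under alternatives, into the exact form $Z_n+\sqrt n(1+r_n)\Lambda_{n,P}$ with a \emph{multiplicative} $r_n$. Remainders that are additive in the mean would first have to be shown to be $o(1)$ outright, or dominated by $\sqrt n\Lambda_{n,P}$. I would first try to write
\[ T_n = \frac{\sqrt n(\bar A-\E A)/\mathrm{sd}(A) + o_{\mathcal P}(1) + \sqrt n\,\Lambda_{n,P}}{1+o_{\mathcal P}(1)+O(\Lambda_{n,P}^2)} . \]
I would then handle large and small $\sqrt n\Lambda_{n,P}$ separately, by a subsequence argument uniform over $\mathcal P$.
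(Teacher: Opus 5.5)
Your overall architecture matches the paper's proof: you condition on $I_1\cup I_2$, isolate $\varepsilon_P\xi_P$, apply a Lyapunov CLT to its centred part, use the $w_P$-orthogonality of $\hat f-f^*_P\in\mathcal F_P$ to $\xi_P$ to remove the first-order bias, and bound the Taylor remainder by $\sqrt{n\mathcal E_1\mathcal E_3}$. However, step (a) contains a genuine error.

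\textbf{Step (a).} The conditional mean of your $B_i$ is zero for \emph{every} $P\in\mathcal P$, not only under the null. Both $m_{\hat h}$ and $\hat m_{\hat h}$ lie in $\mathcal F_P$, and $f^*_P$ minimises $R_P$ over $\mathcal F_P$. So the first-order condition of \cref{lem:risklocmin}, applied with $\mathcal F_P$ in place of $\mathcal G_P$, gives $\mathbb E_P[\varepsilon_P f(X)]=0$ for all $f\in\mathcal F_P$. This is the missing idea, and your proposed substitute cannot work. Write $\Delta_m:=m_{\hat h}-\hat m_{\hat h}$. The quantity $|\mathbb E(\varepsilon_P\Delta_m\mid\cdot)|$ is not dominated by $|\mathbb E(\varepsilon_P\xi_P\mid\hat h)|$; the latter can vanish, e.g.\ when $\xi_P$ is uncorrelated with $s_P$. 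Hence it cannot be absorbed multiplicatively into the signal. As an additive term, your bound $\sqrt{n\tau_P\mathcal E_2}$ need not be $o(1)$, because \cref{assump:mse} only requires $\mathcal E_2=o_{\mathcal P}(1)$, with no rate on $\mathcal E_2$ alone.

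\textbf{Consequence for the main obstacle.} Once the identity is used, every remainder has conditional mean $o_{\mathcal P}(\sigma_P n^{-1/2})$, and the whole remainder sum is $o_{\mathcal P}(1)$ additively. Your ``main obstacle'' then disappears. The empirical variance in the denominator is invariant under shifting all $L_i$ by the constant $\mathbb E(\varepsilon_P\xi_P\mid\hat h)$. It therefore involves only the centred main term plus remainders, with no $O(\Lambda_{n,P}^2)$ contribution. The denominator equals $\var(\varepsilon_P\xi_P\mid\hat h)^{1/2}(1+o_{\mathcal P}(1))$, and $r_n$ is simply the reciprocal of that factor minus one. No case split or subsequence argument is needed.

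\textbf{The cross term.} A smaller, fixable problem is that you fold $\delta_f\Delta_m$ into your $C_i$, where $\delta_f:=\ell'(\hat f(X),D)-\ell'(f^*_P(X),D)$ and $\Delta_f:=\hat f-f^*_P$. Treating it through variances and the Taylor expansion produces $\mathbb E[|\Delta_f|^{1+\gamma}\Delta_m^2]/\sigma_P^2$ in (b) and $L'\,\mathbb E[\Delta_f^2|\Delta_m|]$ in (c). Neither is controlled by $\mathcal E_1,\mathcal E_2,\mathcal E_3$, and no assumption bounds higher moments or sup-norms of $\hat f-f^*_P$. The paper keeps $\delta_f\xi_P$ and $\delta_f\Delta_m$ separate and handles the latter by a first-moment bound. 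Monotonicity of subgradients together with $|\phi'_P|\le K$ gives $\mathbb E(|\delta_f|\mid X)\le K|\Delta_f|$ (\cref{lem:loss_reg}). Hence the conditional expectation of $n^{-1/2}\sum_i|\delta_{f,i}\Delta_{m,i}|/\sigma_P$ is at most $K\sqrt{n\mathcal E_1\mathcal E_2}=o_{\mathcal P}(1)$, and the conditional Markov lemma (\cref{lem:lems6}) concludes. The same bound controls this term's contribution to the denominator via $n^{-1}\sum_i(\delta_{f,i}\Delta_{m,i})^2\le\bigl(n^{-1/2}\sum_i|\delta_{f,i}\Delta_{m,i}|\bigr)^2$. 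Your Hölder bound for the variance of $\delta_f\xi_P$, namely $L\sigma_P^2\mathcal E_3^{(1+\gamma)/2}$, is fine as stated.
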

Condition~\ref{thm1:assump_a} is a mild condition that ensures non-degeneracy of the test statistic. Condition~\ref{thm1:assump_b} is a (conditional) Lyapunov condition used to apply a central limit theorem. These assumptions are also used in \citet{lundborg2024projected}. The result shows that the DST test statistic $T_n$ is asymptotically the sum of a standard Gaussian and a certain shift involving $\Lambda_{n,P} = \E_P(\varepsilon_P \xi_P) / \sqrt{\var(\varepsilon_P \xi_P)}$. Since under the null, we have  $\E_P(\varepsilon_P \xi_P) = 0$, we immediately obtain
the following result on Type I error control.

\begin{corollary}[Type I error control] \label{cor:type_1_error}
Consider the setup of \cref{thm:asymptotic} but with $\mathcal{P}$ set to the class of null distributions $\mathcal{P}_0$ \eqref{eq:def-null}.
Then, we have that
\[ \sup_{P \in \mathcal{P}_{0}} \sup_{t \in \mathbb{R}} |\mathbb{P}(T_n \leq t) - \Phi(t)| \to 0. \]
\end{corollary}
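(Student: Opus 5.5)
The plan is to apply \cref{thm:asymptotic} with $\mathcal{P}$ replaced by $\mathcal{P}_0$. Since $\mathcal{P}_0 \subseteq \mathcal{P}$ and all assumptions are stated as uniform conditions over the class, they continue to hold over $\mathcal{P}_0$. \cref{thm:asymptotic} then gives the decomposition
\[ T_n = Z_n + \sqrt{n}\,(1+r_n)\,\Lambda_{n,P}, \]
where $Z_n$ is uniformly asymptotically standard Gaussian over $\mathcal{P}_0$. The whole argument therefore reduces to showing that the shift term is exactly zero for every $P \in \mathcal{P}_0$, conditionally on the data used to form $\hat{h}$.

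To show this, fix $P \in \mathcal{P}_0$ and condition on the hunting sample $I_1$ (together with any external randomness used there), so that $\hat{h}$ is a fixed function. Because the three samples are independent, $X$ in the definition of $\Lambda_{n,P}$ is an independent draw, and the numerator of $\Lambda_{n,P}$ is
\[ \E_P[\varepsilon_P \xi_P \given \hat{h}] = \E_P\bigl[\ell'(f^*_P(X),D)\{\hat{h}(X) - m_{\hat{h}}(X)\} \bigm| \hat{h}\bigr]. \]
Here $\hat{h}\in\mathcal{G}$, and $m_{\hat{h}}\in\mathcal{F}_P\subseteq\mathcal{G}_P$ by \cref{prop:orthog}, so $\xi_P \in \mathcal{G}_P$.

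The key step is the null score equation \eqref{eq:score_eqns}: $\E_P[\ell'(f^*_P(X),D)g(X)]=0$ for all $g\in\mathcal{G}_P$ whenever $f^*_P$ minimises $R_P$ over $\mathcal{G}_P$. I would justify this from the definition \eqref{eq:def-null} using convexity, in the spirit of \cref{lem:risklocmin}. First, differentiability of $t\mapsto R_P(f^*_P+tg)$ at $t=0$ with derivative $\E_P[\varepsilon_P g(X)]$ follows from the subgradient inequality, the finite second moments in \cref{assump:risk_exist}, and the differentiability of $\phi_P$ in \ref{assump4_bdd-weights}. Minimality at $t=0$ then forces this derivative to vanish. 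Finite second moments of $\xi_P$ also follow from $\hat{h}\in L^2$, which is implicit in $\sigma_P^2$ being defined.

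With the numerator zero, $\Lambda_{n,P}=0$ whenever the denominator is positive. Positivity holds because $\var(\varepsilon_P\xi_P)\ge c\,\sigma_P^2$ by \cref{assump:score_variance}, and $\sigma_P^2>0$ except on an event of probability $o(1)$ uniformly, by condition~\ref{thm1:assump_a}. On that exceptional event we adopt the convention $\Lambda_{n,P}:=0$, or equivalently absorb the event into the $o(1)$ error. This gives $T_n = Z_n$ outside an event of uniformly vanishing probability, so that
\[ \sup_{P\in\mathcal{P}_0}\sup_t|\pr_P(T_n\le t)-\Phi(t)| \le \sup_{P\in\mathcal{P}_0}\sup_t|\pr_P(Z_n\le t)-\Phi(t)| + \sup_{P\in\mathcal{P}_0}\pr_P(\sigma_P^2=0) \to 0. \]

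The main obstacle is the score-equation step, specifically the interchange of differentiation and expectation for merely subdifferentiable losses such as the check loss. I would handle it by bounding the difference quotients using the subgradient inequality on both sides together with \ref{assump4_risk-lipschitz}, and then applying dominated convergence.
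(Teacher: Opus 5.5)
Your proposal is correct and takes the same route as the paper: apply \cref{thm:asymptotic} over $\mathcal{P}_0$, then note that the shift $\sqrt{n}(1+r_n)\Lambda_{n,P}$ vanishes because $\E_P(\varepsilon_P\xi_P \given \hat{h}) = 0$ under the null, which the paper gets from the first-order condition in \cref{lem:risklocmin}. The only minor differences are that the paper's differentiability step (\cref{lem:riskfunc}) uses the monotone-subgradient bound of \cref{lem:loss_reg} under Assumption~\ref{assump4_bdd-weights} rather than Assumption~\ref{assump4_risk-lipschitz} (both work), and that your explicit treatment of the event $\{\sigma_P^2 = 0\}$ just makes precise what the paper handles by working on $\Omega_0$ in the proof of the theorem.
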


Let us now turn to the power under a sequence of local alternatives.
For $P \in \mathcal{P}$, by \cref{lem:snr_bound} in \cref{sec:lemmas}, one has that $\Lambda_{n,P} \geq A\tau_P^{1/2}\operatorname{Corr}_P(s_P(X),\xi_P)$ for a positive constant $A$.
Thus, as long as the hunted signal $\xi_P$ is sufficiently correlated with $s_P(X)$, by \cref{thm:asymptotic} we can expect the test to be powerful.

\begin{corollary}[Asymptotic power]\label{cor:power}
Consider the setup of \cref{thm:asymptotic} and suppose sequences $(\epsilon_n)$ and $(\rho_n)$ are such that
\begin{equation} \label{eq:quality_of_hunting}
\inf_{P \in \mathcal{P}(\epsilon_n)}\mathbb{P}_P(\operatorname{Corr}_P(s_P(X),\xi_P \given\hat{h}) > \rho_n) \to 1
\end{equation}
and $n\rho^2_n\epsilon_n \to \infty$,
where $\mathcal{P}(\epsilon) := \{P \in \mathcal{P} : \tau_P \geq \epsilon\}$. Then for any $\alpha \in (0,1)$,
$$ \inf \limits_{P \in \mathcal{P}(\epsilon_n)} \mathbb{P}_P(T_n>z_{1-\alpha}) \to 1. $$
\end{corollary}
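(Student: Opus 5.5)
Corollary~\ref{cor:power} follows from \cref{thm:asymptotic} combined with the lower bound on $\Lambda_{n,P}$ from \cref{lem:snr_bound}. By \cref{thm:asymptotic}, $T_n = Z_n + \sqrt{n}(1+r_n)\Lambda_{n,P}$, with $Z_n$ uniformly asymptotically standard Gaussian and $r_n = o_{\mathcal{P}}(1)$. Work on the event
\[ \Omega_n := \{\operatorname{Corr}_P(s_P(X),\xi_P \given \hat h) > \rho_n\} \cap \{|r_n| \le 1/2\}. \]
By \eqref{eq:quality_of_hunting} and the uniform smallness of $r_n$, we have $\inf_{P\in\mathcal{P}(\epsilon_n)}\mathbb{P}_P(\Omega_n)\to 1$. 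Here we use that $\mathcal{P}(\epsilon_n)\subset\mathcal{P}$, so the $o_{\mathcal{P}}$ statements restrict to it.

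On $\Omega_n$, \cref{lem:snr_bound}, applied conditionally on $\hat h$ (that is, with $\hat h$ treated as fixed, which is legitimate since $I_1$ is independent of $I_3$), gives
\[ \Lambda_{n,P} \ge A\,\tau_P^{1/2}\rho_n \ge A\,\epsilon_n^{1/2}\rho_n \quad\text{for } P\in\mathcal{P}(\epsilon_n). \]
We may assume $\rho_n>0$ eventually; otherwise $n\rho_n^2\epsilon_n\to\infty$ fails. Hence on $\Omega_n$,
\[ \sqrt n(1+r_n)\Lambda_{n,P} \ge \tfrac{A}{2}\sqrt{n\rho_n^2\epsilon_n} =: M_n \to \infty. \]
This gives
\[ \mathbb{P}_P(T_n \le z_{1-\alpha}) \le \mathbb{P}_P(\Omega_n^c) + \mathbb{P}_P(Z_n \le z_{1-\alpha} - M_n). \]

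Next, bound the second term uniformly. Fix any $t\in\mathbb{R}$. For $n$ large, $z_{1-\alpha}-M_n\le t$, so the term is at most
\[ \sup_P \mathbb{P}_P(Z_n\le t) \le \Phi(t) + \sup_P\sup_s|\mathbb{P}_P(Z_n\le s)-\Phi(s)|. \]
The supremum term tends to $0$ by \cref{thm:asymptotic}. Letting $t\to-\infty$ gives $\limsup_n \sup_{P\in\mathcal{P}(\epsilon_n)}\mathbb{P}_P(T_n\le z_{1-\alpha}) = 0$, which is the claim.

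The main point needing care is the conditional use of \cref{lem:snr_bound}. $\Lambda_{n,P}$ and the correlation are defined conditionally on $\hat h$, via the moments of $\varepsilon_P\xi_P$ given $I_1$ (and $I_2$ for $\hat m$, though $m_{\hat h}$ depends only on $\hat h$). So I would check that the lemma's hypotheses hold pointwise for each realisation of $\hat h$ with $\sigma_P^2>0$, and that the constant $A$ is uniform in $P$ and in $\hat h$. This should follow from \cref{assump:score_variance} and the weight bounds $k\le w_P\le K$. The event $\{\sigma_P^2=0\}$ has vanishing probability by condition~\ref{thm1:assump_a}, so it can be added to $\Omega_n^c$. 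The remaining steps are routine.
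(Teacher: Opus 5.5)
Your proposal is correct and follows the paper's proof essentially step for step. Both arguments use the same good event (the hunting-correlation event intersected with a bound on $r_n$) and the same deterministic lower bound $\sqrt{n}(1+r_n)\Lambda_{n,P}\ge \tfrac{A}{2}\sqrt{n\epsilon_n}\,\rho_n$ from \cref{lem:snr_bound}. They then make the same comparison with $\Phi$ via the uniform Gaussian approximation of $Z_n$; your fix-$t$-then-let-$t\to-\infty$ step is equivalent to the paper's direct bound $\Phi(z_{1-\alpha}-a_n/2)$.

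One small correction: $n\rho_n^2\epsilon_n\to\infty$ does not force $\rho_n>0$, since a negative $\rho_n$ satisfies it too and the conclusion can then fail. Positivity of $\rho_n$ is therefore an implicit hypothesis of the statement, and the paper's proof also uses it silently when it asserts $a_n\to\infty$.
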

In particular, given a sequence of local alternatives for which $n \epsilon_n \to \infty$, if  \eqref{eq:quality_of_hunting} holds with $\rho_n = \rho$ for some $\rho > 0$, then the DST asymptotically has power.
In our simulations, we have often found this to be the case.

\subsection{Examples} \label{sec:theory_examples}
Here we specialise the theory above to the applications discussed in \cref{subsec:applications}. Throughout, we shall assume the risk minimiser $f^*_P$ on $\mathcal{F}_P$ exists for each distribution of interest $P \in \mathcal{P}$ (\cref{assump:minimiser-existence}). We also assume the estimation rates in \cref{assump:mse} to hold uniformly on $\mathcal{P}$.

\subsubsection{Conditional mean specification}
\label{subsubsec:cmean-theory}
Recall that for the problem of testing conditional mean specification, the loss takes the form $\ell(\eta, d):= -\eta y + K(\eta)$, where $K$ is an antiderivative of the increasing link function $\mu$. Below we provide a list of conditions under which the more abstract regularity conditions in \cref{assump:risk_exist,assump:risk-regularity,assump:score_variance} hold (see \cref{lem:conditional_mean}).

\begin{assumption} \label{assump:mean}
Suppose that the link function $\mu$ and class of distributions $\mathcal{P}$ on $\mathbb{R}^d \times \mathbb{R} $ satisfy:
\begin{assumpenum}
    \item The link function $\mu : \mathbb{R} \to \mathbb{R}$ is increasing, with $\|\mu'\|_{\infty} \leq L$ and $\|\mu''\|_{\infty} \le L'$ for some $L,L' \geq 0$.
    \label{assump6_bdd_link}

    \item There exists a constant $k > 0$ such that $\mu'(f^*_P(x)) > k$ for each $x \in \mathcal{X}$ and $P \in \mathcal{P}$.
    \label{assump6_bdd_weights}

    \item There exist $0 < c \leq C$ such that $c \leq \var_P(Y \given X = x) \leq C$ for each $x \in \mathcal{X}$ and $P \in \mathcal{P}$.
    \label{assump6_bdd_variance}

\item There exists $M \geq 0$ such that $\sup_{x \in \mathcal{X}} |\mu(f^*_P(x))-\mu(g^*_P(x))| \leq M$ for each $P \in \mathcal{P}$.
\label{assump6:bdd_alt}
\end{assumpenum}
\end{assumption}

Assumption \ref{assump6_bdd_link} is met by many links considered in practice, such as the identity, logistic and probit links.
Assumption \ref{assump6_bdd_weights} is satisfied trivially by the identity link, and is true for the logistic and probit links as long as $f^*_P$ is bounded for each $P\in \mathcal{P}$.
Assumption \ref{assump6:bdd_alt} holds trivially with $M=0$ for all null distributions, and is a boundedness condition for local alternatives.

In this setting, we have $ s_P(x) = \mu(f^*_P(x))-\mu(g^*_P(x))$ where $\mu(g^*_P(X)):=\mathbb{E}_P(Y \given X)$.
In particular, the effect size of an alternative is given by \[ \tau_P = \mathbb{E}_P\{\mu(f^*_P(X))-\mu(g^*_P(X))\}^2. \]

\subsubsection{Conditional quantile specification} 

For conditional quantile specification, we have $\ell'(\eta,d) := \ind(y \leq \eta) - \tau$, where $\tau \in (0,1)$ is the quantile of interest.
Consider the following conditions:

\begin{assumption}\label{assump : quantile}
Suppose that the class of distributions $\mathcal{P}$ on $\mathbb{R}^d \times \mathbb{R}$ is such that $\operatorname{Var}_P(Y) < \infty$ for each $P \in \mathcal{P}$ and the following hold:
    \begin{assumpenum}
        \item The regular conditional distribution $\mathbb{P}_P(\cdot \given X=x)$ exists and admits a Lebesgue density given by $p_{Y|X}(\cdot \given x)$, which is uniformly bounded---there exists an $L > 0$ such that $p_{Y|X}(\cdot \given x) \leq L$ for each $x \in \mathbb{R}^d$ and $P \in \mathcal{P}$, and uniformly Lipschitz---there exists an $L' > 0$ such that $|p_{Y|X}(\eta \given x)-p_{Y|X}(\zeta \given x)| \leq L'|\eta -\zeta|$ for each $x \in \mathbb{R}^d$ and for each $P \in \mathcal{P}$.
        \label{assump7_density_reg}

       \item There exist $0 < c \leq C < 1$ such that $\mathbb{P}_P(Y \leq f^*_P(X) \given X = x) \in [c,C]$ for each $x \in \mathbb{R}^d$ and $\forall P \in \mathcal{P}.$ \label{assump7_minimiser_exist}

       \item There exists a $k > 0$ such that $p_{Y|X}(f^*_P(x) \given x) \geq k$ for each $x \in \mathbb{R}^d$ and $P \in \mathcal{P}$.
       \label{assump7_bdd_density}

    \end{assumpenum}
\end{assumption}

We verify in \cref{lem:quantile-assumptions} (see \cref{sec:lemmas}) that if \cref{assump : quantile} holds, then \cref{assump:risk_exist,assump:risk-regularity,assump:score_variance} are all satisfied by the class $\mathcal{P}$.
Here we have
$ s_P(x) = \mathbb{P}_P(Y \leq f^*_P(X) \given X = x) - \tau, $
 so that the effect size of an alternative is measured by 
\[ \tau_P = \mathbb{E}_P \left[\mathbb{P}_P(Y \leq f^*_P(X) \given X) - \tau \right]^2. \] 

\section{Numerical results} \label{sec:numerical}
In this section, we present numerical results on both synthetic and real data to illustrate the empirical performance of the Debiased Score Test (DST) for the various applications considered in \cref{subsec:applications}.
Unless stated otherwise, we use the procedure described in \cref{sec:hunting} for hunting, with the refinement of \cref{sec:accounting_DB} applied.
By default, we use \texttt{regression\_forest()} from the \texttt{grf} package \citep{grf} to hunt. 
The code for producing the numerical results is available at \url{https://github.com/dhawan-aditya/dst_simulations}.

\subsection{Generalised additive models} \label{sec:gam-simu}
We present some simulation studies for testing the conditional mean specification of a generalised additive model.
We generate covariates $X \sim \N_{10}(0,\Sigma)$, where $\Sigma_{ij} = 2^{-|i-j|}$.
Conditioned on $X$, we generate both continuous and binary responses:
\begin{description}
\item[Continuous] $Y = \sin(2X_1) + (\tau / \sqrt{n}) g(X) +v(X)\cdot \varepsilon $, where $\varepsilon \in \{\N(0,1), t_4\}$ is independent of $X$ and $v(X) = (1+X_2^2)^{1/2} / 2$;
\item[Binary] $\mathbb{P}(Y = 1 \given X) = \mu(\sin(2X_1) + (\tau / \sqrt{n}) g(X))$, where $\mu(\cdot)$ is either the logit or the probit link.
\end{description}
In both cases, we let $g(X) = X_1 X_3$ and use $\tau \geq 0$ to control the effect size: the null holds if and only if $\tau = 0$. 
To implement the DST described in \cref{sec:cond-mean}, we use package \texttt{grf} for hunting and use \texttt{bam()} from the R package \texttt{mgcv} \citep{wood2017} to fit all generalised additive models (with \texttt{method=`fREML'} for smoothing parameter selection and basis dimension $k = 20$). 
\cref{fig:gam-power} shows the resulting power curves under $n=2000$ ($n=1000$ and $5000$ produce similar results; see \cref{sec:gam-simu-extra}).
The power of the DST can be further boosted by aggregating the test statistics from 10 random splits using rank-transformed subsampling \citep{rt} (dashed curves).

We compare our test to an adaptation of the test in \citet{williamson2021general}, where we take the loss function to be the squared loss in the continuous case, and the negative log-likelihood of the binomial model in the binary case.
While their procedure was originally developed for assessing variable significance, the theory generalises to the goodness-of-fit setting as well, which we discuss in more detail in \cref{sec:williamson}. 
To implement the test, we fit the null model as above and the alternative model using \texttt{grf} with default settings.

As mentioned by the authors, the influence function for the target parameter in \citet{williamson2021general} becomes degenerate under the null (see \cref{sec:williamson}).
To maintain Type I error control, they introduce additional sample splitting to prevent the degeneracy.
However, this implies that their test only has power against a sequence of stronger-than-root-$n$ alternatives, namely when $\sqrt{n} \varepsilon_{n} \to \infty$ for $\varepsilon_{n}$ defined in \cref{sec:theory}.
Since this simulation setup has root-$n$ alternatives, i.e., $\sqrt{n} \varepsilon_{n}$ remains bounded, the adapted Williamson test is expected to have very low power and this is validated in \cref{fig:gam-power}.

The DST can also be applied to testing the specification of a partially linear model; we present the simulation results in \cref{sec:plm-simu}. 

\begin{figure}[htbp]
    \centering
    \includegraphics[width= 1\textwidth]{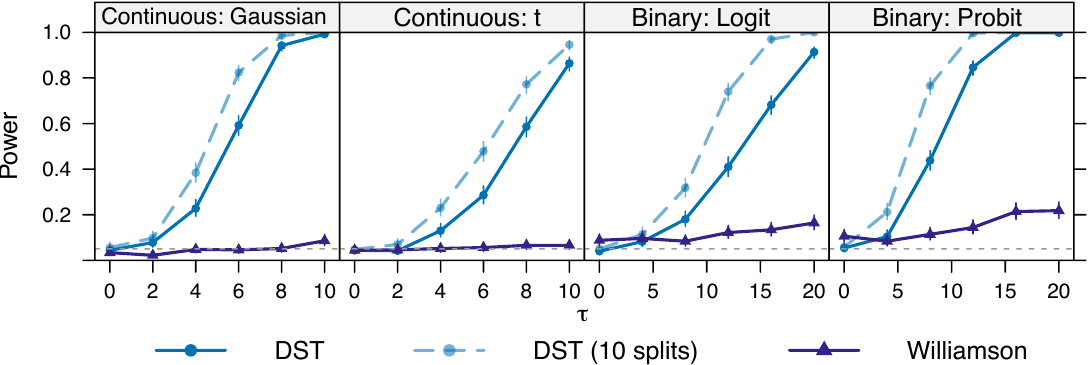}
    \caption{Power (with 95\% CI, dashed horizontal: $\alpha=0.05$) for testing the conditional mean specification of a GAM with $p=10$ covariates and $n=2000$. Dashed curves are from combining 10 splits of DST using the rank-transformed subsampling. }
    \label{fig:gam-power}
\end{figure}

\subsubsection{Model checking for insurance data} \label{sec:insurance}
We use our method to check model specification for the insurance dataset \texttt{FreMTPL2freq} from the R package \texttt{CASdatasets} \citep{CASdatasets2024}, which contains claim counts $Y$ and several risk features $X$ for motor third party liability claims in France.
\citet{kaggle} analysed this dataset, and considered various models for $Y$, the number of claims made during an exposure period of length $T \in (0,2]$ years.
Given covariates $X \in \mathcal{X}$, they model the claim count as
\begin{equation} \label{eq:poisson_model}
Y \given X, T \sim \pois( \exp(\Lambda(X))\cdot T),
\end{equation}
where $\Lambda$ is some function of the covariates $X$.
The authors considered various structural choices for $\Lambda$, and specifically an additive model
\begin{multline} \label{eq:additive_poisson_model}
\Lambda(X) = f_1(\texttt{VehAge}) + f_2(\texttt{DrivAge}) + f_3(\texttt{BonusMalus}) + \beta_1 \cdot \texttt{VehPower} + \beta_2 \cdot \texttt{VehGas} \\
+ \beta_3 \cdot \texttt{VehBrand} + \beta_4 \cdot \texttt{Area} + \beta_5 \cdot \texttt{Density} + \beta_6 \cdot \texttt{Region}. 
\end{multline}
The authors ultimately found that a LightGBM model \citep{ke2017lightgbm} outperforms the additive model, which suggests that the additivity of $\Lambda$ is likely misspecified.
Suppose we are confident that $\E(Y \given X, T) = \exp(\Lambda(X))T$, as implied by the Poisson model in \eqref{eq:poisson_model}, and would like to test the structural assumption that $\Lambda \in \mathcal{F}$, where $\mathcal{F}$ is the linear space of functions of the form \eqref{eq:additive_poisson_model}.
We take $\mathcal{G} = L^2(X)$ and use the negative Poisson log-likelihood as the loss function
\[ \ell(\eta,D):= -\eta Y+\exp(\eta)\cdot T, \quad W := (T, X, Y). \]
Under \eqref{eq:poisson_model}, this population risk is minimised over $\G$ at $\Lambda$.
Hence, we can use our method to test the specification \eqref{eq:additive_poisson_model} with score
$\ell'(\eta,D) = \exp(\eta)\, T-Y$, smoothed score $\phi(\eta,X) = \exp(\eta) \E(T \given X) -\E(Y \given X)$ and weight $w(X) = \exp(f(X)) \E(T \given X)$.

We compare the DST specification test with a test adapted from \citet{williamson2021general}, which is described in \cref{sec:williamson}. 
We use \texttt{bam} from the \texttt{mgcv} package to fit the null model for both methods.
The \texttt{grf} package is used for hunting in the DST, while the \texttt{lightgbm} package is used for fitting the alternative model for the \citet{williamson2021general} test.
We use 1/4 of the data as the test sample, from which we further draw 500 subsamples of varying sizes and compute a $p$-value on each.
The left panel of \cref{fig:insurance} plots the power curves against the size of the test subsample. 
To further assess calibration, we use the remaining 3/4 of the data to fit the null model $\hat{\Lambda}$, and then generate synthetic responses $\hat{Y}_i \sim \pois(\exp(\hat{\Lambda}(X_i))\cdot T_i)$ using $(T_i,X_i)$ from the test data.
The resulting Type I errors are shown in the right panel of \cref{fig:insurance}.

\begin{figure}[htbp]
    \centering
    \includegraphics[width=0.9\textwidth]{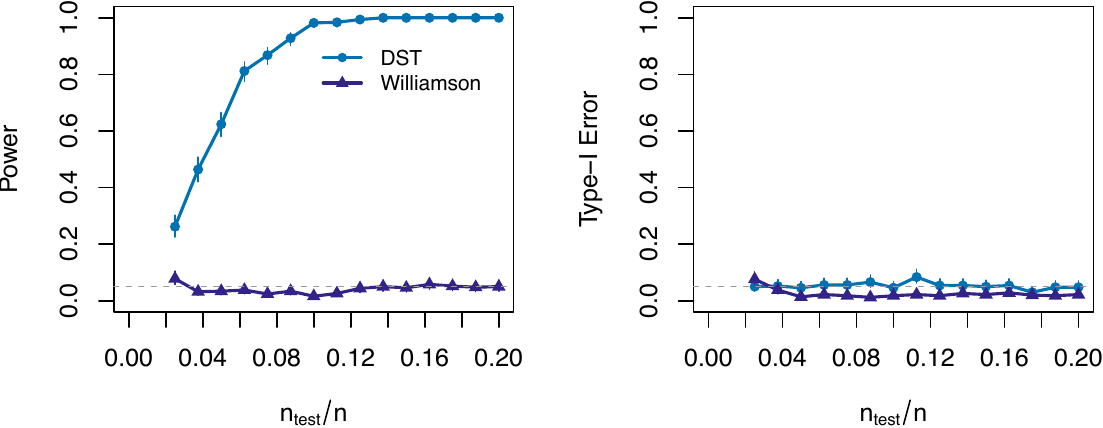}
    \caption{Power and Type I error (with 95\% CI) for testing the specification \eqref{eq:additive_poisson_model} on the insurance dataset. Left: power on the real test data; right: Type I error on the real test data but with synthetic null response (dashed horizontal line: $\alpha = 0.05$).}
    \label{fig:insurance}
\end{figure}

\subsection{Additive quantile regression} \label{sec:qgam-simu}
We operationalise the DST described in \cref{sec:quant} for testing the additivity of a quantile function:
we fit the additive quantile regression with the R package \texttt{qgam} \citep{qgam} using basis dimension $k = 20$, hunt with the \texttt{grf} package, and debias the hunted function with \texttt{bam()} from the \texttt{mgcv} package.
Additionally, recall that the weight $w(X) = p_{Y \given X}(f(X) \given X)$, i.e., the conditional density at the conditional quantile, is needed here.
We use a plug-in estimate: we first estimate the conditional density $p_{Y \given X}$ and then evaluate the estimate at $\hat{f}(X)$.
We run \texttt{grf}'s \texttt{quantile\_forest()} to derive the forest weights $(\alpha_i)_{1 \leq i \leq n}$, and then form the estimate $\hat{p}_{Y \given X}(y \given x) = h^{-1}\sum_{i=1}^{n} \alpha_i(x)\, K((y-Y_i)/h)$, where $K$ is the standard normal density and the bandwidth $h$ is selected using 5-fold cross-validation of the held-out log-likelihood.

We generate some simulations as follows.
Let $Y := f_1(X) + f_2(X) \cdot \varepsilon$, where $f_2 \geq 0$ and $\varepsilon \sim F$ is independent of the random vector $X$.
Then, for any $q \in (0,1)$, the $q$-th conditional quantile of $Y$ is given by $f_1(X) + f_2(X) \cdot F^{-1}(q)$, which is additive in the coordinates of $X$ if $f_1$ and $f_2$ are both additive. 
Specifically, we draw $X \sim \N_{10}(0,\Sigma)$ with $\Sigma_{ij} = 2^{-|i-j|}$ and then generate the response using 
\[f_1(X) = \sin(X_1) + \cos(X_2) +  X_3/4 + \tau/\sqrt{n} \cdot g(X), \; f_2(X) = (1 + X_2^2 + X_3^2)/4, \; g(X) = X_1 X_3,  \]
where $\tau \geq 0$ is the effect size that controls the departure from the null: the quantile function is additive if and only if $\tau = 0$. 
We focus on $q = 0.7$ and consider error distributions $F \in \{\N(0,1), t_3, \expo(1) \}$. 
\cref{fig:qgam-power} plots the power of the DST under sample size $n=2000$.
It additionally shows the curves for a variant of the DST that does not apply refinement for the hunt (see (c) in \cref{alg:dst_in}). We can see that the refinement slightly improves 
the power of the DST in the exponential example.

\begin{figure}[!htbp]
    \centering
    \includegraphics[width=0.9\textwidth]{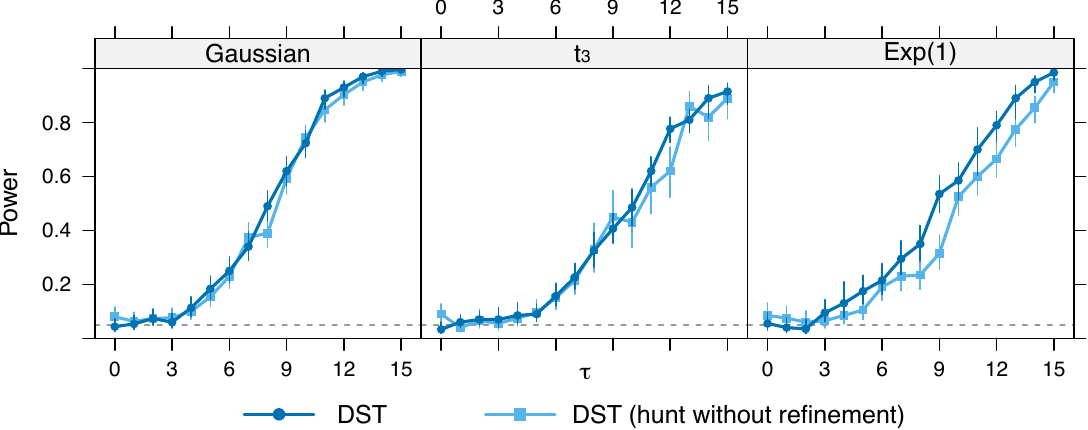}
    \caption{Power (with 95\% CI, dashed horizontal line: $\alpha=0.05$) of the DST for testing the additivity of a quantile function. }
    \label{fig:qgam-power}
\end{figure}

\subsection{Treatment effect heterogeneity}  \label{sec:hte-numerical}
In this section, we numerically study our method described in \cref{sec:hte} for detecting treatment effect heterogeneity.
We first present the results of simulation studies with the following data-generating mechanism with 6 covariates: 
\begin{align*}
& Z \sim \N_6(0, \Sigma), \quad \Sigma_{ij} = 5^{-|i-j|}, \quad T \given Z \sim \text{Ber}(\Phi(1/5 - Z_2 / 4 + Z_3 / 4)), \\
& \tau(z)  = 1 - z_3 / 2 + z_4^{2}, \quad \mu_0(z) = 2 \log\left(1 + \exp(z_1 + z_2 + z_3) \right), \\
& Y= \mu_0(Z) + (\gamma / \sqrt{n})\, T \cdot \tau(Z) + \varepsilon, \quad \varepsilon \sim \N(0,1/4).
\end{align*}
Here $\gamma \in \mathbb{R}$ is the effect size. 
We consider null hypotheses $H_0(S)$ that the CATE $\tau(z)$ only depends on $z$ through $z_{S^{c}}$ for the following choices of $S$: (i) all the covariates $S =\{1,\dots,6\}$, (ii) subset $S = \{3,4\}$ and (iii) singleton $S = \{3\}$. 
Rejecting $H_0(S)$ provides evidence for treatment effect heterogeneity in $Z_S$ while holding the other covariates fixed.
Choice (i) tests whether there is any heterogeneity at all.
For all these choices of $S$, $H_0(S)$ holds if and only if $\gamma = 0$. 

\cref{fig:hte-power} plots the power curves under sample size $n=2000$. 
We apply our method using \texttt{grf} to fit the nuisance functions.
As a method for comparison, we also test the null for $S = \{3,4\}$ and $S=\{3\}$ using the TE-VIM method of \citet{hines2025variable}, which assesses $H_0(S)$ via a variable importance measure (VIM) constructed by comparing the out-of-sample mean squared prediction errors of a CATE model fitted with and without $Z_S$.
Specifically, we compare to the variant `SS-B' of TE-VIM, which, like our method, employs data splitting and the R-learner for estimating the nuisances (also using \texttt{grf}).
As mentioned by the authors, TE-VIM has no formal guarantees, unless modified using the strategy of \citet{williamson2021general} --- this modified version is plotted in \cref{fig:hte-power}; for the unmodified version, see also \cref{fig:hte-tevim-unmod}.

For the case of $S = \{1,\dots,6\}$, we also compare to the built-in \texttt{test\_calibration()} of the \texttt{grf} package.
As discussed in \cref{subsec:grf}, it can be viewed as an omnibus test of $H_0(S)$ when $S$ is the full set and $T$ is binary.
This test, however, is not applicable when $S$ is a subset of covariates.
Due to the data splitting used in DST and TE-VIM, which produces randomised $p$-values with potentially reduced power, each of them is applied 10 times and the $p$-values are aggregated using the rank-transformed subsampling \citep{rt}; the \texttt{grf} test is
not improved by further aggregation.
The aggregated DST consistently achieves the highest power in all settings, while maintaining the Type I error at $\alpha = 0.05$ under the null. 
The modified TE-VIM has no power for detecting root-$n$ alternatives, but gains some power after aggregation.

\begin{figure}[htbp]
\centering
\includegraphics[width=.85\textwidth]{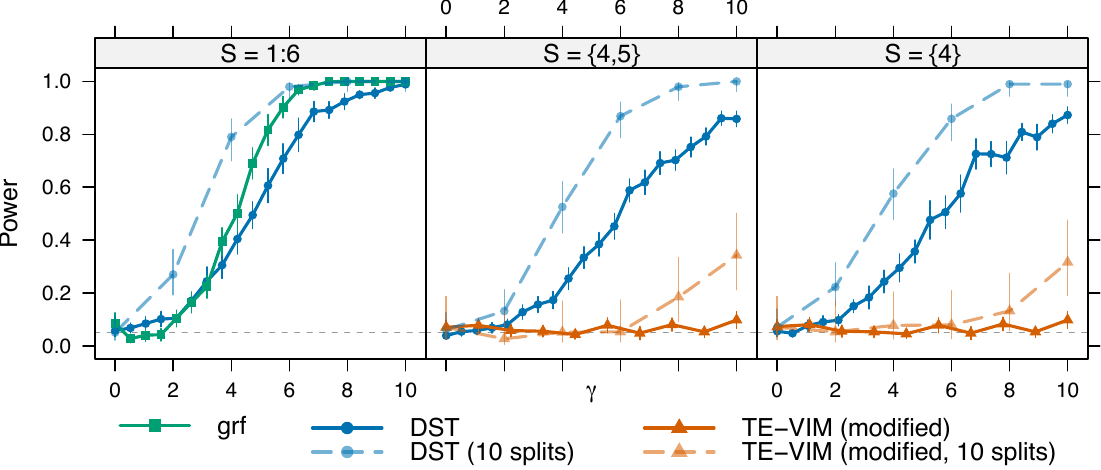}
\caption{Power (with 95\% CI) for detecting treatment effect heterogeneity in covariates $S$ (horizontal dashed: $\alpha = 0.05$). TE-VIM is modified with extra sample splitting to maintain calibration under the null. Dashed curves are based on aggregated tests from 10 data splits.}
\label{fig:hte-power}
\end{figure}

\subsubsection{Identifying effect modifiers for HIV treatment} \label{sec:hte-actg}
We also use our method to identify effect modifiers in treating HIV using the AIDS Clinical Trials Group (ACTG) Protocol 175 data \citep{hammer1996trial}, which is available from the R package \texttt{speff2trial}. 
This is a randomised trial that compares treatments for HIV among adults with CD4 T-cell counts between 200 and 500 cells per $\text{mm}^{3}$.
We consider two treatment arms: monotherapy with didanosine ($T=0$) versus combination therapy with zidovudine and didanosine ($T=1$), which have 561 and 522 patients respectively. 
We study the CD4 count at $20 \pm 5$ weeks as the continuous outcome $Y$, with higher values indicating better health. 
We are interested in the potential effect modification by 12 baseline covariates $Z$, consisting of 5 continuous variables (age, weight, Karnofsky score, baseline CD4 count, baseline CD8 count) and 7 binary variables (gender, homosexual activity, race, symptom, intravenous drug use, antiretroviral history, hamophilia).
For each covariate $Z_j$, we test the leave-one-out null hypothesis associated with $S = \{j\}$, which posits that the treatment effect does not vary by $Z_j$ while holding all the remaining 11 covariates fixed. 
Therefore, a small $p$-value indicates treatment effect heterogeneity in $Z_j$ not captured by the other covariates. 
Since the treatment is randomised, a constant propensity score is estimated.
Both methods employ cross fitting and use \texttt{grf} for fitting nuisance regressions.

\cref{fig:actg} reports the results. 
For each covariate, we apply DST and the modified TE-VIM over 100 random data splits and compute the aggregated $p$-value with the rank-transformed subsampling --- this is repeated 2000 times, yielding the boxplots reflecting the remaining variability after multiple splits. 
DST identifies the baseline CD4 count (median $p=0.03$) and homosexual activity (median $p=0.03$) as potentially significant effect modifiers, while the TE-VIM method yields no significance.

\begin{figure}[htbp]
\centering
\includegraphics[width=1.\textwidth]{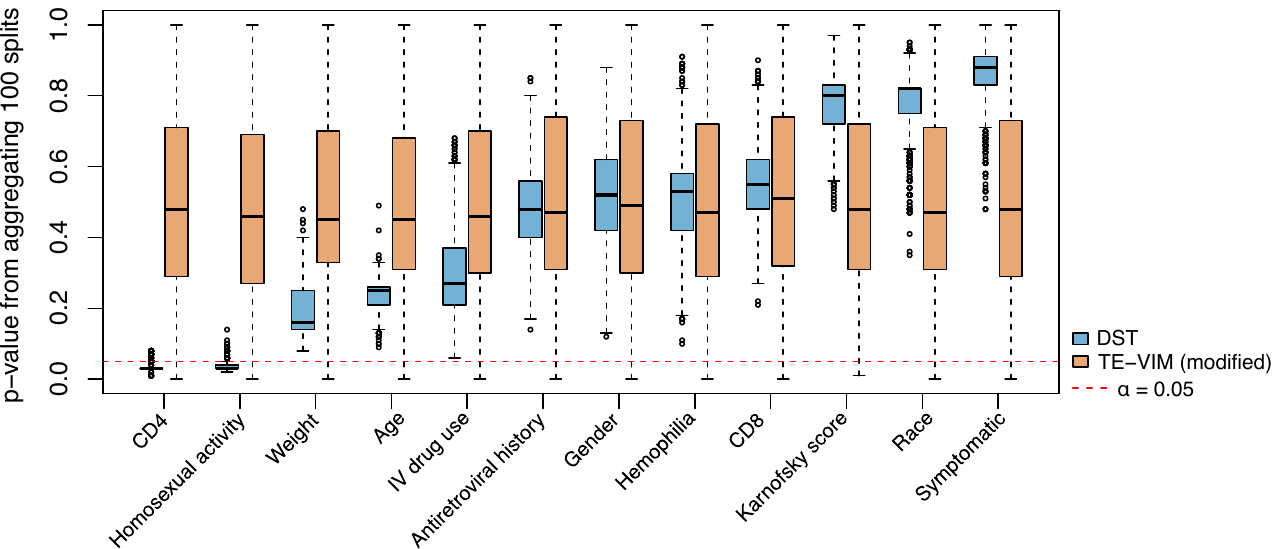}
\caption{Treatment effect heterogeneity in each baseline covariate for the AIDS clinical trial data, where the vertical axis shows the $p$-values aggregated from 100 random data splits, each calibrated using the rank-transformed subsampling. The boxplots reflect the remaining variability after aggregating over 100 splits.}
\label{fig:actg}
\end{figure}

\section{Discussion} \label{sec:discuss}
In this paper we introduced the Debiased Score Test (DST) for testing semiparametric hypotheses that leverages the predictive power of machine learning methods in a hunt-and-test strategy.
Building on this work, one interesting avenue for further research would be to extend the work to provide inference for the effect size $\tau_P$, such as a (one-sided) confidence interval.
Another potential line of investigation could consider what one should do after the DST rejects a hypothesis, and whether the hunted function could be used to refine the original model.
It would also be useful to extend some of the ideas here to settings with non-i.i.d.\ data, such as grouped data or data with temporal dependence.

\paragraph{Acknowledgements}
AD was supported in part by a grant from G-Research.
FRG was supported in part by NSF Grant DMS-2515385.
FRG would like to thank the Isaac Newton Institute for Mathematical Sciences, Cambridge, for support and hospitality during the programme \emph{Causal inference: From theory to practice and back again} (supported by EPSRC grant no.\ EP/K032208/1) where part of the work on this paper was undertaken.
The authors used Claude Code Fable 5 and Opus 4.8 for assistance with coding and proofreading the manuscript.

\bibliographystyle{plainnat}

\newpage

\newpage
\appendix
\appendixnumbering
\crefalias{section}{appendix}
\crefalias{subsection}{subappendix}
\crefalias{subsubsection}{subsubappendix}

This Supplementary Material consists of the following appendices.
\cref{sec:proofs_hunt} contains proofs and additional details pertaining to \cref{sec:dst}.
\cref{sec:GLM} describes the relationship between the DST and the classical score test for variable significance in generalised linear models.
\cref{sec:theory_proofs} contains the proofs and auxiliary lemmas for \cref{sec:theory} in the main text.
\cref{sec:simu-extra} describes implementation details of our method and relevant methods for comparison and also contains additional numerical results. 

\section{Proofs and additional details relating to \cref{sec:dst}} \label{sec:proofs_hunt}
\subsection{Regularity assumptions} \label{sec:regularity_hunt}
We make use of the following regularity assumptions in the results in \cref{sec:lemma_hunt}.
\begin{enumerate}
    \item $\mathcal{F}$ and $\mathcal{G}$ are closed linear subspaces of $L^2(X)$.
    \item There exists $0 < k \leq K$ such that $k \leq w(f^*(X), X) \leq K$ almost surely.
    \item There exists $0<c \leq C$ such that $c \leq \mathbb{E}(\ell'(f^*(X), D)^2 \given X) \leq C$ almost surely.
\end{enumerate}

\subsection{Lemmas} \label{sec:lemma_hunt}
In the following, we use the following notation.
For a positive random variable $U$, we will denote by $L^2(X,U)$ the set of functions $f : \mathcal{X} \to \mathbb{R}$ for which $\|f\|_U^2 := \mathbb{E}(Uf(X)^2) < \infty$. For $f \in L^2(X)$, we will write $\|f\|^2 := \E (f(X)^2)$.
Also, for a subspace $\mathcal{G} \subset L^2(X,U)$ and $h \in L^2(X,U)$, we will define
\[ \operatorname{proj}_{\mathcal{G}}^{U}(h) := \argmin_{g \in \mathcal{G}} \|h-g\|_U^2, \]
whenever a minimiser exists.
Note that such a minimiser is unique up to almost sure equivalence.

\begin{lemma} \label{lem:eqv_norm}
    Suppose $X \in \mathcal{X}$ and $U \in [0, \infty)$ are random variables such that $k \leq \mathbb{E}(U \given X) \leq K$ for some $0 <k \leq K$.
    Then, we have $L^2(X) = L^2(X,U)$ and the two spaces have equivalent norms.
    In particular, $\mathcal{G} \subset \mathbb{R}^\mathcal{X}$ is closed in $L^2(X)$ if and only if it is closed in $L^2(X,U)$.
\end{lemma}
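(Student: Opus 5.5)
The key identity is the tower property. For any measurable $f:\mathcal{X}\to\R$, $f(X)^2 \geq 0$ is $\sigma(X)$-measurable, so (allowing the value $+\infty$ for nonnegative variables) we have
\[ \|f\|_U^2 = \E\{U f(X)^2\} = \E\{\E(U \given X)\, f(X)^2\}. \]
Using $k \leq \E(U\given X) \leq K$ almost surely, this gives
\[ k\,\E f(X)^2 \;\leq\; \|f\|_U^2 \;\leq\; K\,\E f(X)^2 . \]
No integrability assumption is needed here: the conditional expectation of the nonnegative variable $U f(X)^2$ is defined in the extended sense, and pulling out the known nonnegative factor $f(X)^2$ is valid by monotone convergence.

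Both inclusions of the set equality follow at once. If $\E f(X)^2 < \infty$, then $\|f\|_U^2 \leq K\,\E f(X)^2 < \infty$. Conversely, if $\|f\|_U^2<\infty$, then $\E f(X)^2 \leq k^{-1}\|f\|_U^2 < \infty$. Hence $L^2(X) = L^2(X,U)$ as sets.

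The same two-sided inequality, applied to differences $f-g$, shows that
\[ \sqrt{k}\,\|f-g\| \leq \|f-g\|_U \leq \sqrt{K}\,\|f-g\| , \]
so the norms are equivalent. It also shows that almost-sure equivalence classes coincide: $\|f-g\|_U = 0$ if and only if $\|f-g\|=0$.

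For the closedness statement, equivalent norms induce the same convergent sequences with the same limits. A subset $\mathcal{G}$ is therefore sequentially closed, and hence closed, in one space if and only if it is closed in the other: if $g_n\in\mathcal{G}$ and $g_n\to g$ in one norm, then $g_n \to g$ in the other norm as well.

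The only delicate point is justifying the tower-property step without assuming $U f(X)^2$ is integrable. This is handled by working with nonnegative quantities throughout, or by truncating $f$ and applying monotone convergence.
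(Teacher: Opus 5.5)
Your proposal is correct and follows essentially the same route as the paper: the tower property gives $k\,\E f(X)^2 \le \E\{U f(X)^2\} \le K\,\E f(X)^2$, from which the set equality, the norm equivalence and the closedness claim follow at once. Your added justification of the tower-property step for nonnegative, possibly non-integrable $U f(X)^2$, and your remark that almost-sure equivalence classes coincide, are harmless refinements that the paper leaves implicit.
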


\begin{proof}
    For any measurable $h : \mathcal{X} \to \mathbb{R}$, recall $\|h\|_U^2 := \mathbb{E}(Uh(X)^2)$ and $\|h\|^2 = \mathbb{E}(h(X)^2)$.
    Then, note that
\[ k\|h\|^2 \leq \mathbb{E}(\mathbb{E}(U \given X)h(X)^2) = \mathbb{E}(Uh(X)^2) = \|h\|_U^2\leq K\|h\|^2, \]
    and so in particular the identity map $\operatorname{id} : L^2(X) \to L^2(X,U)$ is well defined and a homeomorphism.
\end{proof}

\begin{lemma}[Orthogonality condition]\label{lem:wls-orth}
    Let $0 \leq w(X) \leq K$ for some $K \geq 0$ and $g \in L^2(X)$.
    Suppose $\mathcal{F}$ is a linear subspace of $L^2(X)$, and let $m_g \in \mathcal{F}$ be a minimiser of the weighted least squares objective
$$ \mathbb{E}[w(X)\{g(X) - f(X)\}^2] $$
    over $f \in \mathcal{F}$.
    Then, $m_g$ satisfies the orthogonality condition
$$ \mathbb{E}[\{g(X)-m_g(X)\}w(X)f(X) ] = 0 \ \forall f \in \mathcal{F}. $$
\end{lemma}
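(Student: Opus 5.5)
The plan is the standard first-order variational argument. Fix any $f \in \mathcal{F}$ and $t \in \mathbb{R}$. Since $\mathcal{F}$ is a linear subspace, $m_g + t f \in \mathcal{F}$. By minimality of $m_g$, the function
\[ \psi(t) := \mathbb{E}[w(X)\{g(X) - m_g(X) - t f(X)\}^2] \]
is minimised at $t = 0$.

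The first step is to check that $\psi$ is a finite quadratic in $t$. Because $0 \leq w(X) \leq K$ and $g, m_g, f \in L^2(X)$, each of $\mathbb{E}[w\,(g-m_g)^2]$, $\mathbb{E}[w\,(g-m_g)f]$ and $\mathbb{E}[w f^2]$ is finite. For the cross term this follows from Cauchy--Schwarz:
\[ \mathbb{E}\bigl|w(X)\{g(X)-m_g(X)\}f(X)\bigr| \leq K\,\|g-m_g\|\,\|f\| . \]
Expanding the square then gives
\[ \psi(t) = \psi(0) - 2t\,a + t^2 b, \qquad a := \mathbb{E}[w(X)\{g(X)-m_g(X)\}f(X)], \quad b := \mathbb{E}[w(X)f(X)^2] \geq 0 . \]

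The second step is to conclude that $a = 0$. Minimality at $t=0$ gives $-2ta + t^2 b \geq 0$ for all real $t$. If $a \neq 0$, choose $t$ with the same sign as $a$ and small enough that $|t|\,b < 2|a|$. Then $-2ta + t^2 b = |t|\,(|t|\,b - 2|a|) < 0$, a contradiction. Equivalently, dividing by $t>0$ and by $t<0$ and letting $t \to 0$ yields $a \leq 0$ and $a \geq 0$. This works even when $b = 0$, for instance when $w$ vanishes on the support of $f$. Since $f \in \mathcal{F}$ was arbitrary, the orthogonality condition holds.

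The only delicate point is the integrability justification in the first step, which is where the boundedness $w \leq K$ is needed. No closedness of $\mathcal{F}$ or lower bound on $w$ is required, because existence of $m_g$ is assumed rather than proved here.
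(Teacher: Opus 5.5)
Your proposal is correct and follows the same route as the paper: perturb the minimiser to $m_g + t f$, observe that $\psi(t)$ is minimised at $t=0$, and conclude that the linear coefficient $\mathbb{E}[\{g(X)-m_g(X)\}w(X)f(X)]$ must vanish. The paper simply asserts that $\psi$ is differentiable with $\psi'(0)=0$, whereas you justify this by writing $\psi$ as an explicit finite quadratic in $t$ (using $w \leq K$ and Cauchy--Schwarz), which is a welcome bit of extra rigour.
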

\begin{proof}
    For $f \in \mathcal{F}$, define $\psi(t) := \mathbb{E}[\{g(X) - m_g(X) - tf(X)\}^2w(X)]$ for $t \in [-1,1]$.
    Then $\psi$ is differentiable and attains its minimum at $t = 0$, and so we must have $\psi'(0) = -2\mathbb{E}[\{g(X)-m_g(X)\}w(X)f(X)] = 0$.
\end{proof}

\begin{lemma}[Existence of projection]\label{lem:proj-existence}
    Let $k \leq w(X) \leq K$ for some $0 < k \leq K$.
    Suppose $\mathcal{F}$ is a closed linear subspace of $L^2(X)$, and $g \in L^2(X)$.
    Then the minimiser of
\[ \mathbb{E}[w(X)\{g(X) - f(X)\}^2] \]
    over $f \in \mathcal{F}$ exists and is unique.
\end{lemma}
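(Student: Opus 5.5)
The plan is to recognise the weighted objective as a squared distance in a Hilbert space and then invoke the Hilbert projection theorem. Take $U = w(X)$. Since $k \leq w(X) \leq K$, we have $k \leq \mathbb{E}(U \given X) \leq K$, so \cref{lem:eqv_norm} applies. It gives $L^2(X) = L^2(X,U)$ as sets, with equivalent norms. Moreover, $\mathcal{F}$, being closed in $L^2(X)$, is also closed in $L^2(X,U)$. The objective $\mathbb{E}[w(X)\{g(X)-f(X)\}^2]$ is exactly $\|g-f\|_U^2$.

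Next I will check that $L^2(X,U)$, after identifying functions that agree almost surely, is a Hilbert space. The form $\langle f_1,f_2\rangle_U := \mathbb{E}[w(X)f_1(X)f_2(X)]$ is bilinear, symmetric and finite by Cauchy--Schwarz. It is positive definite modulo a.s.\ equality, because $\|f\|_U^2 \geq k\|f\|^2$. Completeness follows from the completeness of $L^2(X)$ together with the norm equivalence: a $\|\cdot\|_U$-Cauchy sequence is $\|\cdot\|$-Cauchy, so it converges in $L^2(X)$, and hence also in $\|\cdot\|_U$.

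With this in place, $\mathcal{F}$ is a closed linear subspace of a Hilbert space, and $g \in L^2(X) = L^2(X,U)$. The Hilbert projection theorem then gives a unique $m_g \in \mathcal{F}$ minimising $\|g-f\|_U$ over $f\in\mathcal{F}$. Uniqueness is understood up to almost sure equality, consistent with the convention stated before these lemmas.

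If one prefers to avoid citing the theorem, there is a direct route. Take a minimising sequence $f_n \in \mathcal{F}$. Apply the parallelogram law to $g-f_n$ and $g-f_m$, and use that $(f_n+f_m)/2 \in \mathcal{F}$ has objective at least the infimum. This shows $(f_n)$ is Cauchy in $\|\cdot\|_U$. Closedness of $\mathcal{F}$ gives a limit in $\mathcal{F}$, and continuity of the norm makes the limit a minimiser. The same parallelogram argument applied to two minimisers shows they coincide.

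The only step needing care is completeness and closedness in the weighted norm, and \cref{lem:eqv_norm} handles both. I do not expect a real obstacle here.
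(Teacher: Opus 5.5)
Your proposal is correct and follows essentially the same route as the paper: apply \cref{lem:eqv_norm} with $U = w(X)$ to identify $L^2(X,w(X))$ with $L^2(X)$ as a Hilbert space in which $\mathcal{F}$ remains closed, recognise the objective as $\|g-f\|_{w(X)}^2$, and invoke the Hilbert projection theorem. Your explicit completeness check and the optional parallelogram-law argument just spell out details the paper leaves implicit.
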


\begin{proof}
Firstly, note by \cref{lem:eqv_norm} that $L^2(X,w(X)) \cong L^2(X)$ is a Hilbert space, so that projections onto closed subspaces are well defined.
Moreover, note that the minimiser of the above objective is precisely the projection of $g$ onto $\mathcal{F}$ in $L^2(X,w(X))$ whenever the latter is well defined.
Finally, since $\mathcal{F}$ is closed in $L^2(X)$, it is also closed in $L^2(X,w(X))$ by \cref{lem:eqv_norm} and thus the projection is indeed well defined.
\end{proof}

\subsection{Proofs}
\subsubsection{Proof of \cref{prop:orthog}}
\begin{proof}
    Directly follows from \cref{lem:wls-orth} and \cref{lem:proj-existence}.
\end{proof}

\subsubsection{Proof of \cref{prop:hunt}}
\begin{proof}
 Let us write $R:=\ell'(f^*(X), D)$ for notational ease.
 Suppose first that $\hat{h}$ minimises \eqref{eq:recip_wls} over $h \in \mathcal{G}$.
 Note that then $m := \E [(R \hat{h}(X))^2] < \infty$ and so writing $\hat{H} = \hat{h}(X)$, we see that $\hat{H}$ minimises
\begin{equation} \label{eq:H_optim}
\Xi(H) := \E[R^2 (R^{-1} - H)^2] = 1 - 2 \E (RH) + \E(R^2H^2)
\end{equation}
    over $H \in \mathcal{H} := \{h(X) : h \in \mathcal{G} \text{ and } R h(X) \in L^2(X)\}$.
    Observe that then $\hat{H}$ maximises $\E (RH)$ over $H \in \mathcal{H}$ subject to $\E(R^2H^2) = m$.

    Now $\Lambda(ch) = \Lambda(h)$ for any $c>0$, so in order to maximise $\Lambda$, it suffices to maximise it over $h$ such that $\E (R^2h(X)^2) = m$.
    In other words, we may maximise
\[ \frac{\E(RH)}{\sqrt{m - [\E(RH)]^2}} \]
    over $H \in \mathcal{H}$ with $\E (R^2 H^2) = m$.
    The above is monotone increasing in $\E(RH)$ when this is non-negative, so it suffices to maximise $\E(RH)$ over $H \in \mathcal{H}$ with $\E (R^2 H^2) = m$, and hence $\hat{H}$ is a maximiser.

    To prove the final part of the statement, suppose now that $\hat{h}$ maximises $\Lambda(h)$ over $h \in \mathcal{G}$.
    Given $H \in \mathcal{H}$, let $m := \E(R^2 H^2)$.
    Let $\tilde{H} := m^{1/2} \hat{h}(X) / \sqrt{\E(R^2 \hat{h}(X)^2)}$, so $\E(R^2\tilde{H}^2) = m$.
    Furthermore, let $\hat{H} = \E(R \hat{h}(X)) \hat{h}(X) / \sqrt{\E(R^2 \hat{h}(X)^2)}$, so $\Xi(\hat{H}) = \inf_{c >0} \Xi(c\, \hat{h}(X))$.
    Then
\[ \Xi(H) \geq \Xi(\tilde{H}) \geq \Xi(\hat{H}), \]
    but $H$ was arbitrary so $\hat{H}$ is a minimiser of $\Xi$ over $\mathcal{H}$.
\end{proof}

\subsubsection{Proof of \cref{prop:opt_hunt}}
\begin{proof}
 Let us write $R:=\ell'(f^*(X), D)$ for notational ease. 
 Note that the spaces
  \[
  \mathcal{G}:=L^2(X), \qquad L^2(X,R^2), \qquad L^2(X,v(X)), \qquad
  \text{and} \qquad L^2(X,w(X))
  \]
  all coincide and have equivalent norms by \cref{lem:eqv_norm}, as $c \leq v(X) := \mathbb{E}(R^2 \given X) \leq C$ and $k \leq w(X) \leq K$. By definition, $h^* = \operatorname{proj}^{R^2}_{\mathcal{G}}(R^{-1})$ and $h^*_{\mathrm{DB}} = \operatorname{proj}^{R^2}_{\mathcal{G}_{\mathrm{DB}}}(R^{-1})$ whenever the projections exist, and are thus unique.
 Also, $\mathcal{G}_{\DB}$ is the orthogonal complement of $\mathcal{F}$ in $L^2(X,w(X))$ and is thus closed in $L^2(X,w(X))$, and hence in $L^2(X,R^2)$. Thus, the projections above are indeed well defined by Hilbert's Projection Theorem, and thus $h^*$ and $h^*_{\mathrm{DB}}$ are well defined.

Now, since $\mathcal{G}_{\mathrm{DB}} \subset \mathcal{G}$, we have that $h^*_{ \mathrm{DB}} = \operatorname{proj}^{R^2}_{\mathcal{G}_{\mathrm{DB}}}(h^*)$.
Unravelling the definition of the projection, we see thus
\begin{align*}
h^*_{\mathrm{DB}} &= \argmin_{g \in \mathcal{G}_{\mathrm{DB}}} \mathbb{E}(R^2 \{h^*(X)-g(X)\}^2) = \argmin_{g \in \mathcal{G}_{\mathrm{DB}}} \mathbb{E}(v(X) \{h^*(X)-g(X)\}^2),
\end{align*}
so that $h^*_{\mathrm{DB}} = \operatorname{proj}^{v}_{\mathcal{G}_\mathrm{DB}}(h^*)$.
Again, by \cref{lem:eqv_norm}, $\mathcal{G}_{\mathrm{DB}}$ is closed in $L^2(X,v(X))$, and so we have that $h^* - h^*_{\mathrm{DB}} = \operatorname{proj}^{v}_{\mathcal{G}_{\mathrm{DB}}^{\perp}}(h^*)$, where $\mathcal{G}_{\mathrm{DB}}^{\perp}$ is the orthogonal complement of $\mathcal{G}_{\mathrm{DB}}$ in $L^2(X,v(X))$.

Now, note that $h \in \mathcal{G}_{\mathrm{DB}}$ if and only if
\[ \mathbb{E}(h(X)f(X)w(X)) = 0 \ \forall f \in \mathcal{F}, \]
which is the same as asking
\[ \mathbb{E}(v(X)h(X)g(X)) = 0 \ \forall g \in w\mathcal{F}/v. \]
Hence, we observe that $\mathcal{G}_{\mathrm{DB}}$ is the orthogonal complement of $w\mathcal{F}/v$ in $L^2(X,v(X))$. Moreover, $w\mathcal{F}/v$ is closed in $L^2(X,v(X))$. Indeed, the map $f\mapsto wf/v$ is a homeomorphism from
  $\mathcal{F}$ onto its image. Since $\mathcal{F}$ is closed,
  $w\mathcal{F}/v$ is closed in $L^2(X,v(X))$,
and thus $\mathcal{G}_{\mathrm{DB}}^{\perp} = w\mathcal{F}/v.$ Then, writing $h^*-h^*_{\mathrm{DB}} = wf^*_{\mathrm{DB}}/v$, we have that
\[ f^*_{\mathrm{DB}} = \argmin_{f \in \mathcal{F}} \mathbb{E}\!\left[ \frac{w(X)^2}{v(X)} \left( \frac{v(X)h^*(X)}{w(X)} - f(X) \right)^2 \right]. \]
\end{proof}

\section{Comparison with the classical score test for generalised linear models} \label{sec:GLM}
It is instructive to compare the DST to the classical score test in the case of a generalised linear model with canonical link. Suppose $X = (T, Z) \in \R \times \R^p$ and we wish to test the significance of the predictor of interest $T$. In this case, the score test based on the data $I_1$ would take the form of a studentised version of
\begin{equation} \label{eq:classical_score1}
\frac{1}{\sqrt{n}} \sum_{i \in I_1} T_i \{Y_i - \mu(\tilde{c} + Z_i^{\top} \tilde{\beta} )\}.
\end{equation}
Here $(\tilde{c}, \tilde{\beta})$ correspond to maximum likelihood estimates in the restricted model where the coefficient corresponding to $T$ is set to zero. These satisfy the score equations
\[
\sum_{i \in I_1} Z_i \{Y_i - \mu(\tilde{c} + Z_i^{\top} \tilde{\beta} )\} = 0, \qquad \sum_{i \in I_1} \{Y_i - \mu(\tilde{c} + Z_i^{\top} \tilde{\beta} )\} = 0,
\]
so in particular, \eqref{eq:classical_score1} is equivalent to
\begin{equation} \label{eq:classical_score2}
\frac{1}{\sqrt{n}} \sum_{i \in I_1} \{T_i - k - q^{\top} Z_i\}\{Y_i - \mu(\tilde{c} + Z_i^{\top} \tilde{\beta} )\}
\end{equation}
for any $(k, q) \in \R \times \R^p$.

Now consider applying the DST with the particularly simple choice of $\hat{h}$ given by $\hat{h}(X_i) = \tilde{s} T_i$ where $\tilde{s} = \sgn(\sum_{i \in I_1} \{Y_i - \mu(\tilde{c} + Z_i^{\top} \tilde{\beta} )\} )$\footnote{Note that our framework places no explicit restriction on the form of $\hat{h}$, though as discussed earlier, typically to gain power against diverse alternatives, we recommend the use of flexible regression methods in constructing it.}.
This toy version of the DST yields a test statistic of the form given by \eqref{eq:classical_score2}, though multiplied by $\tilde{s}$ and with the sum over $i \in I_2$ (the debiasing function $\hat{m}_{\hat{h}} \in \mathcal{F}$ is necessarily of the form $\hat{m}_{\hat{h}}(X_i) = k + q^{\top}Z_i$ for some $k$ and $q$). Another difference is that the DST uses two samples and performs a one-sided rather than a two-sided test, with $\tilde{s}$ carrying information from the first sample about the direction in which to test.
Although avoiding sample splitting makes the classical score test more efficient in this very simple setting, the independence created by sample splitting permits $\hat{h}$ to be chosen essentially arbitrarily.
In settings with diverse alternatives, this flexibility allows $\hat{h}$ to be constructed using powerful regression or machine learning methods, yielding tests with good power against a much broader range of departures from the null.

\section{Proofs of results in \cref{sec:theory}} \label{sec:theory_proofs}
In this section, we present proofs of all our results in \cref{sec:theory} and related auxiliary lemmas.

\subsection{Lemmas} \label{sec:lemmas}
\begin{lemma}[Loss Lipschitz bound] \label{lem:loss_reg}
    Consider the setup in \cref{sec:theory}, and suppose Assumption~\ref{assump4_bdd-weights} holds.
    Then, for each $\eta,\zeta \in \mathbb{R}$, we have that
\[ \sup_{P \in \mathcal{P}}\mathbb{E}_P[|\ell'(\eta,D)-\ell'(\zeta,D)| \given X] \leq K|\eta -\zeta|. \]
\end{lemma}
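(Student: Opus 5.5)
The plan is to use the monotonicity of subgradients of a convex function. This removes the absolute value, after which the conditional expectation reduces to a difference of the smoothed score $\phi_P$, which Assumption~\ref{assump4_bdd-weights} controls.

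\emph{Step 1 (monotonicity of $\ell'$).} Fix $d \in \mathcal{D}$ and $\eta, \zeta \in \R$. Apply the subgradient inequality defining $\ell'$ twice:
\[ \ell(\eta,d) \geq \ell(\zeta,d) + \ell'(\zeta,d)(\eta-\zeta), \qquad \ell(\zeta,d) \geq \ell(\eta,d) + \ell'(\eta,d)(\zeta-\eta). \]
Adding these gives $\{\ell'(\eta,d)-\ell'(\zeta,d)\}(\eta-\zeta) \geq 0$. Without loss of generality take $\eta \geq \zeta$; the case $\eta=\zeta$ is trivial and the other case follows by symmetry. Then $\ell'(\eta,D) - \ell'(\zeta,D) \geq 0$ pointwise, so
\[ |\ell'(\eta,D)-\ell'(\zeta,D)| = \ell'(\eta,D)-\ell'(\zeta,D). \]

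\emph{Step 2 (reduce to $\phi_P$).} Take conditional expectations given $X$ under any $P \in \mathcal{P}$. By linearity,
\[ \E_P[|\ell'(\eta,D)-\ell'(\zeta,D)| \given X] = \phi_P(\eta,X) - \phi_P(\zeta,X), \]
with $\phi_P$ as in \eqref{eq:phi_w_def}. Linearity is legitimate because Assumption~\ref{assump4_bdd-weights} presupposes that $\phi_P(\cdot,x)$ is a well-defined, finite, differentiable function, so each conditional expectation is finite.

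\emph{Step 3 (mean value theorem).} Since $\eta \mapsto \phi_P(\eta,x)$ is differentiable everywhere with $|\phi_P'| \leq K$, the mean value theorem gives $\phi_P(\eta,x)-\phi_P(\zeta,x) \leq K|\eta-\zeta|$ for every $x$. The bound is uniform in $P$, so taking the supremum over $P \in \mathcal{P}$ finishes the proof.

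The only delicate point is the integrability and measurability in Step 2, namely that the conditional expectations are finite so the difference can be split. I would handle this by interpreting Assumption~\ref{assump4_bdd-weights} as including that $\phi_P(\eta,x)$ is finite for all $\eta$. Alternatively, Assumption~\ref{assump:risk_exist} with constant $g$ gives square integrability of $\ell'(\eta,D)$, provided constant functions lie in $\mathcal{G}_P$.
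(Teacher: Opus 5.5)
Your proof is correct and follows the paper's argument essentially step for step. Monotonicity of the subgradient removes the absolute value, the conditional expectation becomes $\phi_P(\eta,X)-\phi_P(\zeta,X)$, and the mean value theorem with $|\phi_P'|\le K$ finishes. Your explicit derivation of monotonicity from the two subgradient inequalities, and your remark on finiteness of the conditional expectations, are small points the paper leaves implicit.
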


\begin{proof}
Since $\eta \mapsto \ell(\eta,d)$ is convex for each $d$, the path of subgradients $\eta \mapsto \ell'(\eta,d)$ is non-decreasing.
Hence, for any $\eta,\zeta \in \mathbb{R}$,
\begin{align*}
|\ell'(\eta,D)-\ell'(\zeta,D)| &= \{\ell'(\eta,D)-\ell'(\zeta,D)\}\,\ind(\eta \ge \zeta) \\
&\quad + \{\ell'(\zeta,D)-\ell'(\eta,D)\}\,\ind(\zeta > \eta).
\end{align*}

Taking conditional expectations given $X$, we obtain
\[ \mathbb{E}_P\!\left[ |\ell'(\eta,D)-\ell'(\zeta,D)| \,\big|\, X \right] = |\phi_P(\eta,X)-\phi_P(\zeta,X)|. \]

By Assumption~\ref{assump4_bdd-weights}, and the mean value theorem,
\[ |\phi_P(\eta,X)-\phi_P(\zeta,X)| \le \sup_{\theta\in\mathbb{R}} |\phi'_P(\theta,X)|\,|\eta-\zeta| \le K\,|\eta-\zeta|, \]
which concludes the proof.
\end{proof}

\begin{lemma}[Derivatives of risk]\label{lem:riskfunc}
    Consider the setup in \cref{sec:theory}, where $P \in \mathcal{P}$, and define $\forall g,h \in \mathcal{G}_P$ the function $\gamma_P : (-1,1) \to \mathbb{R}$ by $\gamma_P(t) := R_P (g +th)$.
    Then, $\gamma_P$ is twice differentiable at $0$, with $\gamma_P'(0) = \mathbb{E}_P[\ell'(g(X),D)h(X)]$ and $\gamma_P''(0) = \mathbb{E}_P[\phi'_P(g(X), X)h(X)^2]$.
\end{lemma}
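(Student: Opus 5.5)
We prove the two derivative claims directly from the definition of $\gamma_P$, by sandwiching difference quotients with the subgradient inequality and then applying dominated convergence.

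\textbf{First derivative.} Fix $g,h\in\mathcal{G}_P$. By \cref{assump:risk_exist}, $\gamma_P(t)=\mathbb{E}_P\,\ell(g(X)+th(X),D)$ is finite for $|t|<1$, because $g+th\in\mathcal{G}_P$. The subgradient inequality, applied at $\eta_1=g(X)$ and at $\eta_1=g(X)+th(X)$, gives, for $t>0$,
\[ \ell'(g(X),D)h(X)\le \frac{\ell(g(X)+th(X),D)-\ell(g(X),D)}{t}\le \ell'(g(X)+th(X),D)h(X). \]
The inequalities reverse for $t<0$. Taking expectations (all terms are integrable by Cauchy--Schwarz and \cref{assump:risk_exist}) yields
\[ \mathbb{E}_P[\ell'(g,D)h]\le \frac{\gamma_P(t)-\gamma_P(0)}{t}\le \mathbb{E}_P[\ell'(g+th,D)h]. \]
It then suffices to show $\mathbb{E}_P[\ell'(g(X)+th(X),D)h(X)]\to\mathbb{E}_P[\ell'(g(X),D)h(X)]$ as $t\to0$. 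Conditioning on $X$, the difference is
\[ \mathbb{E}_P\big[\{\phi_P(g(X)+th(X),X)-\phi_P(g(X),X)\}h(X)\big]. \]
By the mean value theorem and Assumption~\ref{assump4_bdd-weights}, the integrand has absolute value at most $K|t|h(X)^2$, which tends to zero since $h\in L^2_P$. (Alternatively one can use \cref{lem:loss_reg}.) The same argument at any base point $s\in(-1,1)$, with $g$ replaced by $g+sh$, shows more generally that $\gamma_P'(s)=\mathbb{E}_P[\phi_P(g(X)+sh(X),X)h(X)]$.

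\textbf{Second derivative.} Using the formula for $\gamma_P'(s)$ just obtained, we compute
\[ \frac{\gamma_P'(s)-\gamma_P'(0)}{s}=\mathbb{E}_P\Big[\frac{\phi_P(g(X)+sh(X),X)-\phi_P(g(X),X)}{s}\,h(X)\Big]. \]
The integrand converges pointwise to $\phi_P'(g(X),X)h(X)^2$, because $\phi_P(\cdot,x)$ is differentiable by Assumption~\ref{assump4_bdd-weights}. It is dominated by $K h(X)^2$ by the mean value theorem with $|\phi'_P|\le K$. Dominated convergence then gives $\gamma_P''(0)=\mathbb{E}_P[\phi_P'(g(X),X)h(X)^2]$.

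The main obstacle is the first step when $\ell$ is not differentiable. There we cannot differentiate under the integral directly and must instead use the subgradient sandwich. We also need the fact that $\mathbb{E}_P[\ell'(\eta,D)\mid X=x]=\phi_P(\eta,x)$ holds jointly measurably when evaluated at random $\eta=g(X)+th(X)$. This holds because the plugged-in point is a function of $X$ and $\phi_P$ is defined pointwise as a conditional expectation; we take it as implicit in the paper's setup.
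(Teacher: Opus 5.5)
Your proof is correct, and its skeleton matches the paper's. You establish $\gamma_P'(0)$, observe that the same argument at base point $g+sh$ gives $\gamma_P'(s)=\mathbb{E}_P[\phi_P(g(X)+sh(X),X)h(X)]$ for all $s\in(-1,1)$, and then get $\gamma_P''(0)$ by dominated convergence with envelope $Kh(X)^2$.

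The one real difference is the first-order step.
\begin{itemize}
\item The paper writes $\ell(g(X)+th(X),D)-\ell(g(X),D)=\int_0^t \ell'(g(X)+uh(X),D)h(X)\,du$. It bounds the remainder by $Kt^2\mathbb{E}_P[h(X)^2]/2$ using \cref{lem:loss_reg}, which relies on monotonicity of the subgradient to control $\mathbb{E}_P[|\ell'(\eta,D)-\ell'(\zeta,D)|\mid X]$.
\item You instead trap the difference quotient between $\mathbb{E}_P[\ell'(g(X),D)h(X)]$ and $\mathbb{E}_P[\ell'(g(X)+th(X),D)h(X)]$, using only the defining subgradient inequality at the two endpoints. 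You then need only continuity of $t\mapsto\mathbb{E}_P[\ell'(g(X)+th(X),D)h(X)]$, which you get from $\phi_P$ and the mean value theorem without absolute values inside the conditional expectation.
\end{itemize}

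Your route is more self-contained. It does not need \cref{lem:loss_reg}, and it avoids the unstated facts behind the paper's integral representation:
\begin{itemize}
\item a finite convex function is absolutely continuous, with any subgradient selection as its a.e.\ derivative;
\item the interchange of $\mathbb{E}_P$ and $\int du$ is justified, which requires joint measurability in $(u,\omega)$.
\end{itemize}
The paper's route, in exchange, gives the quantitative expansion $|\gamma_P(t)-\gamma_P(0)-t\gamma_P'(0)|\le Kt^2\mathbb{E}_P[h(X)^2]/2$, slightly more than the lemma asks for.

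Both arguments use the identification $\mathbb{E}_P[\ell'(g(X)+th(X),D)\mid X]=\phi_P(g(X)+th(X),X)$. You rightly flag this; the paper uses it without comment.
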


\begin{proof}
Recall that $R_P(g) = \mathbb{E}_P(\ell(g(X),D))$, so that we have
\begin{align*}
\gamma_P(t) &= \mathbb{E}_P[\ell(g(X) +th(X),D)] \\
&= \mathbb{E}_P\left[\ell(g(X),D)+\int_0^t\ell'(g(X) + u h(X),D)h(X)du\right] {\text{(Fundamental Theorem of Calculus)}}{}\\
&= \gamma_P(0) +t\mathbb{E}_P[\ell'(g(X),D)h(X)] +{\mathbb{E}_P\left[\int_0^t\{\ell'(g(X) + u h(X),D) - \ell'(g(X),D)\}h(X)du \right] }
\end{align*}
 Now note that
  \begin{align*}
  &\left|
  \mathbb{E}_P\!\left[
  \int_0^t
  \{\ell'(g(X)+uh(X),D)-\ell'(g(X),D)\}h(X)\,du
  \right]
  \right|\\
  &\qquad\le
  \int_{\min(0,t)}^{\max(0,t)}
  \mathbb{E}_P\!\left[
  |\ell'(g(X)+uh(X),D)-\ell'(g(X),D)|\,|h(X)|
  \right]du.
  \end{align*}
  By \cref{lem:loss_reg}, for every \(u\in(-1,1)\),
  \[
  \mathbb{E}_P\!\left[
  |\ell'(g(X)+uh(X),D)-\ell'(g(X),D)|
  \,\middle|\,X
  \right]
  \le K|u|\,|h(X)|.
  \]
  Therefore,
  \begin{align*}
  &\left|
  \mathbb{E}_P\!\left[
  \int_0^t
  \{\ell'(g(X)+uh(X),D)-\ell'(g(X),D)\}h(X)\,du
  \right]
  \right|\\
  &\qquad\le
  \int_{\min(0,t)}^{\max(0,t)}
  K|u|\,\mathbb{E}_P[h(X)^2]\,du\\
  &\qquad=
  \frac{Kt^2}{2}\,\mathbb{E}_P[h(X)^2].
  \end{align*}

Thus, we have that
\begin{equation*}
\gamma_P(t) = \gamma_P(0) + t\mathbb{E}_P[\ell'(g(X),D)h(X)] + o(t),
\end{equation*}
and it follows that $\gamma_P'(0) = \mathbb{E}_P[\ell'(g(X),D)h(X)]$.
The same argument shows that
\[ \gamma'_P(t) = \mathbb{E}_P[\ell'(g(X)+th(X),D)h(X)] = \mathbb{E}_P[\phi_P(g(X)+th(X),X)h(X)]. \]
Note that $|\phi_P'(g(X) + th(X),X)h(X)^2| \leq Kh(X)^2$, and so differentiating under the integral sign gives
\[ \gamma''_P(t) = \mathbb{E}_P[\phi'_P(g(X)+th(X), X)h(X)^2]. \]
It follows that
\begin{align*}
\gamma'_P(0) &= \mathbb{E}_P[\ell'(g(X),D)h(X)] \\
\gamma''_P(0)&= \mathbb{E}_P[\phi'_P(g(X), X)h(X)^2]. \qedhere
\end{align*}
\end{proof}

\begin{lemma}[First order condition for risk minimisers] \label{lem:risklocmin}
Consider the setup in \cref{sec:theory}, and suppose $\mathcal{G}_P$ is a subspace of $L^2_P(X)$.
Then
\[
g^*_P \in \argmin_{g \in \mathcal{G}_P} R_P(g) \Longleftrightarrow \mathbb{E}_P[\ell'(g^*_P(X),D)h(X)] = 0 \qquad \forall h \in \mathcal{G}_P.
\]
\end{lemma}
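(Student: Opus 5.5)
The plan is to reduce both directions to the one-dimensional function $\gamma_P(t) := R_P(g^*_P + t h)$ from \cref{lem:riskfunc}, exploiting convexity of $\ell$ in its first argument. Fix $h \in \mathcal{G}_P$. By \cref{assump:risk_exist}, $R_P$ is finite on $\mathcal{G}_P$, and since $\mathcal{G}_P$ is a linear subspace, $g^*_P + th \in \mathcal{G}_P$ for all real $t$. Thus $\gamma_P$ is a well-defined real function on all of $\mathbb{R}$, not merely on $(-1,1)$. It is convex, because it is an expectation of functions $t \mapsto \ell(g^*_P(X)+th(X),D)$ that are convex in $t$.

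For ($\Rightarrow$), suppose $g^*_P$ minimises $R_P$ over $\mathcal{G}_P$. Then $t=0$ minimises $\gamma_P$ over $(-1,1)$. By \cref{lem:riskfunc}, $\gamma_P$ is differentiable at $0$ with $\gamma_P'(0) = \mathbb{E}_P[\ell'(g^*_P(X),D)h(X)]$, so this quantity must vanish. Alternatively, one can use the bounds $\gamma_P(t)-\gamma_P(0) \ge 0$ for $t$ of both signs, divide by $t$, and let $t\to 0^{\pm}$ via the expansion in the proof of \cref{lem:riskfunc}.

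For ($\Leftarrow$), I would avoid the Taylor expansion altogether and use the subgradient inequality pointwise. For any $g \in \mathcal{G}_P$,
\[ \ell(g(X),D) \ge \ell(g^*_P(X),D) + \ell'(g^*_P(X),D)\{g(X)-g^*_P(X)\}. \]
Take expectations, noting that every term is integrable: $\ell(\cdot)$ terms by \cref{assump:risk_exist}, and the product by Cauchy--Schwarz, since $\mathbb{E}_P\ell'(g^*_P(X),D)^2<\infty$ and $g - g^*_P \in L^2_P(X)$. This gives $R_P(g) \ge R_P(g^*_P) + \mathbb{E}_P[\ell'(g^*_P(X),D)\,h(X)]$ with $h := g - g^*_P \in \mathcal{G}_P$. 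The last term is zero by hypothesis, so $g^*_P$ is a minimiser.

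The only point needing care is the ($\Rightarrow$) direction. There, one must justify that the derivative of the risk equals the expected subgradient, even though $\ell'$ is only a subgradient and $\ell$ may be nondifferentiable (e.g.\ the check loss). This is exactly what \cref{lem:riskfunc} provides, using Assumption~\ref{assump4_bdd-weights} through \cref{lem:loss_reg}, so I would simply invoke it. If one wished to avoid that assumption, convexity alone gives one-sided derivatives $\gamma_P'(0^\pm)$ sandwiching $\mathbb{E}_P[\ell'h]$, but that would not force equality; hence I would rely on \cref{lem:riskfunc}.
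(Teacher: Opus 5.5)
Your proof is correct. For ($\Rightarrow$) you argue as the paper does: restrict $R_P$ to the line $t \mapsto g^*_P + th$ and use \cref{lem:riskfunc} to identify $\gamma_P'(0)$ with $\mathbb{E}_P[\ell'(g^*_P(X),D)h(X)]$.

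For ($\Leftarrow$) your route differs. The paper again works through $\gamma_P$: it combines convexity of $\gamma_P$ with the differentiability at $0$ supplied by \cref{lem:riskfunc}, and hence by Assumption~\ref{assump4_bdd-weights}. A vanishing derivative then gives a minimum at $0$ along every segment, and therefore over $\mathcal{G}_P$. You instead integrate the defining subgradient inequality directly. This needs only the integrability in \cref{assump:risk_exist} plus Cauchy--Schwarz, and it gives the global bound $R_P(g) \ge R_P(g^*_P) + \mathbb{E}_P[\ell'(g^*_P(X),D)\{g(X)-g^*_P(X)\}]$ in one step, with no reduction to line segments.

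What your version buys is that sufficiency of the score equations holds with no smoothness of $\phi_P$ at all. It also correctly isolates necessity as the only direction that genuinely needs differentiability of the smoothed score. Without it, a fixed choice of subgradient can have non-zero expected score at a minimiser: take absolute loss with $Y$ having an atom at its median. The paper's version is more compact. It handles both directions with the single fact that a convex function differentiable at an interior point is minimised there if and only if its derivative vanishes.
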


\begin{proof}
    
Note that $g^*_P$ minimises the risk over $\mathcal{G}_P$ if and only if it minimises the risk over all segments of the form $\{g^*_P +th, t \in (-1,1) \}$, where $h \in \mathcal{G}_P$.
To this end, define $\gamma_P : (-1,1) \to \mathbb{R}$ by
\[ \gamma_P(t) = R_P(g^*_P + th). \]

As we assume the loss function $\ell(\cdot,d)$ to be convex for each $d\in \mathcal{D}$, it follows that $\gamma_P$ is also convex.
It is thus enough to show that $\gamma_P$ is minimised at $0$, but by convexity this is the case if and only if $\gamma_P'(0) = 0$, and by \cref{lem:riskfunc} we have $\gamma_P'(0) = \mathbb{E}_P[\ell'(g^*_P(X),D)h(X)]$.
\end{proof}
 
\begin{lemma}[Bound for remainder term] \label{lem:rem_term}
    Consider the setup in \cref{sec:theory}.
    Then, for fixed $x \in \mathcal{X}$ and $y_1, y_2 \in \mathbb{R}$, $|\phi_P(y_2,x) - \phi_P(y_1,x) -\phi'_P(y_1,x)(y_2-y_1)| \leq L'(y_2 -y_1)^2$.
\end{lemma}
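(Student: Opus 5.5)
The plan is to obtain the bound from the fundamental theorem of calculus applied to $\eta \mapsto \phi_P(\eta,x)$, and then use the Lipschitz continuity of its derivative from Assumption~\ref{assump4_lipschitz-weights}. Fix $x \in \mathcal{X}$ and $y_1, y_2 \in \mathbb{R}$. By Assumption~\ref{assump4_bdd-weights}, $\phi_P(\cdot,x)$ is differentiable everywhere. By Assumption~\ref{assump4_lipschitz-weights}, its derivative $\phi'_P(\cdot,x)$ is $L'$-Lipschitz, hence continuous. So $\phi_P(\cdot,x)$ is continuously differentiable, and we may write
\[ \phi_P(y_2,x) - \phi_P(y_1,x) = \int_0^1 \phi'_P\bigl(y_1 + t(y_2-y_1),x\bigr)(y_2-y_1)\,dt. \]

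Subtracting $\phi'_P(y_1,x)(y_2-y_1) = \int_0^1 \phi'_P(y_1,x)(y_2-y_1)\,dt$ from both sides, the quantity to be bounded equals
\[ \left|\int_0^1 \bigl\{\phi'_P\bigl(y_1 + t(y_2-y_1),x\bigr) - \phi'_P(y_1,x)\bigr\}(y_2-y_1)\,dt\right|. \]
The Lipschitz property bounds the integrand in absolute value by $L' t\,(y_2-y_1)^2$. Integrating over $t \in [0,1]$ then gives the bound $\tfrac{L'}{2}(y_2-y_1)^2$, which is at most $L'(y_2-y_1)^2$, as claimed.

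No step here is a real obstacle. The one point needing care is that the FTC representation requires $\phi'_P(\cdot,x)$ to be integrable, and this follows from its continuity, which Lipschitz continuity supplies. If one prefers to avoid integrals, an alternative is to apply the mean value theorem to $g(t) := \phi_P(y_1 + t(y_2-y_1),x) - t\,\phi'_P(y_1,x)(y_2-y_1)$ on $[0,1]$. This yields a $\theta \in (0,1)$ with $|g(1)-g(0)| = |\phi'_P(y_1+\theta(y_2-y_1),x) - \phi'_P(y_1,x)|\,|y_2-y_1| \le L'\theta(y_2-y_1)^2 \le L'(y_2-y_1)^2$, directly giving the stated constant.
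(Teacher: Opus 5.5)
Your proof is correct and follows essentially the paper's approach. The paper uses your fallback argument: the mean value theorem gives $\phi_P(y_2,x)-\phi_P(y_1,x)=\phi'_P(y,x)(y_2-y_1)$ for some $y$ between $y_1$ and $y_2$, and then the $L'$-Lipschitz bound with $|y-y_1|\le|y_2-y_1|$ finishes. Your primary integral-form argument is the same first-order Taylor-remainder idea; it needs continuity of $\phi'_P(\cdot,x)$, which Lipschitz continuity supplies, and in return gives the sharper constant $L'/2$.
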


\begin{proof}
By the mean value theorem, $\phi_P(y_2,x) - \phi_P(y_1,x) = \phi'_P(y,x)(y_2-y_1)$ for some $y$ between $y_1$ and $y_2$.
Hence,
\begin{align*}
|\phi_P(y_2,x) - \phi_P(y_1,x) -\phi'_P(y_1,x)(y_2-y_1)| &= |\phi'_P(y,x)-\phi'_P(y_1,x)|\,|y_1-y_2| \\
&\leq L'|y-y_1|\,|y_2 - y_1| \ (\text{Using Assumption \ref{assump4_lipschitz-weights}})\\
&\leq L'(y_2 - y_1)^2. \qedhere
\end{align*}
\end{proof}

\begin{lemma}[Risk for conditional mean specification]\label{lem:loss-gam}
    Let $\mu : \mathbb{R} \to \mathbb{R}$ be an $L$-Lipschitz strictly increasing function, and let $K$ be an antiderivative of $\mu$.
    Define the loss function $\ell : \mathbb{R} \times \mathcal{D} \to \mathbb{R}$ by
\begin{equation*}
\ell(\eta,d) = -\eta y+K(\eta),
\end{equation*}
    where $\mathcal{D} = \mathbb{R}^d \times\mathbb{R}$.
    Suppose $D= (X,Y) \in \mathcal{D}$ has distribution $P$, with $\var_P(Y) < \infty$.
    Suppose $\mathbb{E}_P(Y \given X) = \mu(g^*(X))$ for some $g^* \in L^2_P(X)$.
    Then, the risk
\[ R_P(g) := \mathbb{E}_P(\ell(g(X),D)) \]
    is well-defined $\forall g \in L^2_P(X)$, convex and
\[ g^*\in \argmin_{g \in L^2_P(X)} R_P(g). \]
\end{lemma}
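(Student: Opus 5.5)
We need three things: the risk is well-defined on $L^2_P(X)$, it is convex, and $g^*$ minimises it. The plan is to show each in turn using only the quadratic growth of $K$ and the tower property.

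\emph{Well-definedness.} Since $\mu$ is $L$-Lipschitz, we have $|\mu(\eta)| \le |\mu(0)| + L|\eta|$. Integrating from $0$ gives
\[
|K(\eta)| \le |K(0)| + |\mu(0)||\eta| + L\eta^2/2 .
\]
Hence for $g \in L^2_P(X)$,
\[
\E_P|\ell(g(X),D)| \le \E_P|g(X)Y| + \E_P|K(g(X))| .
\]
The first term is at most $\{\E_P g(X)^2\}^{1/2}\{\E_P Y^2\}^{1/2} < \infty$ by Cauchy--Schwarz and $\var_P(Y)<\infty$. The second term is finite by the quadratic growth of $K$. So $R_P(g)$ is finite.

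\emph{Convexity.} Since $K' = \mu$ is increasing, $K$ is convex. Therefore $\eta \mapsto -\eta y + K(\eta)$ is convex for each $d$. Pointwise convexity is preserved under expectation, so for $g_1,g_2\in L^2_P(X)$ and $t\in[0,1]$,
\[
R_P(tg_1+(1-t)g_2)\le tR_P(g_1)+(1-t)R_P(g_2).
\]

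\emph{Minimality.} Here I avoid \cref{lem:risklocmin}, since its hypotheses (Assumption~\ref{assump4_bdd-weights}) are not assumed in this lemma, and argue directly. Fix $g\in L^2_P(X)$. Convexity of $K$ gives the pointwise tangent inequality
\[
K(g(X)) \ge K(g^*(X)) + \mu(g^*(X))\{g(X)-g^*(X)\}.
\]
Therefore
\[
\ell(g(X),D)-\ell(g^*(X),D) \ge -\{g(X)-g^*(X)\}Y + \mu(g^*(X))\{g(X)-g^*(X)\}.
\]
Both sides are integrable: $g-g^*\in L^2_P(X)$ and $Y\in L^2_P$. Also $\mu(g^*(X))=\E_P(Y\given X)\in L^2_P$ by conditional Jensen. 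Take expectations and condition on $X$ in the term $\E_P[\{g(X)-g^*(X)\}Y]$; this replaces $Y$ by $\mu(g^*(X))$, so the right-hand side has expectation $0$. Hence $R_P(g)\ge R_P(g^*)$.

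The only delicate point is the integrability bookkeeping that justifies splitting expectations and using the tower property. This is handled by the quadratic bound on $K$ and the square-integrability of $Y$ and of $\mu(g^*(X))$; the rest is routine.
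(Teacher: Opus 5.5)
Your proof is correct and follows the paper's argument in all three parts. The quadratic growth bound on $K$ is the same (you integrate $\mu$ from $0$ where the paper uses the mean value theorem), convexity again comes from $K'=\mu$ being increasing, and minimality again rests on convexity of $K$ together with $\mathbb{E}_P(Y\given X)=\mu(g^*(X))$. The paper minimises the conditional risk $\eta\mapsto -\eta\,\mathbb{E}_P(Y\given X)+K(\eta)$ pointwise at $\eta=g^*(X)$ before taking expectations, while you apply the tangent-line inequality for $K$ at $g^*(X)$ and the tower property. That is the same first-order fact, with the integrability needed to split the expectations stated explicitly.
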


\begin{proof}
Note that by the mean value theorem, \[K(\theta) = K(0) + \theta\mu(\theta_0),\] for some $\theta_0$ between $0$ and $\theta$.
Hence,
\begin{align*}
|K(\theta)| &\leq |K(0)| + |\theta \mu(\theta_0)| \\
&\leq |K(0)| + |\theta\mu(0)| + |\theta| \cdot \underbrace{|\mu(\theta) - \mu(\theta_0)|}_{\leq L|\theta-\theta_0| \leq L|\theta|} \\
&\leq |K(0)| + |\theta\mu(0)|+L\theta^2.
\end{align*}
Thus $\mathbb{E}_P|K(g(X))| < \infty$ when $g \in L^2_P(X)$.
Moreover,
\begin{equation*}
\mathbb{E}_P|Yg(X)| < \infty,
\end{equation*}
as $\mathbb{E}_P(Y^2)$ and $\mathbb{E}_P(g(X)^2) < \infty$.
It follows that $R_P(g)$ is well-defined for all $g \in L^2_P(X)$.
Note that $K' = \mu$, which is increasing, so $K$ is convex.
Thus, $\eta \mapsto \ell(\eta,d)$ is convex, and consequently the risk is convex.

Note that
\[ \mathbb{E}_P(\ell(\eta,D) \given X) = -\eta\cdot \mathbb{E}_P(Y \given X) + K(\eta) \]
is convex in $\eta$, and so minimal when $\mu(\eta) =\mathbb{E}_P(Y|X)$, which is the case precisely when $\eta = g^*(X)$.
Thus, $\forall g \in L^2_P(X)$, we have that
\[ \mathbb{E}_P(\ell(g^*(X),D) \given X) \leq \mathbb{E}_P(\ell(g(X),D) \given X) , \]
so that taking expectations gives $R_P(g^*) \leq R_P(g)$, showing that \[ g^*\in \argmin_{g \in L^2_P(X)} R_P(g).
\qedhere \]
\end{proof}

\begin{lemma}[Risk for conditional quantile specification] \label{lem:conditional_quantile}
Suppose $\tau \in (0,1)$, and define the loss function $\ell : \mathbb{R} \times \mathcal{D} \to \mathbb{R}$ by
\[ \ell(\eta,d) = \rho_{\tau}(y - \eta), \]
where $\rho_\tau$ is as defined in \cref{sec:quant} and $\mathcal{D}= \mathbb{R}^d \times \mathbb{R}$.
Suppose $D = (X,Y) \in \mathcal{D}$ has distribution $P$, with $\var_P(Y) < \infty$.
Suppose that the regular conditional distribution of $Y$ given $X$ exists, and define $q_\tau(x) := \inf \left\{ y \in \mathbb{R} : \mathbb{P}(Y \leq y \given X = x) \geq \tau \right\}$.
Then, the risk
\[ R_P(g) := \mathbb{E}_P(\ell(g(X),D)) \]
is well defined for all $g \in L^2_P(X)$, is convex and
\[ q_\tau \in \argmin_{g \in L^2_P(X)} R_P(g). \]
\end{lemma}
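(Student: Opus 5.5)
The plan mirrors the proof of \cref{lem:loss-gam}: we show finiteness of the risk, then convexity, then minimality of $q_\tau$ through a pointwise conditional argument.

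\emph{Well-definedness.} The check function satisfies $0 \le \rho_\tau(r) \le \max(\tau, 1-\tau)\,|r| \le |r|$. Hence
\[
|\ell(g(X),D)| \le |Y| + |g(X)|.
\]
Since $\var_P(Y)<\infty$ gives $\E_P|Y|<\infty$, and $g\in L^2_P(X)$ gives $\E_P|g(X)|<\infty$, the expectation $R_P(g)$ is finite for every $g \in L^2_P(X)$.

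\emph{Convexity.} The function $\rho_\tau$ is piecewise linear, with slopes $\tau-1$ on $r\le 0$ and $\tau$ on $r>0$. These slopes are increasing, so $\rho_\tau$ is convex. Composition with the affine map $\eta\mapsto y-\eta$ preserves convexity, so $\eta\mapsto\ell(\eta,d)$ is convex. Integrating the pointwise convexity inequality along $g_t = (1-t)g_0 + t g_1$ then shows that $R_P$ is convex.

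\emph{Minimality.} Use the regular conditional distribution and set $\psi_x(\eta) := \E_P[\rho_\tau(Y-\eta)\given X=x]$. This is finite for $P$-almost every $x$ and convex in $\eta$. Its right derivative is $F_x(\eta)-\tau$ and its left derivative is $F_x(\eta^-)-\tau$, where $F_x(y) := \pr_P(Y\le y\given X=x)$. Both follow by dominated convergence applied to the difference quotients, which are bounded by $1$. By definition of $q_\tau(x)$ we have $F_x(q_\tau(x))\ge\tau$, and by right-continuity and the infimum, $F_x(q_\tau(x)^-)\le\tau$. So $0$ lies in the subdifferential of $\psi_x$ at $q_\tau(x)$, and convexity makes $q_\tau(x)$ a global minimiser of $\psi_x$. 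Consequently, for any $g\in L^2_P(X)$,
\[
\E_P[\ell(q_\tau(X),D)\given X]\le \E_P[\ell(g(X),D)\given X] \quad \text{almost surely},
\]
and taking expectations gives $R_P(q_\tau)\le R_P(g)$.

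\emph{Main obstacle: $q_\tau \in L^2_P(X)$.} This membership is needed for $q_\tau$ to be an admissible competitor. It follows from a conditional Chebyshev argument. If $q_\tau(x) > \E(Y\given X=x)$, then $\pr(Y\ge q_\tau(x)\given x)\ge 1-\tau$ forces
\[
(q_\tau(x)-\E(Y\given x))^2 \le \frac{\var(Y\given x)}{1-\tau}.
\]
Symmetrically, if $q_\tau(x) < \E(Y\given x)$, then $\pr(Y\le q_\tau(x)\given x)\ge\tau$ gives the same bound with $\tau$ in place of $1-\tau$. In either case
\[
q_\tau(x)^2 \lesssim \E(Y\given x)^2+\var(Y\given x),
\]
whose expectation is at most a constant times $\E Y^2<\infty$. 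Measurability of $q_\tau$ follows from that of the regular conditional distribution, e.g. by writing $\{q_\tau(x)\le y\} = \{F_x(y)\ge\tau\}$.
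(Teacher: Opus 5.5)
Your proposal is correct and follows the same route as the paper: the bound $|\ell(g(X),D)|\le |Y|+|g(X)|$ for well-definedness, convexity inherited from $\rho_\tau$, a pointwise conditional comparison integrated over $X$, and a separate check that $q_\tau\in L^2_P(X)$. The only differences are in detail: you verify the pointwise minimisation via one-sided derivatives where the paper cites it as well known, and you get square-integrability of $q_\tau$ from a conditional Chebyshev bound around $\E(Y\mid X)$ rather than the paper's direct estimate $q_\tau(x)^2\le \max\{1/\tau,1/(1-\tau)\}\,\E(Y^2\mid X=x)$, which it obtains by splitting on the sign of $q_\tau(x)$.
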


\begin{proof}
    To begin, note that $x \mapsto \rho_\tau(y-g(x))$ is measurable for each $y \in \mathbb{R}$ whenever $g$ is, and that $|\ell(g(X),D)| = |\rho_\tau(Y - g(X))| \leq |Y| + |g(X)|$, and so $R_P(g) = \mathbb{E}_P(\ell(g(X),D))$ is well-defined whenever $g \in L^1_P(X)$, and so certainly whenever $g \in L^2_P(X)$.
    Moreover, since $\eta \mapsto \rho_\tau(\eta)$ is convex, so is $\eta \mapsto \ell(\eta,d)$ for each $d \in \mathcal{D}$, and thence so is the risk.

Note that since $x \mapsto \mathbb{P}_P(Y \leq y \given X = x)$ is measurable for each $y$, so is the infimum (taken over the rationals without loss of generality) $x \mapsto q_\tau(x)$. Now when $q_{\tau}(x) \geq 0$, we have that $\mathbb{E}_P(Y^2 \given X = x) \geq (1-\tau)q^2_{\tau}(x)$. When $q_{\tau}(x) < 0$, we have that $\mathbb{E}_P(Y^2 \given X) \geq \tau q^2_{\tau}(x)$. Thus, we have that $q^2_\tau(X) \leq c \cdot \mathbb{E}(Y^2 \given X)$, where $c = \max(\frac{1}{\tau},\frac{1}{1-\tau})$, so that taking expectations
$\mathbb{E}_P(q_\tau^2(X)) \leq c \cdot\mathbb{E}_P(Y^2) < \infty$, and so $q_\tau \in L^2_P(X).$ It is well known that the $\tau$-quantile of a distribution $Q$ minimises $\eta \mapsto \mathbb{E}_Q(\rho_\tau(y - \eta))$, and thus we have that $\mathbb{E}_P(\ell(q_\tau(X),D) \given X) \leq \mathbb{E}_P(\ell(g(X),D) \given X)$ for each $g \in L^2_P(X)$.
    Taking expectations, we have that $R_P(q_\tau) \leq R_P(g) \ \forall g \in L^2_P(X)$.
\end{proof}

\begin{lemma} \label{lem:alt_proj}
    Consider the setup in \cref{sec:theory}.
    Then, the minimiser
\[ \min_{g \in \mathcal{G}_P }\E_P \{g(X)-\ell'(f^*_P(X),D)\}^2, \]
exists.
Moreover, it is unique up to almost sure equivalence under the law of $X$ under $P$.
\end{lemma}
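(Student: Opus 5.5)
This is an instance of the Hilbert projection theorem in $L^2_P(X)$, once the target is shown to be square integrable and the problem is recast as a projection. Write $\varepsilon_P := \ell'(f^*_P(X),D)$.

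First, by \cref{assump:risk_exist} applied with $g = f^*_P \in \mathcal{F}_P \subset \mathcal{G}_P$, we have $\mathbb{E}_P \varepsilon_P^2 < \infty$. Set $r(x) := \mathbb{E}_P(\varepsilon_P \given X = x)$. By conditional Jensen, $\mathbb{E}_P r(X)^2 \le \mathbb{E}_P \varepsilon_P^2 < \infty$, so $r \in L^2_P(X)$.

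Next, for any $g \in \mathcal{G}_P \subset L^2_P(X)$, the cross term vanishes by the tower property: $\mathbb{E}_P[\{g(X)-r(X)\}\{r(X)-\varepsilon_P\}] = 0$. This gives the Pythagorean decomposition
\[ \mathbb{E}_P\{g(X)-\varepsilon_P\}^2 = \mathbb{E}_P\{g(X)-r(X)\}^2 + \mathbb{E}_P\{r(X)-\varepsilon_P\}^2 . \]
The second term does not depend on $g$, so minimising the objective over $\mathcal{G}_P$ is equivalent to minimising $\|g - r\|^2_{L^2_P(X)}$ over $g \in \mathcal{G}_P$.

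Finally, $\mathcal{G}_P$ is by definition the closure of $\mathcal{G} \cap L^2_P(X)$ in $L^2_P(X)$. Since the closure of a linear subspace is again a linear subspace, $\mathcal{G}_P$ is a closed linear subspace of the Hilbert space $L^2_P(X)$, with functions equal $P$-almost surely identified. The Hilbert projection theorem then gives a unique minimiser, namely the orthogonal projection $s_P$ of $r$ onto $\mathcal{G}_P$. Uniqueness is as an element of $L^2_P(X)$, which is exactly uniqueness up to almost sure equality under the law of $X$.

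There is no real obstacle here. The only points needing care are checking that $\varepsilon_P$ is square integrable, which is where \cref{assump:risk_exist} and the membership $f^*_P \in \mathcal{G}_P$ are used, and noting that $\mathcal{G}_P$ is closed by construction.
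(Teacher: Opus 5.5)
Your proof is correct and is essentially the paper's argument: both reduce the claim to the Hilbert projection theorem for the closed linear subspace $\mathcal{G}_P$. The only difference is minor. The paper projects $\varepsilon_P$ directly in the larger space of square-integrable functions of $D$, using that the isometric embedding $g \mapsto g(X)$ keeps $\mathcal{G}_P$ closed. You instead stay in $L^2_P(X)$ and project $\mathbb{E}_P(\varepsilon_P \mid X)$, after a Pythagorean reduction, and you also make explicit the square-integrability of $\varepsilon_P$ from \cref{assump:risk_exist}, which the paper uses tacitly.
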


\begin{proof}
    Consider the natural embedding of $\mathcal{G}_P$ in the Hilbert space $L^2_P$ of square integrable random variables via $g \mapsto g(X)$.
    Then, this embedding of $\mathcal{G}_P$ remains closed in $L^2_P$, and therefore the minimiser we seek is precisely the projection of $\ell'(f^*_P(X),D)$ onto $\mathcal{G}_P$, which exists by the Hilbert space projection theorem.
\end{proof}

\begin{lemma} \label{lem:snr_bound}
    Consider the setup in \cref{sec:theory}, with $P \in \mathcal{P}$. For $g \in \mathcal{G}_{\mathrm{DB}}$ satisfying
    \[
    \operatorname{Corr}_P(g(X),s_P(X))>0,
    \]
    we have the bound
\[
\Lambda_P(g) \geq A{\tau_P}^{1/2}\operatorname{Corr}_P(g(X),s_P(X)),
\]
where
\[
A=\frac{1}{C^{1/2}}\left(1+\frac{(K-k)^2}{4k^2}\right)^{-1/2}>0.
\]
\end{lemma}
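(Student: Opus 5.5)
The plan is to bound the numerator and the denominator of $\Lambda_P(g)=\E_P[\varepsilon_P g(X)]/\sqrt{\var_P(\varepsilon_P g(X))}$ separately, where $\varepsilon_P=\ell'(f^*_P(X),D)$. Throughout, I will assume (as holds in all the models considered) that $\mathcal{F}_P$ contains the constant functions, and hence so does $\mathcal{G}_P\supset\mathcal{F}_P$. If the paper's setup does not guarantee this, the argument below needs this extra hypothesis.

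\emph{Step 1: the numerator.} By \cref{lem:alt_proj}, $s_P$ is the $L^2_P$ projection of $\varepsilon_P$ onto $\mathcal{G}_P$. Since $g\in\mathcal{G}_{\mathrm{DB}}\subset\mathcal{G}_P$, this gives $\E_P[\varepsilon_P g(X)]=\E_P[s_P(X)g(X)]$. Next, $f^*_P$ minimises $R_P$ over $\mathcal{F}_P$, so \cref{lem:risklocmin}, applied with $\mathcal{F}_P$ in place of $\mathcal{G}_P$, gives $\E_P[\varepsilon_P]=0$ by taking $h\equiv1$. Because $1\in\mathcal{G}_P$, the projection property then yields $\E_P[s_P(X)]=\E_P[\varepsilon_P]=0$. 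Consequently
\[
\E_P[s_P g]=\operatorname{Cov}_P(s_P(X),g(X)) \quad\text{and}\quad \tau_P=\var_P(s_P(X)).
\]

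\emph{Step 2: the denominator.} By \cref{assump:score_variance},
\[
\var_P(\varepsilon_P g(X))\le \E_P[\E_P(\varepsilon_P^2\mid X)\,g(X)^2]\le C\,\E_P[g(X)^2].
\]
Combining this with Step 1 gives $\Lambda_P(g)\ge \operatorname{Cov}_P(s_P,g)/\{C^{1/2}(\E_P g^2)^{1/2}\}$, using that the covariance is positive by hypothesis.

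\emph{Step 3: comparing $\E_P g^2$ with $\var_P(g)$.} This step uses $g\in\mathcal{G}_{\mathrm{DB}}$ and is where the constant $A$ arises. Taking $f\equiv1\in\mathcal{F}$ in \eqref{eq:G_debiased} gives $\E_P[w_P g]=0$, so
\[
\E_P[w_P]\,\E_P[g]=-\operatorname{Cov}_P(w_P,g).
\]
Since $k\le w_P\le K$, we have $\E_P w_P\ge k$, and Popoviciu's inequality gives $\operatorname{sd}(w_P)\le (K-k)/2$. Cauchy--Schwarz then yields $(\E_P g)^2\le \frac{(K-k)^2}{4k^2}\var_P(g)$, and hence
\[
\E_P g^2\le\Bigl(1+\tfrac{(K-k)^2}{4k^2}\Bigr)\var_P(g).
\]
Substituting into Step 2 and writing $\operatorname{Cov}=\operatorname{Corr}\cdot\operatorname{sd}(s_P)\operatorname{sd}(g)$ with $\operatorname{sd}(s_P)=\tau_P^{1/2}$ gives the claimed bound.

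The main obstacle is the mean-zero issue: Steps 1 and 3 both rely on constants lying in $\mathcal{F}_P$. Everything else is routine Hilbert-space projection and variance bounds.
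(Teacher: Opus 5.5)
Your proposal is correct and follows the paper's proof essentially step for step. It uses the same bound $\var_P(\varepsilon_P g)\le C\,\E_P g^2$, the same combination of $\E_P[w_P g]=0$ with Cauchy--Schwarz and the $(K-k)^2/4$ variance bound on $w_P$, and the same identification $\E_P[\varepsilon_P g]=\E_P[s_P g]$ with $\E_P s_P=0$ via \cref{lem:risklocmin}. The paper's proof likewise asserts that $\mathcal{F}_P$ contains the constant functions without listing this as a formal assumption, so the extra hypothesis you flag is exactly the one it relies on implicitly.
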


\begin{proof}
    Note that
    \begin{align*}
    \var_P(\ell'(f^*_P(X),D)g(X))
    &\leq \mathbb{E}_P[\ell'(f^*_P(X),D)^2g(X)^2] \\
    &= \mathbb{E}_P[\mathbb{E}_P(\ell'(f^*_P(X),D)^2 \given X)g(X)^2] \\
    &\leq C\mathbb{E}_P[g(X)^2].
    \end{align*}

Now since $g \in \mathcal{G}_{\mathrm{DB}}$, and $\mathcal{F}_P$ contains the constant functions, we have that $\mathbb{E}_P(w(X)g(X)) = 0$. Thus by the Cauchy--Schwarz inequality, we have that

\begin{align*}
   \mathbb{E}_P(w(X))^2\cdot\mathbb{E}_P(g(X))^2
   &= \operatorname{Cov}_P(g(X),w(X))^2 \\
   &\leq \operatorname{Var}_P(g(X))\operatorname{Var}_P(w(X)) \\
   &\leq \frac{(K-k)^2}{4}\operatorname{Var}_P(g(X)),
\end{align*}

so that
\[
\mathbb{E}_P(g(X))^2
\leq
\frac{(K-k)^2}{4k^2}\operatorname{Var}_P(g(X)).
\]
Hence,
\[
\mathbb{E}_P[g(X)^2]
=
\operatorname{Var}_P(g(X))
+
\mathbb{E}_P(g(X))^2
\leq
\left(
1+\frac{(K-k)^2}{4k^2}
\right)\operatorname{Var}_P(g(X)).
\]

Thus,
\begin{align*}
\Lambda_P(g)
&= \frac{\mathbb{E}_P(\ell'(f^*_P(X),D)g(X))}{\sqrt{\var(\ell'(f^*_P(X),D)g(X))}} \\
&= \frac{\mathbb{E}_P(s_P(X)g(X))}{\sqrt{\var(\ell'(f^*_P(X),D)g(X))}}\\
&\geq A \cdot \frac{\mathbb{E}_P(s_P(X)g(X))}{\sqrt{\operatorname{Var}_P(g(X))}} \\
&= A \tau_P^{1/2}\operatorname{Corr}_P(s_P(X),g(X)),
\end{align*}
where the inequality follows since
\(\operatorname{Corr}_P(g(X),s_P(X))>0\), and the final equality follows by noting that $\mathbb{E}_P(s_P(X)) = 0$ since $\mathcal{F}_P$ contains the constant functions, and thereafter \cref{lem:risklocmin}.

\end{proof}
\begin{lemma} \label{lem:conditional_mean}
    Consider the conditional mean specification setup in \cref{subsubsec:cmean-theory}, and suppose \cref{assump:mean} holds.
    Then, the class of distributions $\mathcal{P}$ satisfies \cref{assump:risk_exist,assump:risk-regularity,assump:score_variance} in \cref{sec:theory}.
\end{lemma}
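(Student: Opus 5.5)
We verify the four assumptions one at a time for the loss $\ell(\eta,d) = -\eta y + K(\eta)$. Here $\ell'(\eta,D) = \mu(\eta) - Y$, so the smoothed score is $\phi_P(\eta,x) = \mu(\eta) - \mu(g^*_P(x))$ with $\mu(g^*_P(X)) := \mathbb{E}_P(Y \given X)$. The key simplification is that $\phi_P'(\eta,x) = \mu'(\eta)$ does not depend on $P$ or $x$. Each condition then becomes a statement about $\mu$ or about the conditional moments of $Y$, which \cref{assump:mean} controls directly.

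\textbf{\cref{assump:risk_exist}.} For $g \in \mathcal{G}_P = L^2_P(X)$, the bound $|K(\theta)| \le |K(0)| + |\theta\mu(0)| + L\theta^2$ from the proof of \cref{lem:loss-gam}, together with $\var_P(Y)<\infty$, gives $\mathbb{E}_P|\ell(g(X),D)| < \infty$. Finiteness of the variance follows since \ref{assump6_bdd_variance} bounds the conditional variance, and the bounded-alternative condition \ref{assump6:bdd_alt} gives $\mathbb{E}_P Y^2 < \infty$ provided $\mu(f^*_P(X)) \in L^2$. For the score, $|\mu(g(X))| \le |\mu(0)| + L|g(X)|$, so $\ell'(g(X),D) = \mu(g(X)) - Y$ is square integrable.

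\textbf{\cref{assump:risk-regularity}.}
\begin{itemize}
\item[(i)] The score difference is deterministic given $X$: $\{\ell'(\eta,D)-\ell'(\zeta,D)\}^2 = \{\mu(\eta)-\mu(\zeta)\}^2 \le L^2|\eta-\zeta|^2$. This gives \ref{assump4_risk-lipschitz} with $\gamma=1$, with the constant adjusted to $L^2$ (the constants in \cref{assump:risk-regularity} are just some universal constants).
\item[(ii)] $\phi_P'(\eta,x) = \mu'(\eta)$ exists everywhere and $|\mu'| \le L$, which gives \ref{assump4_bdd-weights} with constant $L$.
\item[(iii)] $w_P(x) = \mu'(f^*_P(x)) > k$ by \ref{assump6_bdd_weights}.
\item[(iv)] $\mu'$ is $L'$-Lipschitz because $\|\mu''\|_\infty \le L'$, which is \ref{assump4_lipschitz-weights}.
\end{itemize}

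\textbf{\cref{assump:score_variance}.} Since $\mu(f^*_P(X))$ is a function of $X$,
\[ \var_P(\ell'(f^*_P(X),D)\given X=x) = \var_P(Y\given X=x) \ge c. \]
For the second moment,
\[ \mathbb{E}_P(\ell'(f^*_P(X),D)^2 \given X=x) = \var_P(Y\given X=x) + \{\mu(f^*_P(x)) - \mu(g^*_P(x))\}^2 \le C + M^2 \]
by \ref{assump6_bdd_variance} and \ref{assump6:bdd_alt}. The lower constant is $c$ and the upper constant is $C+M^2$.

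\textbf{Main obstacle.} The calculations above are routine; the delicate part is the integrability bookkeeping in \cref{assump:risk_exist}. One must make sure $\mathbb{E}_P(Y\given X)$ has a version of the form $\mu(g^*_P(X))$. One must also check that $\mathbb{E}_P(Y^2)<\infty$ holds uniformly enough for the risk to be finite. I would handle both by invoking the standing moment condition of \cref{lem:loss-gam}, $\var_P(Y)<\infty$, together with \ref{assump6:bdd_alt}. A secondary point is that $\mathcal{G}_P$ consists of equivalence classes. I would address this by noting that every quantity above depends on $g$ only through $g(X)$ almost surely.
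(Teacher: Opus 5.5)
Your proposal is correct and follows the paper's proof essentially line by line. Since the score is $\mu(\eta)-Y$, each part of \cref{assump:risk-regularity} reduces to the bounds on $\mu'$, $\mu''$ and $\mu'(f^*_P)$, and \cref{assump:score_variance} follows from $\var_P(\ell'(f^*_P(X),D)\given X=x)=\var_P(Y\given X=x)$ together with \ref{assump6:bdd_alt}; your explicit constants $L^2$ and $C+M^2$ are exactly what the paper leaves implicit. The proviso $\mu(f^*_P(X))\in L^2_P$ that you flag is automatic, because $f^*_P\in\mathcal{F}_P\subset L^2_P(X)$ and $\mu$ is Lipschitz, so $\mathbb{E}_P Y^2<\infty$ follows from \cref{assump:mean} without importing the hypothesis $\var_P(Y)<\infty$ from \cref{lem:loss-gam}.
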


\begin{proof}
    Following the proof of \cref{lem:loss-gam}, we see that \cref{assump:risk_exist} is satisfied.
    Also, by Assumption~\ref{assump6_bdd_link}, $\|\mu'\|_{\infty} \leq L$, and so we have that $|\ell'(\zeta,D) - \ell'(\eta,D)| = |\mu(\zeta)-\mu(\eta)|\leq L|\eta-\zeta|$, so that Assumption~\ref{assump4_risk-lipschitz} is met with $\delta = 1$.
    The smoothed score is given by $\phi_P(\eta,x) = \mu(\eta)-\mu(g_P^*(x))$, so that the map $\eta \mapsto \phi_P(\eta,x)$ is differentiable with $\phi'_P(\eta,x) = \mu'(\eta)$.
    Then, $\sup_{\eta \in \mathbb{R}}|\phi'_P(\eta,x)| = \|\mu'\|_{\infty } \leq L$, so that Assumption~\ref{assump4_bdd-weights} is satisfied.
    Also, $w_P(x) = \mu'(f^*_P(x))$, which by Assumption~\ref{assump6_bdd_weights} is at least $k$, so that Assumption~\ref{assump4_lower_bdd-weights} is met.
    The map $\eta \mapsto \phi'_P(\eta,x)$ is equal to $\mu'(\eta)$, and is thus $L'-$Lipschitz as we have $\|\mu''\|_{\infty} \leq L'$, so that Assumption~\ref{assump4_lipschitz-weights} is also met. Finally, we have that $\var_P(\ell'(f^*_P(X),D) \given X = x) = \var_P(Y \given X = x) \in [c,C]$ almost surely by Assumption~\ref{assump6_bdd_variance}, and so combined with Assumption \ref{assump6:bdd_alt}, \cref{assump:score_variance} is satisfied.
\end{proof}

\begin{lemma} \label{lem:quantile-assumptions}
    Consider the conditional quantile specification setup in \cref{sec:quant}, and suppose \cref{assump : quantile} holds.
    Then, the class of distributions $\mathcal{P}$ satisfies \cref{assump:risk_exist,assump:risk-regularity,assump:score_variance} in \cref{sec:theory}.
    \end{lemma}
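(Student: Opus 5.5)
We verify the three assumptions in turn, mirroring the proof of \cref{lem:conditional_mean}, with $\ell'(\eta,D) = \ind\{Y \le \eta\} - \tau$, so that $\phi_P(\eta,x) = F_{Y|X}(\eta \given x) - \tau$ and $\phi'_P(\eta,x) = p_{Y|X}(\eta \given x)$.

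\emph{\cref{assump:risk_exist}.} By \cref{lem:conditional_quantile}, $\mathbb{E}_P|\ell(g(X),D)| \le \mathbb{E}_P|Y| + \mathbb{E}_P|g(X)| < \infty$ for every $g \in L^2_P(X)$, and hence for every $g \in \mathcal{G}_P$, since $\var_P(Y) < \infty$. Moreover $|\ell'| \le \max(\tau, 1-\tau) \le 1$, so the score has finite second moment.

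\emph{\cref{assump:risk-regularity}.} The existence of a conditional density (Assumption~\ref{assump7_density_reg}) makes $\eta \mapsto \phi_P(\eta,x)$ differentiable everywhere, with derivative $p_{Y|X}(\eta \given x) \le L$. This gives Assumption~\ref{assump4_bdd-weights} with constant $L$. Lipschitz continuity of the density gives Assumption~\ref{assump4_lipschitz-weights} with constant $L'$. Since $w_P(x) = p_{Y|X}(f^*_P(x) \given x)$, Assumption~\ref{assump7_bdd_density} gives Assumption~\ref{assump4_lower_bdd-weights}. For Assumption~\ref{assump4_risk-lipschitz}, take $\eta \ge \zeta$. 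Then
\[ \{\ell'(\eta,D) - \ell'(\zeta,D)\}^2 = \ind\{\zeta < Y \le \eta\}, \]
so its conditional expectation equals $F_{Y|X}(\eta \given x) - F_{Y|X}(\zeta \given x) \le L|\eta - \zeta|$. This is the claimed bound with $\gamma = 0$. The constants are uniform over $P$ because the density bounds in \cref{assump : quantile} are uniform.

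\emph{\cref{assump:score_variance}.} Given $X = x$, the variable $\varepsilon_P = \ind\{Y \le f^*_P(X)\} - \tau$ is a shifted Bernoulli variable with success probability $p(x) := \mathbb{P}_P(Y \le f^*_P(X) \given X = x)$. Hence its conditional variance is $p(x)\{1 - p(x)\}$. By Assumption~\ref{assump7_minimiser_exist}, $p(x) \in [c,C]$, so this variance is at least $\min\{c(1-c),\, C(1-C)\} > 0$, using $0 < c \le C < 1$. The conditional second moment is at most $1$.

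The only step needing care is Assumption~\ref{assump4_risk-lipschitz}, where the score is not Lipschitz pointwise and the argument must go through the conditional c.d.f. A second point to note is that the lower variance bound relies on Assumption~\ref{assump7_minimiser_exist}, not on the density conditions. Both issues are handled above.
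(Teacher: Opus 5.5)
Your proof is correct and takes essentially the same route as the paper: you verify each condition directly through the conditional c.d.f.\ and density, obtain $\gamma = 0$ in Assumption~\ref{assump4_risk-lipschitz}, and bound the Bernoulli variance $p(x)\{1-p(x)\}$ below using Assumption~\ref{assump7_minimiser_exist}. Relative to the paper, ordering $\eta \ge \zeta$ gives you the constant $L$ instead of $2L$, and you state the conditional second-moment bound explicitly where the paper leaves it implicit. One small correction: $\phi_P(\cdot,x)$ is differentiable everywhere with derivative $p_{Y|X}(\cdot \given x)$ because the density is continuous, which comes from the Lipschitz part of Assumption~\ref{assump7_density_reg}; the mere existence of a density would not be enough.
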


\begin{proof}
Following the proof of \cref{lem:conditional_quantile}, we have that $\mathbb{E}_P(\ell(g(X),D))$ is well defined and finite whenever $g \in L^2_P(X)$.
The score $\ell'(g(X),D)$ is bounded, and is measurable whenever $g$ is, so that \cref{assump:risk_exist} is satisfied. Now, note that \begin{align*} \mathbb{E}_P(\{\ell'(\eta,D)-\ell'(\zeta,D)\}^2 \given X = x) &= \mathbb{E}_P(|\ind(Y \leq \eta )-\ind(Y \leq \zeta)| \given X = x) \\ &\leq \mathbb{E}_P(\ind(\zeta \leq Y \leq \eta) + \ind(\eta \leq Y \leq \zeta) \given X = x) \\ & = \mathbb{P}_P(\zeta \leq Y \leq \eta \given X = x) + \mathbb{P}_P(\eta \leq Y \leq \zeta \given X = x) \\ &\leq 2L|\eta -\zeta|,
\end{align*}
where in the final step we use the fact that the conditional density is uniformly bounded by $L$.
Thus, Assumption~\ref{assump4_risk-lipschitz} holds with $\delta = 0$.
The smoothed score is given by $\phi_P(\eta,x) = \mathbb{P}_P(Y \leq \eta \given X = x) - \tau $, so that the map $\eta \mapsto \phi_P(\eta,x)$ is differentiable with derivative $\phi'_P(\eta,x) = p_{Y | X}(\eta \given x)$.
Then, $\sup_{\eta \in \mathbb{R}}|\phi'(\eta,x)| = \sup_{\eta \in \mathbb{R}}p_{Y \given X}(\eta \given x) \leq L$, by Assumption~\ref{assump7_bdd_density}, and so Assumption~\ref{assump4_bdd-weights} holds.
Also, $w(x) = p_{Y|X}(f^*_P(x) \given x) \geq k$ by Assumption~\ref{assump7_bdd_density}, so that Assumption~\ref{assump4_lower_bdd-weights} is satisfied.
The map $\eta \mapsto \phi'_P(\eta,x) = p_{Y|X}(\eta|x)$ is $L'$-Lipschitz, and so Assumption~\ref{assump4_lipschitz-weights} is also met.
Note that $\var_P(\ell'(f^*_P(X),D)\given X = x) = \mathbb{P}_P(Y \leq f^*_P(X) \given X = x) \cdot (1-\mathbb{P}_P(Y \leq f^*_P(X) \given X = x)) \in [c(1-C),C(1-c)]$ by Assumption~\ref{assump7_minimiser_exist}, and thus \cref{assump:score_variance} also holds.

\end{proof}

We state the following lemmas proved elsewhere for completeness. We take $\mathcal{P}$ to be a family of distributions that determines distributions of the sequences of random variables in their statements.

\begin{lemma}[{Uniform Slutsky, \citet[Lemma 20]{Shah_Peters_2020}, \citet[Theorem A.11.15]{Samworth_Shah_2026}}] \label{lem:unif_Slutsky}
    Let $(V_n)_{n \in \mathbb{N}}$ and $(W_n)_{n \in \mathbb{N}}$  be sequences of random variables.  Suppose that
	\[
	\lim_{n \to \infty} \sup_{P \in \mathcal{P}} \sup_{t \in \R} \bigl|\pr_P(V_n\leq t) - \Phi(t)\bigr| =0.
	\]
	\begin{enumerate}[(a)]
		\item If $W_n= o_{\mathcal{P}}(1)$, then
        \[
        \lim_{n \to \infty} \sup_{P \in \mathcal{P}} \sup_{t \in \R} \bigl|\pr_P(V_n + W_n \leq t) - \Phi(t)\bigr| =0.
        \]
	\item If $W_n= 1 + o_{\mathcal{P}}(1)$, then
    \[
    \lim_{n \to \infty} \sup_{P \in \mathcal{P}} \sup_{t \in \R} \biggl|\pr_P\Bigl(\frac{V_n}{W_n} \leq t\Bigr) - \Phi(t)\biggr| =0.
		\]
	\end{enumerate}
\end{lemma}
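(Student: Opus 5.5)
For part (a), I would argue by a direct sandwich at a fixed slack $\epsilon > 0$ and then let $\epsilon \downarrow 0$. For every $t \in \R$ and $P \in \mathcal{P}$, the event inclusions $\{V_n + W_n \leq t\} \subseteq \{V_n \leq t + \epsilon\} \cup \{|W_n| > \epsilon\}$ and $\{V_n \leq t - \epsilon\} \subseteq \{V_n + W_n \leq t\} \cup \{|W_n| > \epsilon\}$ give
\[
\pr_P(V_n \leq t-\epsilon) - \pr_P(|W_n|>\epsilon) \leq \pr_P(V_n + W_n \leq t) \leq \pr_P(V_n \leq t+\epsilon) + \pr_P(|W_n|>\epsilon).
\]
Write $\Delta_n := \sup_{P}\sup_t |\pr_P(V_n \leq t) - \Phi(t)|$. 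The bounds above then yield
\[
\bigl|\pr_P(V_n + W_n \leq t) - \Phi(t)\bigr| \leq \Delta_n + \sup_{P}\pr_P(|W_n|>\epsilon) + \sup_t\{\Phi(t+\epsilon) - \Phi(t-\epsilon)\}.
\]
The last term is at most $2\epsilon/\sqrt{2\pi}$, since the Gaussian density is bounded by $1/\sqrt{2\pi}$. The right-hand side does not depend on $t$ or $P$. Taking $\limsup_n$, the first two terms vanish by hypothesis and by $W_n = o_{\mathcal{P}}(1)$. Since $\epsilon$ was arbitrary, (a) follows.

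For part (b), I would reduce to (a) by writing $V_n/W_n = V_n + V_n(W_n^{-1} - 1)$ on the event $\{W_n \neq 0\}$, with any convention off that event. It then suffices to show $U_n := V_n(W_n^{-1}-1)\ind\{|W_n - 1| < 1/2\} = o_{\mathcal{P}}(1)$. The complementary event has $\sup_P$-probability tending to zero and can be absorbed into an additive $o(1)$ error, exactly as the $\{|W_n|>\epsilon\}$ term was handled in (a). On $\{|W_n-1|<1/2\}$ we have $|W_n^{-1} - 1| = |W_n - 1|/|W_n| \leq 2|W_n-1|$.

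The key ingredient is uniform tightness of $V_n$. For $M > 0$,
\[
\sup_P \pr_P(|V_n| > M) \leq 2\{1-\Phi(M)\} + 2\Delta_n,
\]
which can be made small by choosing $M$ large and then $n$ large. Then
\[
\sup_P\pr_P(|U_n|>\epsilon) \leq \sup_P\pr_P(|V_n|>M) + \sup_P\pr_P\bigl(2|W_n-1| > \epsilon/M\bigr).
\]
The second term tends to zero for each fixed $M$ because $W_n = 1 + o_{\mathcal{P}}(1)$. Letting $n\to\infty$ and then $M\to\infty$ gives $U_n = o_{\mathcal{P}}(1)$. Part (a), applied to $V_n + U_n$, together with the negligible bad event, completes (b).

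The only mildly delicate point is bookkeeping for the event where $W_n$ is near zero or the ratio is undefined. I would handle it by intersecting everything with $\{|W_n-1|<1/2\}$, whose complement has uniformly vanishing probability. Everything else is routine, because $\Phi$ is uniformly continuous with a bounded density.
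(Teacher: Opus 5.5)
Your proof is correct. Part (a) is a sound $\epsilon$-sandwich, with the slack controlled by $\sup_t\{\Phi(t+\epsilon)-\Phi(t-\epsilon)\}\le 2\epsilon/\sqrt{2\pi}$. Part (b) correctly reduces to (a) using uniform tightness of $V_n$ and the bound $|W_n^{-1}-1|\le 2|W_n-1|$ on $\{|W_n-1|<1/2\}$, and it handles the event where $W_n$ is near zero properly.

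The paper gives no proof of its own here. It simply imports the lemma from Shah and Peters (2020, Lemma 20) and Samworth and Shah, and your argument is a standard self-contained proof of that result.
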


\begin{lemma}[{\citet[Lemma~S6]{lundborg2024projected}}] \label{lem:lems6}
Let $(U_n)_{n \in \mathbb{N}}$ be a sequence of non-negative random variables on $(\Omega,\mathcal{F})$, and let $(\mathcal{F}_n)_{n \in \mathbb{N}}$ be a sequence of sub-$\sigma$-algebras of $\mathcal{F}$. Then if $\E(U_n \given \mathcal{F}_n) = o_{\mathcal{P}}(1)$, then $U_n = o_{\mathcal{P}}(1)$.
\end{lemma}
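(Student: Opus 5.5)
The plan is a conditional Markov inequality combined with a truncation on the event where the conditional mean is small. All bounds will be kept explicit in $\delta$ and $\epsilon$ so that the supremum over $P\in\mathcal{P}$ can be taken at the end.

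Fix $\epsilon>0$ and an auxiliary $\delta\in(0,\epsilon)$, and write $M_n := \E(U_n \given \mathcal{F}_n)$. Since $U_n\geq 0$, $M_n$ is well defined (possibly $+\infty$) without any integrability assumption on $U_n$.

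First, split
\[
\pr_P(U_n>\epsilon) \le \pr_P(M_n>\delta) + \pr_P\bigl(U_n>\epsilon,\ M_n\le \delta\bigr).
\]
The event $\{M_n\le\delta\}$ is $\mathcal{F}_n$-measurable. Conditioning on $\mathcal{F}_n$ and applying Markov's inequality conditionally therefore gives
\[
\pr_P\bigl(U_n>\epsilon,\ M_n\le\delta\bigr) = \E_P\bigl[\ind\{M_n\le\delta\}\,\pr_P(U_n>\epsilon\given\mathcal{F}_n)\bigr] \le \E_P\Bigl[\ind\{M_n\le\delta\}\min\Bigl(1,\frac{M_n}{\epsilon}\Bigr)\Bigr] \le \frac{\delta}{\epsilon}.
\]
The conditional Markov step follows from $\ind\{U_n>\epsilon\}\le U_n/\epsilon$ and monotonicity of conditional expectation for nonnegative variables.

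Next, take the supremum over $P\in\mathcal{P}$:
\[
\sup_{P\in\mathcal{P}}\pr_P(U_n>\epsilon)\le \sup_{P\in\mathcal{P}}\pr_P(M_n>\delta)+\frac{\delta}{\epsilon}.
\]
By the hypothesis $M_n=o_{\mathcal{P}}(1)$, the first term tends to $0$ as $n\to\infty$ for each fixed $\delta$. Hence $\limsup_n \sup_P \pr_P(U_n>\epsilon)\le \delta/\epsilon$. Letting $\delta\downarrow 0$ gives the claim.

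The only point needing care is the measure-theoretic justification of the conditional Markov step when $U_n$ need not be integrable. I would handle it by working with the extended-valued conditional expectation of a nonnegative variable, or equivalently by applying the argument to $\min(U_n,\epsilon)$. The latter works because $\{U_n>\epsilon\}$ is unchanged up to the boundary and $\E(\min(U_n,\epsilon)\given\mathcal{F}_n)\le M_n$.
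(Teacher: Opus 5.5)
Your proof is correct: conditional Markov on the $\mathcal{F}_n$-measurable event $\{M_n\le\delta\}$ gives $\limsup_n\sup_{P}\pr_P(U_n>\epsilon)\le\delta/\epsilon$, and non-integrable $U_n$ is harmless because $\ind\{U_n>\epsilon\}\le\min(U_n,\epsilon)/\epsilon$ pointwise. The paper gives no proof of its own and quotes Lemma~S6 of \citet{lundborg2024projected}, whose argument is the standard one you use, namely $\pr_P(U_n>\epsilon)\le\E_P[\min\{1,M_n/\epsilon\}]$ followed by a uniform bounded-convergence step that your $\delta$-split makes explicit.
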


\begin{lemma}[{Uniform conditional central limit theorem, \citet[Lemma~S8]{lundborg2024projected}}] \label{lem:lems8}
	Let $(X_{n, i})_{n \in \mathbb{N}, i \in [n]}$ be a triangular array of real-valued random variables and let $(\mathcal{F}_n)_{n \in \mathbb{N}}$ be a filtration on $\mathcal{F}$. Assume that
	\begin{enumerate}[(i)]
		\item $X_{n, 1}, \dots, X_{n, n}$ are conditionally independent given $\mathcal{F}_n$, for each $n \in \mathbb{N}$;
		\item $\mathbb{E}_P(X_{n, i} \given \mathcal{F}_n) = 0$ for all $n \in \mathbb{N}, i \in [n]$;
		\item $\bigl| n^{-1} \sum_{i=1}^n \mathbb{E}_P(X_{n, i}^2 \given \mathcal{F}_n) - 1\bigr| = o_\mathcal{P}(1)$;
		\item there exists $\delta > 0$ such that
		\[
			\frac{1}{n}\sum_{i=1}^n \E_{P}\bigl(|X_{n, i}|^{2+\delta} \given \mathcal{F}_n\bigr) = o_{\mathcal{P}}(n^{\delta/2}).
		\]
	\end{enumerate}
	Then $S_n := n^{-1/2} \sum_{m=1}^n X_{n,m}$ converges uniformly in distribution to $\N(0, 1)$, i.e. 
	\[
		\lim_{n \to \infty} \sup_{P \in \mathcal{P}} \sup_{x \in \mathbb{R}} |\mathbb{P}_P(S_n \leq x) - \Phi(x)| = 0.
	\]
\end{lemma}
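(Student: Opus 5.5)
The plan is to condition on $\mathcal{F}_n$, apply a quantitative Berry--Esseen/Lyapunov bound to the conditionally independent summands, and then average over $\mathcal{F}_n$. Uniformity in $P$ comes from the fact that every error term is bounded by $2$ and is $o_{\mathcal{P}}(1)$.

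First I fix notation:
\[
s_n^2 := n^{-1}\sum_{i=1}^n \E_P(X_{n,i}^2 \given \mathcal{F}_n),
\qquad
\Lambda_n := \frac{\sum_{i=1}^n \E_P(|X_{n,i}|^{2+\delta}\given \mathcal{F}_n)}{(n s_n^2)^{1+\delta/2}}
= \frac{n^{-1}\sum_{i=1}^n \E_P(|X_{n,i}|^{2+\delta}\given \mathcal{F}_n)}{n^{\delta/2}\, s_n^{2+\delta}}.
\]
By (iii), $s_n^2 = 1 + o_{\mathcal{P}}(1)$. Together with (iv), this gives $\Lambda_n = o_{\mathcal{P}}(1)$ on the event $\{s_n > 0\}$, and that event has probability tending to one uniformly in $P$.

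I may assume $\delta \in (0,1]$. If $\delta > 1$, Lyapunov's inequality gives
\[
\E(|X|^3 \given \mathcal{F}_n) \le \E(X^2\given\mathcal{F}_n)^{(\delta-1)/\delta}\,\E(|X|^{2+\delta}\given\mathcal{F}_n)^{1/\delta}.
\]
I then average over $i$ and apply Hölder's inequality, using (iii) and (iv). This yields $n^{-1}\sum_i \E(|X_{n,i}|^3\given\mathcal{F}_n) = o_{\mathcal{P}}(n^{1/2})$, so condition (iv) holds with $\delta = 1$.

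Next, on $\{s_n>0\}$, conditionally on $\mathcal{F}_n$ the variables $X_{n,i}$ are independent, have mean zero by (i)--(ii), and the normalised sum $S_n/s_n$ has unit variance. Petrov's non-identically-distributed Berry--Esseen bound for $(2+\delta)$-moments, with $\delta\in(0,1]$, then gives
\[
\sup_{x}\bigl|\pr_P(S_n \le x \given \mathcal{F}_n) - \Phi(x/s_n)\bigr| \le C_0 \Lambda_n,
\]
where $C_0$ is a universal constant. To replace $\Phi(x/s_n)$ by $\Phi(x)$, I use the elementary scale bound
\[
\sup_x |\Phi(x/a) - \Phi(x)| \le c\,|a-1|/\min(a,1)
\]
for $a > 0$, which follows from the mean value theorem applied to $a \mapsto \Phi(x/a)$ together with $\sup_u |u\varphi(u)| < \infty$. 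Writing $\Delta_n := \min\{2,\; C_0\Lambda_n + c|s_n-1|/\min(s_n,1)\}$ on $\{s_n>0\}$ and $\Delta_n := 2$ otherwise, I get $\Delta_n = o_{\mathcal{P}}(1)$, and for all $x$,
\[
\bigl|\pr_P(S_n\le x\given\mathcal{F}_n) - \Phi(x)\bigr| \le \Delta_n.
\]

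Finally, taking expectations gives
\[
\sup_x|\pr_P(S_n\le x)-\Phi(x)| \le \E_P\Delta_n \le \epsilon + 2\,\pr_P(\Delta_n>\epsilon).
\]
Taking the supremum over $P$ and letting $n\to\infty$, then $\epsilon\downarrow 0$, gives the result. The main obstacle is making the conditional Berry--Esseen step rigorous: I need a regular conditional distribution of $(X_{n,i})_i$ given $\mathcal{F}_n$ under which the summands are independent, so that the deterministic inequality applies pointwise in $\omega$. I would handle this by working with the conditional product law on the standard Borel space $\R^n$, where such a version exists. Everything else is bookkeeping with $o_{\mathcal{P}}$ and bounded convergence.
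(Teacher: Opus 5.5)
The paper does not prove this lemma. It quotes it from \citet[Lemma~S8]{lundborg2024projected} and uses it as a black box in the proof of \cref{thm:asymptotic}, so there is no in-paper argument to compare yours against. Your proof is correct and self-contained.

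\begin{itemize}
\item The reduction to $\delta\in(0,1]$ is right. The interpolation $\E(|X|^3\given\mathcal{F}_n)\le \E(X^2\given\mathcal{F}_n)^{(\delta-1)/\delta}\,\E(|X|^{2+\delta}\given\mathcal{F}_n)^{1/\delta}$, followed by H\"older over $i$, does what you need.
\item Applying the Katz--Petrov bound $\omega$-wise with $B_n^2=ns_n^2$ correctly gives $\sup_x|\pr_P(S_n\le x\given\mathcal{F}_n)-\Phi(x/s_n)|\le C_0\Lambda_n$.
\item The scale bound follows from $|\partial_a\Phi(x/a)|\le a^{-1}\sup_u|u\varphi(u)|$.
\item Capping $\Delta_n$ at $2$ makes the final averaging step harmless. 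It also absorbs the vanishing-probability events where $s_n\in\{0,\infty\}$ or $\Lambda_n=\infty$.
\end{itemize}

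The measure-theoretic point you flag is handled as you suggest; to make it explicit:
\begin{itemize}
\item Apply (i) to the countable $\pi$-system of rectangles $\prod_i(-\infty,q_i]$ with $q_i\in\mathbb{Q}$. Outside a single null set, the regular conditional law $\mu_\omega$ of $(X_{n,1},\dots,X_{n,n})$ given $\mathcal{F}_n$ then agrees with the product of its marginals on that $\pi$-system. By the $\pi$--$\lambda$ theorem it is therefore a product measure.
\item For a.e.\ $\omega$, the marginals of $\mu_\omega$ have mean zero, and their second and $(2+\delta)$-th moments equal the conditional moments in (ii)--(iv). This holds because $\int g\,d\mu_\omega=\E(g(X)\given\mathcal{F}_n)(\omega)$ almost surely for measurable $g\ge 0$, and for integrable $g$.
\end{itemize}

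A useful feature of your route is that it produces the explicit bound $\sup_x|\pr_P(S_n\le x)-\Phi(x)|\le \E_P\Delta_n$ with $\Delta_n\le 2$ and $\Delta_n=o_{\mathcal{P}}(1)$. This makes the uniformity over $\mathcal{P}$ immediate, without any subsequence argument.
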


\subsection{Proofs} \label{subsec:proofs}

\subsubsection{Proof of \cref{thm:asymptotic}} \label{prf:asymptotic}
In the following, we typically suppress dependence on the data distribution $P$ to ease the notational burden.
Throughout, we let $\mathcal{M}_n$ be the filtration generated by the auxiliary datasets $I_2$ and $I_1$ and $D$ denotes an independent observation.
Define the following:
\begin{align*}
\varepsilon &:= \ell'(f^*(X),D), & \xi &:= \hat{h}(X) - m_{\hat{h}}(X),\\
\delta_f &:= \ell'(\hat{f}(X),D) - \ell'(f^*(X),D), &\Delta_m &:= m_{\hat{h}}(X)-\hat{m}_{\hat{h}}(X), \\
\Delta_f &:= \hat{f}(X) - f^*(X), & \Delta_\phi&:= \phi(\hat{f}(X), X) - \phi(f^*(X), X).
\end{align*}
Denote by $\varepsilon_i, \xi_i, \delta_{f,i}, \Delta_{f,i}$ and $\Delta_{m,i}$ the corresponding sample values evaluated on $D_i$, where $i \in I_3$.
Further define $v^2 := \var(\varepsilon\xi\given \mathcal{M}_n)$\footnote{Note that here and throughout, the conditional expectations and variances given $\mathcal{M}_n$ should be regarded as expectations and variances with respect to the regular conditional distribution given $\mathcal{M}_n$, i.e., since $\mathcal{M}_n$ is independent of the data indexed by $I_3$, expectations and variances holding $\mathcal{M}_n$ fixed. For non-negative random variables, this coincides with the usual definition of the conditional expectation. } and note that by \cref{assump:score_variance}, $v^2 \geq \mathbb{E}[\var(\varepsilon \given X)\xi^2 \given \mathcal{M}_n]\geq c\sigma^2$.
Throughout, we shall work on the event $\Omega_0 = \{\sigma^2 > 0\}$, which by \ref{thm1:assump_a} satisfies $\sup_{P \in\mathcal{P}}\mathbb{P}_P(\Omega_0^c) = o(1)$.
Now, denoting $L_i := \ell'(\hat{f}(X_i),D_i)\{h(X_i)-\hat{m}_h(X_i)\}$, let us write
\begin{align*}
T^{(N)} := \frac{1}{\sqrt{nv^2}}\sum_{i=1}^{n}L_i \quad \text{ and } \quad T^{(D)} := \left({\frac{1}{nv^2}\sum \limits_{i=1}^{n} L_i^2 - {\left( \frac{1}{nv}\sum \limits_{i=1}^{n} L_i \right)}^2}\right)^{1/2}.
\end{align*}
Thus the test statistic may be written as $T^{(N)}/T^{(D)}$.

Firstly, we decompose
\begin{align} \label{eq:L_decomp}
L_i = \mathbb{E}(\varepsilon\xi \given\mathcal{M}_n) + \hat{L}_i,
\end{align}
where
\[ \hat{L}_i := \underbrace{\varepsilon_i\xi_i - \mathbb{E}(\varepsilon_i\xi_i \given \mathcal{M}_n)}_{U_i} + \underbrace{\delta_{f,i}\xi_i}_{A_i} + \underbrace{\Delta_{m,i}\varepsilon_i}_{B_i} + \underbrace{\delta_{f,i}\Delta_{m,i}}_{C_i}. \]
Hence, we can write
\begin{align*}
T^{(N)} = \sqrt{n}\cdot \underbrace{\frac{\mathbb{E}(\varepsilon \xi \given \mathcal{M}_n)} {\sqrt{\var(\varepsilon\xi \given \mathcal{M}_n)}}}_{\Lambda } + \hat{T}^{(N)},
\end{align*}
where
\[ \hat{T}^{(N)} := \underbrace{\frac{1}{\sqrt{nv^2}}\sum \limits_{i=1}^{n} U_{i}}_{\Rom{1}_n} + \underbrace{\frac{1}{\sqrt{nv^2}}\sum \limits_{i=1}^{n} A_{i}}_{\Rom{2}_n} + \underbrace{\frac{1}{\sqrt{nv^2}}\sum \limits_{i=1}^{n} B_{i}}_{\Rom{3}_n} + \underbrace{\frac{1}{\sqrt{nv^2}}\sum \limits_{i=1}^{n} C_{i}}_{\Rom{4}_n} . \]
We will show that $\hat{T}^{(N)}$ converges uniformly in distribution on $\mathcal{P}$ to a $N(0,1)$ distribution.
By \cref{lem:unif_Slutsky}, it is enough to show that $\Rom{1}_n$ converges uniformly on $\mathcal{P}$ in distribution to a $N(0,1)$ variable, and that $\Rom{2}_n + \Rom{3}_n + \Rom{4}_n = o_{\mathcal{P}}(1)$.
Note that the $U_i/v$ are standardised and conditionally independent given the filtration $\mathcal{M}_n$.
Moreover, by \ref{thm1:assump_b}, we have that \[ \mathbb{E}(|U_i/v|^{2+\delta} \given \mathcal{M}_n) \leq \frac{2^{\delta +1}}{c^{1+\delta/2}}\cdot\frac{1}{\sigma^{2+\delta}}\mathbb{E}_P({}|\varepsilon \xi|^{2+\delta} \given \mathcal{M}_n) = o_{\mathcal{P}}(n^{\delta/2}),\, \] so that by \cref{lem:lems8} $\Rom{1}_n$ converges uniformly in distribution to a $N(0,1)$ distribution on $\mathcal{P}$.
To see that $\Rom{2}_n = o_{\mathcal{P}}(1)$, notice that by \cref{lem:lems6}, it is enough to show $\mathbb{E}(\Rom{2}^2_n \given \mathcal{M}_n)= o_{\mathcal{P}}(1)$.
We have
\begin{equation} \label{eq:a_n}
\mathbb{E}(\Rom{2}^2_n \given \mathcal{M}_n) = {\mathbb{E}(\Rom{2}_n \given \mathcal{M}_n)}^2 + \text{Var}(\Rom{2}_n \given \mathcal{M}_n).
\end{equation}
Considering the first term in \eqref{eq:a_n}, we have
\begin{equation} \label{eq:II_arg_1}
\begin{aligned} \mathbb{E}(\Rom{2}_n \given \mathcal{M}_n) &= \frac{\sqrt{n}}{v}\mathbb{E}(\delta_f \xi \given \mathcal{M}_n) \\
&= \frac{\sqrt{n}}{v}\mathbb{E}(\underbrace{\mathbb{E}(\delta_f \given X, \mathcal{M}_n)}_{\Delta_{\phi}}\xi \given \mathcal{M}_n) \\
&= \frac{\sqrt{n}}{v} \mathbb{E}(\Delta_{\phi}\xi \given\mathcal{M}_n) \\
&= \frac{\sqrt{n}}{v}\mathbb{E}(w(X)\Delta_f\xi \given \mathcal{M}_n) + \frac{\sqrt{n}}{v}\mathbb{E}(r_n\xi \given\mathcal{M}_n), \end{aligned}
\end{equation}
where $r_n := \Delta_\phi - w(X)\Delta_f$.

By \cref{lem:wls-orth} in \cref{sec:lemmas}, we have the orthogonality condition
\begin{equation*} \mathbb{E}_P\left[\left\{g(X)-m_{g}(X)\right\}w_P(X)f(X)\right] = 0, \quad \forall f \in \mathcal{F}_P,
\end{equation*}
i.e.\ $\mathbb{E}(w(X) \Delta_f\xi \given \mathcal{M}_n) = 0$,
so that $\mathbb{E}(\Rom{2}_n \given \mathcal{M}_n) = \frac{\sqrt{n}}{v} \mathbb{E}(r_n \xi \given \mathcal{M}_n) $.
By \cref{lem:rem_term}, we have that $|r_n| \leq L'\Delta_f^2$, so that
\begin{align*}
\left| \frac{\sqrt{n}}{v}\mathbb{E}(r_n\xi \given \mathcal{M}_n)\right| &\leq \frac{\sqrt{n}}{v}\mathbb{E}(|r_n\xi| \given \mathcal{M}_n)\\
&\leq \frac{\sqrt{n}}{c^{1/2}\sigma}\mathbb{E}(|r_n\xi| \given \mathcal{M}_n)\\
&\leq \frac{L'\sqrt{n} }{c^{1/2}\sigma}\mathbb{E}(|\xi|\Delta_f^2\given \mathcal{M}_n)\\
&\leq \frac{L'}{c^{1/2}}\left\{ n \mathbb{E}(\Delta_f^2\given \mathcal{M}_n)\cdot \mathbb{E}\left(\frac{1}{\sigma^2}\Delta_f^2 \xi^2\given\mathcal{M}_n\right)\right\}^{1/2} \\
&= \frac{L'}{c^{1/2}}\sqrt{n\mathcal{E}_1\mathcal{E}_3} = o_{\mathcal{P}}(1),
\end{align*}
using the Cauchy--Schwarz inequality for the final inequality.
Note that if $\phi_P(\cdot,x)$ is constant for each $x \in \mathcal{X}$, then we can take $L' = 0$ and have that $r_n = 0$.
Thus, the rate condition on $\mathcal{E}_1 \mathcal{E}_3$ can be dropped.

Turning to the second term in \eqref{eq:a_n}, by the conditional independence of $A_i$ and $A_j$ for $i \neq j $ given $\mathcal{M}_n$, the conditional variance term reduces to
\begin{align*}
\var(\Rom{2}_n \given \mathcal{M}_n) &= \frac{1}{v^2}\var(\delta_f \xi \given \mathcal{M}_n)\\
&\leq \frac{1}{v^2}\mathbb{E}(\delta_f^2\xi^2\given\mathcal{M}_n) \\
&\leq \frac{1}{c\sigma^2}\mathbb{E}(\delta_f^2\xi^2\given\mathcal{M}_n) \\
&= \frac{1}{c\sigma^2}\mathbb{E}(\underbrace{\mathbb{E}(\delta_f^2 \given X,\mathcal{M}_n)}_{\leq L|\Delta_f|^{1+\gamma}}\xi^2\given\mathcal{M}_n) \\
&\leq \frac{L}{c\sigma^2}\mathbb{E}(|\Delta_f\xi|^{1+\gamma} |\xi|^{1-\gamma}\given\mathcal{M}_n) \ (\text{By Assumption~\ref{assump4_risk-lipschitz}} ) \\
&\leq \frac{L}{c}\left(\frac{1}{\sigma^2}\mathbb{E}(\Delta_f^2\xi^2\given\mathcal{M}_n) \right)^{\frac{1+\gamma}{2}} \cdot \left(\frac{1}{\sigma^2}\mathbb{E}(\xi^2\given\mathcal{M}_n) \right)^{\frac{1-\gamma}{2}} \ (\text{H\"older's inequality}) \\
&= \frac{L}{c} \mathcal{E}_3^{\frac{1 + \gamma}{2}} = o_{\mathcal{P}}(1),
\end{align*}
so that by \cref{lem:lems6} we have that $\Rom{2}_n = o_{\mathcal{P}}(1)$.

Considering now the term $\Rom{3}_n$, observe that $\mathbb{E}(\Rom{3}_n \given\mathcal{M}_n) = \frac{\sqrt{n}}{v}\mathbb{E}(\Delta_m \varepsilon \given \mathcal{M}_n)$.
As $f^*$ minimises the risk $R_P$ over $\mathcal{F}$ and as $m_h - \hat{m}_h \in \mathcal{F}$, by \cref{lem:risklocmin} we have that $\mathbb{E}(\Delta_m \varepsilon \given \mathcal{M}_n) = 0$.
Thus,
\begin{align*}
\mathbb{E}(\Rom{3}^2_n \given \mathcal{M}_n) &\leq \mathbb{E}\left(\frac{1}{c\sigma^2}\varepsilon^2\Delta_m^2 \given \mathcal{M}_n\right)\\
&= \mathbb{E}\left(\frac{1}{c\sigma^2}\mathbb{E}(\varepsilon^2 \given X)\Delta_m^2 \given \mathcal{M}_n\right) \\
&\leq \frac{C\mathcal{E}_2}{c} = o_{\mathcal{P}}(1).
\end{align*}
Thus by \cref{lem:lems6}, $\Rom{3}_n = o_{\mathcal{P}}(1)$.
Finally, note that
\begin{align*}
\mathbb{E}(|\Rom{4}_n| \given \mathcal{M}_n) = \frac{\sqrt{n}}{v}\mathbb{E}(|\delta_f\Delta_m| \given \mathcal{M}_n) &\leq \frac{\sqrt{n}}{c^{1/2}\sigma}\mathbb{E}(|\delta_f\Delta_m| \given \mathcal{M}_n) \\
& = \frac{\sqrt{n}}{c^{1/2}\sigma}\mathbb{E}(\underbrace{\mathbb{E}(|\delta_f| \given X, \mathcal{M}_n)}_{\leq K|\Delta_f|}|\Delta_m|\given \mathcal{M}_n) \\
&\leq \frac{K\sqrt{n}}{c^{1/2}\sigma} \mathbb{E}(|\Delta_f \Delta_m| \given\mathcal{M}_n) \ (\text{Using \cref{lem:loss_reg}}) \\
& \leq \frac{K}{c^{1/2}} \sqrt{n\mathcal{E}_1 \mathcal{E}_2} = o_{\mathcal{P}}(1)
\end{align*}
where the Cauchy--Schwarz inequality was used in the final step.
It follows from \cref{lem:lems6} that $\Rom{4}_n = o_{\mathcal{P}}(1)$.

We now turn our attention to $T^{(D)}$ and show that $T^{(D)} = 1 + o_{\mathcal{P}}(1)$.
Observe first that since the $L_i$ in the definition of $T^{(D)}$ can be replaced by versions shifted by a constant and leave $T^{(D)}$ unchanged, we have
\begin{equation}
T^{(D)} = \bigg\{ \underbrace{\frac{1}{nv^2}\sum_{i=1}^{n}\hat{L}_{i}^2}_{p_n} - {\bigg(\underbrace{\frac{1}{nv}\sum_{i=1}^{n}\hat{L}_{i}}_{q_n}\bigg)}^{2} \bigg\}^{1/2}.
\end{equation}
Thus noting that $q_n = \frac{1}{\sqrt{n}}\hat{T}^{(N)} = o_{\mathcal{P}}(1)$, it is enough to show that $|p_n - 1| = o_{\mathcal{P}}(1)$.
Now, using \eqref{eq:L_decomp} we can write
\begin{equation}\label{eq:thm1}
\begin{aligned} p_n = &\underbrace{\frac{1}{nv^2}\sum \limits_{i=1}^{n} U_{i}^{2}}_{\tilde{\Rom{1}}_n} + \underbrace{\frac{1}{nv^2}\sum \limits_{i=1}^{n} A_{i}^{2}}_{\tilde{\Rom{2}}_n} + \underbrace{\frac{1}{nv^2}\sum \limits_{i=1}^{n} B_{i}^{2}}_{\tilde{\Rom{3}}_n} + \underbrace{\frac{1}{nv^2}\sum \limits_{i=1}^{n} C_{i}^{2}}_{\tilde{\Rom{4}_n}} \\
&+ \frac{2}{nv^2}\sum \limits_{i=1}^{n} U_{i}A_i + \frac{2}{nv^2}\sum \limits_{i=1}^{n} U_{i}B_i + \frac{2}{nv^2}\sum \limits_{i=1}^{n} U_{i}C_i \\
&+ \frac{2}{nv^2}\sum \limits_{i=1}^{n} A_{i}B_i + \frac{2}{nv^2}\sum \limits_{i=1}^{n} A_{i}C_i \\
&+ \frac{2}{nv^2}\sum \limits_{i=1}^{n} B_{i}C_i . \end{aligned}
\end{equation}

We have that $|\tilde{\Rom{1}}_n - 1| = o_{\mathcal{P}}(1)$ by the uniform weak law of large numbers \citep[Lemma~19]{Shah_Peters_2020}.
It is then enough to show that the other terms in \eqref{eq:thm1} are $o_{\mathcal{P}}(1)$.
Recalling that $|ab| \leq \frac{1}{2}(a^2 + b^2)$, it is in fact enough to show that the remaining three terms in the first row are $o_{\mathcal{P}}(1)$.
To this end, note that
\begin{align*}
\mathbb{E}\left(\frac{1}{nv^2}\sum_{i = 1}^{n}A_i^2 \given \mathcal{M}_n \right) &=\frac{1}{v^2}\mathbb{E}(A_i^2 \given \mathcal{M}_n) \leq\frac{1}{c\sigma^2} \mathbb{E}(\xi^2\delta_f^2 \given\mathcal{M}_n) = o_{\mathcal{P}}(1), \\
\mathbb{E}\left(\frac{1}{nv^2}\sum_{i = 1}^{n}B_i^2 \given \mathcal{M}_n \right) &=\frac{1}{v^2}\mathbb{E}(B_i^2 \given \mathcal{M}_n) \leq \frac{1}{c\sigma^2} \mathbb{E}(\varepsilon^2\Delta_m^2 \given\mathcal{M}_n) = o_{\mathcal{P}}(1).
\end{align*}
Thus by \cref{lem:lems6}, the terms $\tilde{\Rom{2}}_n$ and $\tilde{\Rom{3}}_n$ are $o_{\mathcal{P}}(1)$.
Also,
\begin{align*}
\frac{1}{nv^2}\sum_{i = 1}^{n}C_i^2 \leq \left(\frac{1}{\sqrt{n}v}\sum_{i = 1}^{n}|C_i|\right)^2.
\end{align*}
The term on the right in the above display is $o_{\mathcal{P}}(1)$ as from our earlier analysis of the term $\Rom{4}_n$ we have
\begin{align*}
\mathbb{E}\left(\frac{1}{\sqrt{n}v}\sum_{i = 1}^{n}|C_i|\given \mathcal{M}_n\right ) \leq\frac{\sqrt{n}}{c^{1/2}\sigma}\mathbb{E}(|\delta_f\Delta_m|\given \mathcal{M}_n) = o_{\mathcal{P}}(1).
\end{align*}
It follows that $\tilde{\Rom{4}}_n $ is also $o_{\mathcal{P}}(1)$, and hence that $T^{(D)} = 1 + o_{\mathcal{P}}(1)$.
We can therefore conclude that the test statistic
\[ T_n = \frac{\sqrt{n}\Lambda + \hat{T}^{(N)}}{T^{(D)}} = \sqrt{n}\Lambda \cdot (1+r_n) +Z_n, \]
where $Z_n := \hat{T}^{(N)}/T^{(D)}$ converges uniformly in distribution to $N(0,1)$ on $\mathcal{P}$ by \cref{lem:unif_Slutsky} and $r_n:= \frac{1}{T^{(D)}}-1 = o_{\mathcal{P}}(1).$ \qed

\subsubsection{Proof of \cref{cor:power}} \label{prf:power}
\begin{proof}
By \cref{thm:asymptotic}, for every $P\in\mathcal{P}(\epsilon_n)$,
\[
T_n=Z_n+\sqrt{n}\,(1+r_n)\Lambda_{n,P},
\]
where $r_n=o_{\mathcal{P}(\epsilon_n)}(1)$ and
\[
\sup_{P\in\mathcal{P}(\epsilon_n)}\sup_{t\in\mathbb{R}}\left|\mathbb{P}_P(Z_n\leq t)-\Phi(t)\right|\to0.
\]

Define
\[
E_{n,P}:=\left\{\operatorname{Corr}_P(s_P(X),\xi_P\given\hat{h})>\rho_n\right\},
\]
and note that since $P\in\mathcal{P}(\epsilon_n)$ implies $\tau_P\geq\epsilon_n$, on $E_{n,P}$, \cref{lem:snr_bound} yields
\[
\sqrt{n}\,\Lambda_{n,P}\geq A\sqrt{n\epsilon_n}\,\rho_n.
\]
Set $a_n:=A\sqrt{n\epsilon_n}\,\rho_n$, and note that since $n\rho_n^2\epsilon_n\to\infty$, we have $a_n\to\infty$. Further, since $r_n=o_{\mathcal{P}}(1)$, $
\sup_{P\in\mathcal{P}(\epsilon_n)}\mathbb{P}_P(r_n<-1/2)\to 0.$
Hence, on the event $G_{n,P}:=E_{n,P}\cap\{r_n\geq-1/2\}$,
we have that 
\[
\sqrt{n}\,(1+r_n)\Lambda_{n,P}\geq\frac{a_n}{2}.
\]
By \eqref{eq:quality_of_hunting}, $
\inf_{P\in\mathcal{P}(\epsilon_n)}\mathbb{P}_P(E_{n,P})\to1,
$
and therefore $\inf_{P\in\mathcal{P}(\epsilon_n)}\mathbb{P}_P(G_{n,P})\to1.$ For any $P\in\mathcal{P}(\epsilon_n)$,
\begin{align*}
\mathbb{P}_P(T_n\leq z_{1-\alpha})
&\leq\mathbb{P}_P\left(Z_n+\frac{a_n}{2}\leq z_{1-\alpha}\right)+\mathbb{P}_P(G_{n,P}^c)\\
&=\mathbb{P}_P\left(Z_n\leq z_{1-\alpha}-\frac{a_n}{2}\right)+\mathbb{P}_P(G_{n,P}^c).
\end{align*}
Hence,
\begin{align*}
\sup_{P\in\mathcal{P}(\epsilon_n)}\mathbb{P}_P(T_n\leq z_{1-\alpha})
&\leq\Phi\left(z_{1-\alpha}-\frac{a_n}{2}\right)\\
&\quad+\sup_{P\in\mathcal{P}(\epsilon_n)}\sup_{t\in\mathbb{R}}\left|\mathbb{P}_P(Z_n\leq t)-\Phi(t)\right|\\
&\quad+\sup_{P\in\mathcal{P}(\epsilon_n)}\mathbb{P}_P(G_{n,P}^c).
\end{align*}
The first term converges to zero since $a_n\to\infty$, the second by \cref{thm:asymptotic}, and the third by the preceding argument. Therefore,
\[
\sup_{P\in\mathcal{P}(\epsilon_n)}\mathbb{P}_P(T_n\leq z_{1-\alpha})\to0,
\]
or equivalently,
\[
\inf_{P\in\mathcal{P}(\epsilon_n)}\mathbb{P}_P(T_n>z_{1-\alpha})\to1.
\]
\end{proof}
\section{Implementation details and additional numerical results} \label{sec:simu-extra}
\subsection{Discussion of \texttt{anova.gam} from the \texttt{mgcv} package \citep{wood2017}} \label{subsec:mgcv}
The \texttt{anova.gam} function from the \texttt{mgcv} package in \texttt{R} provides a variety of hypothesis tests pertaining to fitted additive models.
In particular, it can be used to compare nested models using a generalised likelihood ratio test (GLRT).
Suppose $\mathcal{M}_0 \subset \mathcal{M}_1$ are the nested models under consideration, and $\hat{\beta}_0$ and $\hat{\beta}_1$ are the estimates of the basis coefficients under each model.
Then, the test statistic is calculated as
\[ \Lambda = 2\{\ell(\hat{\beta}_1) - \ell(\hat{\beta}_0)\}, \]
where $\ell$ is the log-likelihood function used when fitting.
Motivated by the classical GLRT for generalised linear models \citep{glm}, the test statistic $\Lambda$ is compared to a $\chi^2_{\text{EDF}_1-\text{EDF}_0}$ distribution, where $\text{EDF}_j$ is the effective degrees of freedom of the model $\mathcal{M}_j$.
In principle, one could use the above to assess the goodness of fit of an additive model by comparing the additive model to a bigger model with interaction terms.
For example, when $p = 2$, it is computationally feasible to compare the additive model $y \sim s(x_1) +s(x_2)$ to the unrestricted model by incorporating the interaction term $\texttt{ti}(x_1,x_2)$ into the purely additive model.

The \texttt{mgcv} documentation notes that the $p$-values resulting from this procedure are only approximate, and are usually too low in simulations.
Moreover, it is possible to have settings where the effective degrees of freedom of the larger model $\mathcal{M}_1$ is significantly \textit{smaller} than that of the smaller model $\mathcal{M}_0$.
The chi-squared approximation above then becomes nonsensical, but if the larger model $\mathcal{M}_1$ also happens to fit the data better (i.e. has a larger likelihood), then it is reasonable to reject the null model $\mathcal{M}_0$.
However, it is possible to have settings where the null model $\mathcal{M}_0$ is well-specified, yet the fitted $\mathcal{M}_1$ has fewer effective degrees of freedom \textit{and} fits the data better, leading to an incorrect rejection of the null model.
Note that the DST is still calibrated in this setting.

For instance, consider the following data-generating mechanism; the covariates are given by $X_1 \sim \operatorname{Uniform}[1,2]$, $X_2 = \frac{1}{X_1}(f_a(X_1) + 0.1 \cdot \eta)$ with the response $Y = f_a(X_1) + \varepsilon$, where $\eta$ and $\varepsilon$ are independent $N(0,1)$ variables, and $f_a(x) = \sin(2\pi a(x-1))$.
As $a \in \{1,2,3\}$ increases, $f_a$ becomes increasingly more wiggly.
Since $X_1 X_2 \approx f_a(X_1)$, it can then take fewer degrees of freedom to explain more variation in $Y$ by using the larger model $\mathcal{M}_1$.
Consequently, one expects \texttt{anova.gam} to fail to control Type I error.
This is precisely what we observe empirically; see \cref{tab:anova.gam_calibration}.

\begin{table}[htbp]
\centering
\caption{Empirical rejection rates (level $\alpha=0.05$) under the null additive model for increasing smoothness parameter $a$.}
\label{tab:anova.gam_calibration}
\begin{tabular}{ccc}
\toprule
$a$ & GLRT (anova.gam) & Debiased Score Test (DST) \\
\midrule
1 & 0.106 & 0.066 \\
2 & 0.322 & 0.042 \\
3 & 0.540 & 0.042 \\
\bottomrule
\end{tabular}
\end{table}

\subsection{Adaptation of the test from \citet{williamson2021general}} \label{sec:williamson}
\citet{williamson2021general} formulate variable importance in terms of a model-agnostic, population-level parameter by defining the population-level importance of a variable $X_s$ relative to the full covariate vector $X$ to be
\[ \psi_{0,s}:= V(f_0,P_0)-V(f_{0,s},P_0), \]
where $f \mapsto V(f,P)$ is a measure of the fit of a candidate prediction function $f$, $f_0 = \argmax_{f \in \mathcal{F}}V(f,P_0)$ and $f_s = \argmax_{f \in \mathcal{F}_s}V(f,P_0)$, where $\mathcal{F}_s$ is those functions in $\mathcal{F}$ which do not depend on $X_s$, while $\mathcal{F}$ is a rich class of functions of $X$.
They then show that under suitable conditions, the value measure $v_0:= V(f_0,P_0)$ can be estimated using the plug-in estimate $v_n = V(f_n,P_n)$, where $f_n$ is an estimate of $f_0$, and that $v_n$ is non-parametric efficient.
Similarly, one can estimate $v_{n,s}$ and hence $\psi_{0,s}$ by $\psi_{n,s} = v_n-v_{n,s}$, where the variance of $\psi_{n,s}$ is estimated using $\tau^2_{n,s} = \frac{1}{n}\sum_{i = 1}^{n}\dot{V}(f_n,P_n;\delta_{Z_i}-P_n)-\dot{V}(f_{n,s},P_0;\delta_{Z_i}-P_n)$.
Then, when $\psi_{0,s} > 0$, they show that $T_n = \frac{\sqrt{n}(\psi_{0,s}-\psi_{n,s})}{\tau_{n,s}} \to N(0,1)$, which in particular allows one to do inference for $\psi_{0,s}$.
In practice, the authors recommend a cross-fitted version of the test statistics, where the estimates $(f_n,f_{n,s})$ and the test statistic $T_n$ are formed on complementary folds.

However note that the influence function of $\psi_{0,s}$ is given by $\varphi_{0,s} : z \mapsto \dot{V}(f,P_0;\delta_z-P_0)-\dot{V}(f_{s},P_0;\delta_z-P_0)$, which when $\psi_{0,s} = 0$ is identically zero.
In this case, $\psi_{n,s}$ generally does not tend to a non-degenerate law, and in particular we cannot test the zero-importance null hypothesis using the test statistic above.
The authors recommend sample splitting to mitigate this, so that one calculates $v_{n,0}$ and $v_{n,s}$ on independent splits, and then takes the difference.
The cross-fitted version is analogous, and proceeds by using independent folds for estimating $v_{0}$ and $v_{0,s}$.

We modify this procedure to adapt to the goodness-of-fit setting.
We define $V(f,P):= -R_P(f)$, where the risk function $R_P$ is as defined in \cref{sec:dst}, and the population-level goodness-of-fit statistic for a class of functions $\mathcal{G}$ to be $v_{\mathcal{G}} = V(g^*_P,P)$, where $g^*_P$ is the minimiser of the risk $R_P$ over $\mathcal{G}$.
Then, the goodness of fit of $\mathcal{F}$ relative to $\mathcal{G}$ may be measured by
\[ \psi = v_{\mathcal{G}}-v_{\mathcal{F}}. \]
Again, one can estimate $v_{\mathcal{F}}$ and $v_\mathcal{G}$ using plug-in estimates, exactly as above.
We are interested in testing the hypothesis $\psi = 0$, and so we estimate them on independent splits of the sample.
In particular, we use \cref{alg:adapted_williamson_test} to carry out the test.

\begin{algorithm}[htbp]
\caption{Adaptation of the test from \citet{williamson2021general}.}
\label{alg:adapted_williamson_test}
\begin{algorithmic}[1]
\State Generate $B_n \in \{1,\ldots,2K\}^n$ by sampling uniformly from $\{1,2,\ldots,2K\}$ with replacement, and for $j = 1,\ldots,2K$, denote by $D_j$ the set of observations with index in $S_j := \{i : B_{n,i} = j\}$ and $n_j := |D_j|$;
\For{$k = 1,\ldots,2K$}
    \State using only data in $\cup_{j \neq k} D_j$, construct estimators $f_{k,n}$ of $f_0$ and $g_{k,n}$ of $g_{0}$;
    \State using only data in $D_k$, construct estimator $P_{n,k}$ of $P_0$;
    \State if $k$ is odd, compute $\eta^2_{\mathcal{G},k,n} := \frac{1}{n_k}\sum_{i \in S_k} \dot{V}(g_{k,n}, P_{k,n};\, \delta_{Z_i} - P_{k,n})^2$ and $v_{\mathcal{G},k,n} := V(g_{k,n}, P_{k,n})$;
    \State if $k$ is even, compute $\eta^{2}_{\mathcal{F},k,n} := \frac{1}{n_k}\sum_{i \in S_k} \dot{V}(f_{k,n}, P_{k,n};\, \delta_{Z_i} - P_{k,n})^2$ and $v_{\mathcal{F}, k,n} := V(f_{k,n}, P_{k,n})$;
\EndFor
\State Compute $v^*_{\mathcal{G},n} := \frac{1}{K}\sum_{k=1}^{K} v_{\mathcal{G},2k-1,n}$, $v^*_{\mathcal{F},n} := \frac{1}{K}\sum_{k=1}^{K} v_{\mathcal{F},2k,n}$ and estimator $\psi^*_{n} := v^*_{\G,n} - v^*_{\F,n}$ of $\psi$;
\State Compute $\eta^2_{\G,n} := \frac{1}{K}\sum_{k=1}^{K} \eta^2_{\G,2k-1,n}$, $\eta^2_{\F,n} := \frac{1}{K}\sum_{k=1}^{K} \eta^2_{\F,2k,n}$ and estimator $\omega_{n} := \eta^2_{\G,n}/(n-n_{\F}) + \eta^2_{\F,n}/n_\F$ of the variance of $\psi^*_n$.
\State Compute $p$-value $p_n =1-\Phi(T_n)$, where $T_n := \omega^{-1/2}_{n}\psi^*_{n}.$
\end{algorithmic}
\end{algorithm}
\subsection{GRF calibration test} \label{subsec:grf}
The \texttt{grf} package \citep{grf} natively supports a way of testing the heterogeneity of the treatment effect.
The method \texttt{test\_calibration()} tests the null hypothesis which posits that the coefficient
\begin{equation} \label{eqn:cond_cov}
\beta(Z) := \frac{\operatorname{Cov}(Y,T \given Z)}{\var(T \given Z)}
\end{equation}
is constant in $Z$.
Note that one can write
\[ Y -\mathbb{E}(Y \given Z) = \underbrace{\{\mathbb{E}(\beta(Z))\}\cdot \{T-\mathbb{E}(T\given Z)\}}_{\text{mean prediction}}+\underbrace{\{\beta(Z)-\mathbb{E}(\beta(Z))\} \cdot \{T-\mathbb{E}(T\given Z)\}}_{\text{differential prediction}} + \varepsilon, \]
where $\mathbb{E}(\varepsilon \given Z) = 0$.
Thus given (out-of-bag) estimates $\hat{\beta}$, $\hat{Y}$ and $\hat{T}$ of $\beta$, $\mathbb{E}(Y \given Z)$ and $\mathbb{E}(T \given Z)$ respectively, one can attempt to test the null above by fitting the linear model
\[ Y_i - \hat{Y}_i \sim \underbrace{\bar{\beta} \cdot (T_i - \hat{T}_i)}_{\text{mean prediction}} + \underbrace{\hat{\beta}(Z_i) \cdot (T_i - \hat{T}_i)}_{\text{differential prediction}}, \]
where $\bar{\beta}$ is the sample mean of $\hat{\beta}$, and then testing the hypothesis whether the coefficient of the differential prediction term is zero.
This is precisely the approach of the \texttt{test\_calibration} function, where all estimates are obtained by fitting a \texttt{causal\_forest} to the data.

Note that the null in \eqref{eqn:cond_cov} is equivalent to the one considered in \cref{sec:hte} when the treatment $T$ is binary.
However, this is not the case when the treatment is continuous.
For instance, suppose that $Y = T^3 + \varepsilon_1$, with $T = \varepsilon_2 \cdot Z$, where $Z, \varepsilon_1$ and $\varepsilon_2$ are all independent and $N(0,1)$.
Then, there is no treatment effect heterogeneity in the sense that $\mathbb{E}(Y \given Z, T) = T^3$ is independent of the control variable $Z$.
Yet, it may be checked that $\beta(Z) = 3Z^2$.

Similarly, there are alternatives to the null we consider for which $\beta(Z)$ is constant.
For instance, consider $Y = T^2Z +\varepsilon$, where $\varepsilon,Z$ and $T$ are all independent $N(0,1)$ variables.
Then, $\mathbb{E}(Y \given T = t,Z) - \mathbb{E}(Y \given T = 0,Z) = t^2Z$ depends on $Z$, yet $\beta(Z) = 0$.

This can also be seen empirically.
We generate $n = 2000$ samples from the null model above, and calculate the rejections rates of \texttt{test\_calibration} and the DST.
For the DST, we use \texttt{mgcv} to fit the null model with $k = 20$.
The rejection rates are presented in \cref{tab:cont-t}.

\begin{table}[htbp] 
\centering
\caption{Empirical size at the 5\% nominal level under the null hypothesis.}
\label{tab:cont-t}
\begin{tabular}{lcc}
\toprule
 & DST & grf \\
\midrule
Rejection Rate & 0.056 & 1.000 \\
\bottomrule
\end{tabular}
\end{table}

Likewise, we consider $n = 2000$ samples from a series of alternative models, with $Y = \frac{1}{4}\{(1-\alpha) T^2Z + \alpha\cdot TZ\} + \varepsilon$, where $\varepsilon,Z$ and $T$ are all independent $N(0,1)$ variables.
Note that here $\beta(Z) = \frac{\alpha Z}{4}$, and we consider $\alpha \in \{0,0.2,0.4,0.6,0.8,1\}$.
The results are noted in \cref{tab:more-alpha}.

\begin{table}[htbp]
\centering
\caption{Rejection rates for DST and GRF across values of $\alpha$.}
\label{tab:more-alpha}
\begin{tabular}{l cccccc}
\hline
$\alpha$ & 0.0 & 0.2 & 0.4 & 0.6 & 0.8 & 1.0 \\
\hline
DST & 1.000 & 0.998 & 0.996 & 0.982 & 0.996 & 1.000 \\
GRF & 0.110 & 0.264 & 0.692 & 0.988 & 1.000 & 1.000 \\
\hline
\end{tabular}
\end{table}

\subsection{Additional results on the generalised additive model} \label{sec:gam-simu-extra}
\cref{fig:gam-power-full} presents the power curves under $n \in \{1000,2000,5000\}$ for the simulation in \cref{sec:gam-simu}.

\begin{figure}[htbp]
    \centering
    \includegraphics[width= 0.9\textwidth]{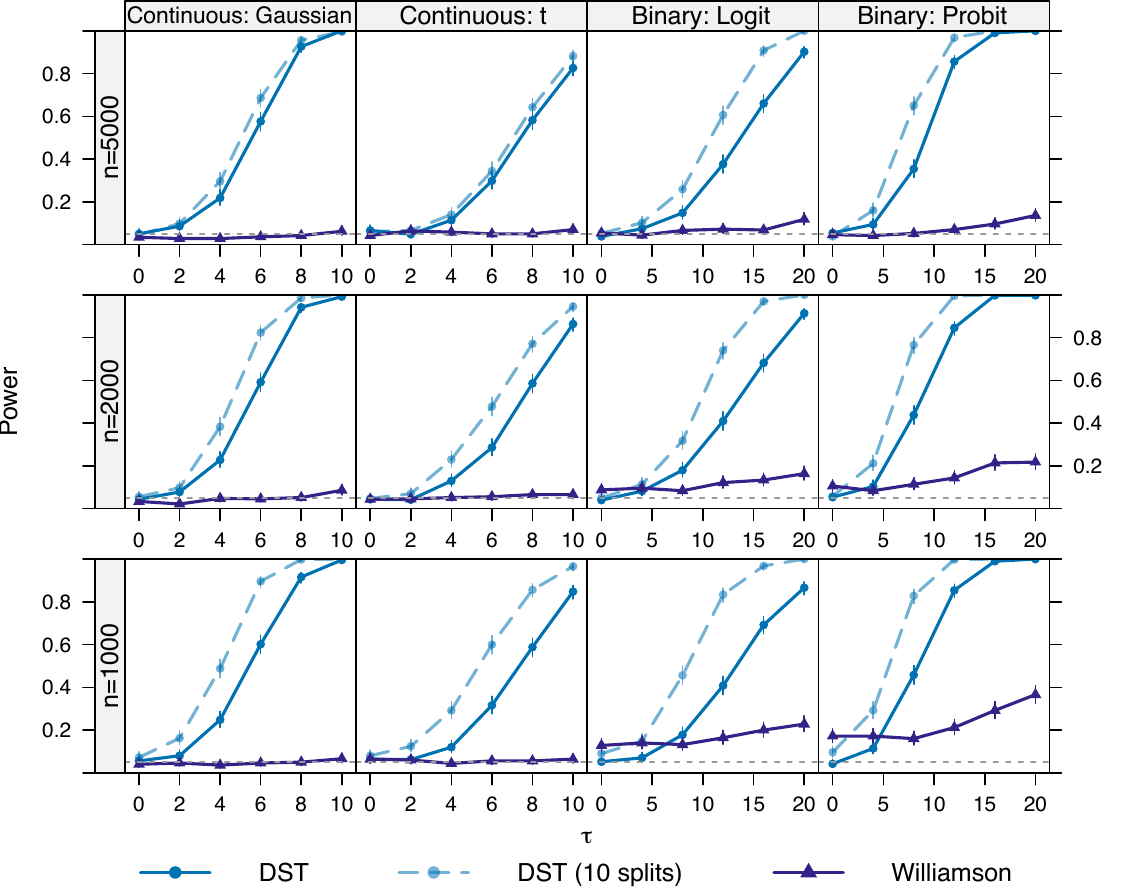}
    \caption{Power (with 95\% CI, dashed horizontal: $\alpha=0.05$) for testing the conditional mean specification of a GAM with $p=10$ covariates. Dashed curves are from combining 10 splits of DST using the rank-transformed subsampling. }
    \label{fig:gam-power-full}
\end{figure}

\subsection{Comparison with unmodified TE-VIM test} \label{sec:tevim}
In the same setting as \cref{sec:hte-numerical}, \cref{fig:hte-tevim-unmod} plots the power curves, where the TE-VIM here is unmodified and thus has no Type I error guarantee under the null. 
Consequently, the rank-transformed subsampling, which requires a known asymptotic null distribution, is not applicable to this version of TE-VIM.

\begin{figure}[htbp]
\centering
\includegraphics[width=.7\textwidth]{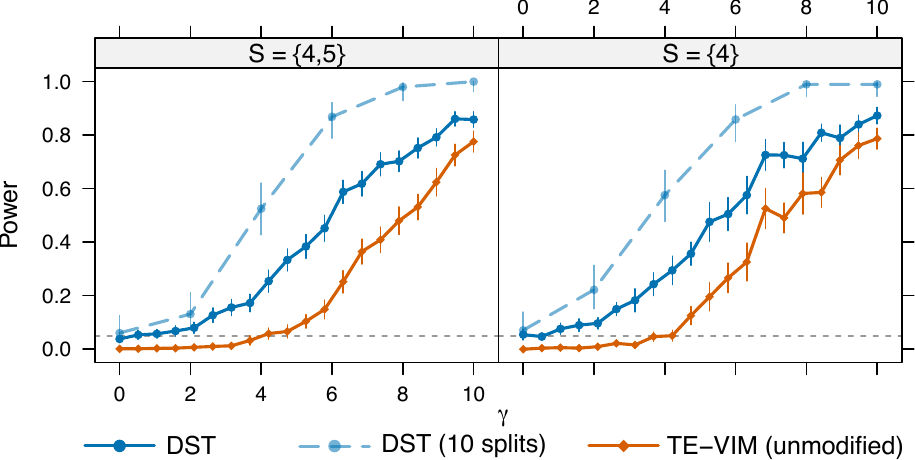}
\caption{Power (with 95\% CI) for detecting treatment effect heterogeneity in covariates $S$ (horizontal dashed: $\alpha = 0.05$). The TE-VIM here is unmodified, which in this particular case is conservative under the null. Dashed curves are from combining 10 splits of DST using the rank-transformed subsampling.}
\label{fig:hte-tevim-unmod}
\end{figure}

\subsection{Generalised partially linear model} \label{sec:plm-simu}
We use our method in \cref{sec:cond-mean} to test the specification of a generalised partially linear model. 
We draw covariates $X \sim N_{20}(0,\Sigma)$ with $\Sigma_{ij} = 2^{-|i-j|}$. 
Conditioned on $X$, we generate binary response $Y$ with the probit link:
\begin{align*}
& \mathbb{P}(Y=1\given X) = \Phi\left(\beta^TU + f(V) + (\tau / \sqrt{n}) \cdot g(U,V) \right), \quad f(V) = \sin(V_{2})\cos(V_1), \\
& \beta = (1/2,-1/2, \ldots, 1/2,-1/2) / 3\sqrt{2},
\end{align*}
where $U= (X_1,X_2, \ldots,X_{18})$ and $V = (X_{19},X_{20})$.
The effect size $\tau \geq 0$ controls the departure from the null, so a partially linear model with linear component $U$ and nonlinear component $V$ is well-specified if and only if $\tau = 0$. 
We consider two forms of the alternative: (i) $g(U,V) = U_1V_1$ and (ii) $g(U,V) = \cos(2U_1)$.
The null model is fit using the package \texttt{mgcv}.
\cref{fig:plm-power} plots the power curves under $n=2000$. 

\begin{figure}[htbp]
    \centering
    \includegraphics[width=0.75\textwidth]{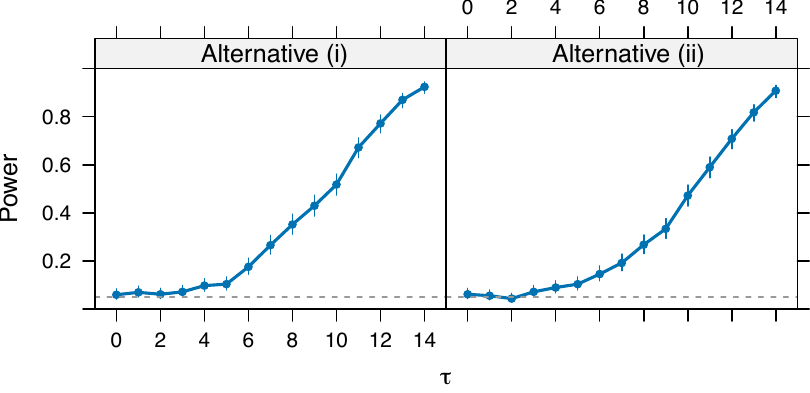}
    \caption{Power (with 95\% CI) of DST for testing the specification of a partially linear model under sample size $n=2000$ (dashed horizontal line: $\alpha=0.05$).}
    \label{fig:plm-power}
\end{figure}

\end{document}